\newtheorem{Theo}{Theorem}
\newtheorem{Corol}{Corollary}
\newtheorem{Def}{Definition}
\newtheorem{Le}{Lemma}
\newtheorem{Rem}{Remark}
\let\pdfoutput=\undefined\fi
\chardef\@x10\chardef\@xv60
\def\tcitime{
\def\@time{%
  \@minute\time\@hour\@minute\divide\@hour\@xv
  \ifnum\@hour<\@x 0\fi\the\@hour:%
  \multiply\@hour\@xv\advance\@minute-\@hour
  \ifnum\@minute<\@x 0\fi\the\@minute
  }}%
\def\x@hyperref#1#2#3{%
   \catcode`\~ = 12
   \catcode`\$ = 12
   \catcode`\_ = 12
   \catcode`\# = 12
   \catcode`\& = 12
   \catcode`\% = 12
   \y@hyperref{#1}{#2}{#3}%
}
\def\y@hyperref#1#2#3#4{%
   #2\ref{#4}#3
   \catcode`\~ = 13
   \catcode`\$ = 3
   \catcode`\_ = 8
   \catcode`\# = 6
   \catcode`\& = 4
   \catcode`\% = 14
}
\def\QCTOpt[#1]#2{%
  \def\QCTOptB{#1}
  \def\QCTOptA{#2}
}
\def\QCTNOpt#1{%
  \def\QCTOptA{#1}
  \let\QCTOptB\empty
}
\def\Qct{%
  \@ifnextchar[{%
    \QCTOpt}{\QCTNOpt}
}
\def\QCBOpt[#1]#2{%
  \def\QCBOptB{#1}%
  \def\QCBOptA{#2}%
}
\def\QCBNOpt#1{%
  \def\QCBOptA{#1}%
  \let\QCBOptB\empty
}
\def\Qcb{%
  \@ifnextchar[{%
    \QCBOpt}{\QCBNOpt}%
}
\def\PrepCapArgs{%
  \ifx\QCBOptA\empty
    \ifx\QCTOptA\empty
      {}%
    \else
      \ifx\QCTOptB\empty
        {\QCTOptA}%
      \else
        [\QCTOptB]{\QCTOptA}%
      \fi
    \fi
  \else
    \ifx\QCBOptA\empty
      {}%
    \else
      \ifx\QCBOptB\empty
        {\QCBOptA}%
      \else
        [\QCBOptB]{\QCBOptA}%
      \fi
    \fi
  \fi
}
\def\GRAPHICSPS#1{%
 \ifcase\GRAPHICSTYPE
   \special{ps: #1}%
 \or
   \special{language "PS", include "#1"}%
 \fi
}%
\def\graffile#1#2#3#4{%
    \bgroup
	   \@inlabelfalse
       \leavevmode
       \@ifundefined{bbl@deactivate}{\def~{\string~}}{\activesoff}%
        \raise -#4 \BOXTHEFRAME{%
           \hbox to #2{\raise #3\hbox to #2{\null #1\hfil}}}%
    \egroup
}%
\def\draftbox#1#2#3#4{%
 \leavevmode\raise -#4 \hbox{%
  \frame{\rlap{\protect\tiny #1}\hbox to #2%
   {\vrule height#3 width\z@ depth\z@\hfil}%
  }%
 }%
}%
\let\nographics=\@msidraft
\newif\ifwasdraft
\def\GRAPHIC#1#2#3#4#5{%
   \ifnum\@msidraft=\@ne\draftbox{#2}{#3}{#4}{#5}%
   \else\graffile{#1}{#3}{#4}{#5}%
   \fi
}
\def\addtoLaTeXparams#1{%
    \edef\LaTeXparams{\LaTeXparams #1}}%
\newif\ifBoxFrame \BoxFramefalse
\newif\ifOverFrame \OverFramefalse
\newif\ifUnderFrame \UnderFramefalse
\def\BOXTHEFRAME#1{%
   \hbox{%
      \ifBoxFrame
         \frame{#1}%
      \else
         {#1}%
      \fi
   }%
}
\def\doFRAMEparams#1{\BoxFramefalse\OverFramefalse\UnderFramefalse\readFRAMEparams#1\end}%
\def\readFRAMEparams#1{%
 \ifx#1\end%
  \let\next=\relax
  \else
  \ifx#1i\dispkind=\z@\fi
  \ifx#1d\dispkind=\@ne\fi
  \ifx#1f\dispkind=\tw@\fi
  \ifx#1t\addtoLaTeXparams{t}\fi
  \ifx#1b\addtoLaTeXparams{b}\fi
  \ifx#1p\addtoLaTeXparams{p}\fi
  \ifx#1h\addtoLaTeXparams{h}\fi
  \ifx#1X\BoxFrametrue\fi
  \ifx#1O\OverFrametrue\fi
  \ifx#1U\UnderFrametrue\fi
  \ifx#1w
    \ifnum\@msidraft=1\wasdrafttrue\else\wasdraftfalse\fi
    \@msidraft=\@ne
  \fi
  \let\next=\readFRAMEparams
  \fi
 \next
 }%
\def\IFRAME#1#2#3#4#5#6{%
      \bgroup
      \let\QCTOptA\empty
      \let\QCTOptB\empty
      \let\QCBOptA\empty
      \let\QCBOptB\empty
      #6%
      \parindent=0pt
      \leftskip=0pt
      \rightskip=0pt
      \setbox0=\hbox{\QCBOptA}%
      \@tempdima=#1\relax
      \ifOverFrame
          \typeout{This is not implemented yet}%
          \show\HELP
      \else
         \ifdim\wd0>\@tempdima
            \advance\@tempdima by \@tempdima
            \ifdim\wd0 >\@tempdima
               \setbox1 =\vbox{%
                  \unskip\hbox to \@tempdima{\hfill\GRAPHIC{#5}{#4}{#1}{#2}{#3}\hfill}%
                  \unskip\hbox to \@tempdima{\parbox[b]{\@tempdima}{\QCBOptA}}%
               }%
               \wd1=\@tempdima
            \else
               \textwidth=\wd0
               \setbox1 =\vbox{%
                 \noindent\hbox to \wd0{\hfill\GRAPHIC{#5}{#4}{#1}{#2}{#3}\hfill}\\%
                 \noindent\hbox{\QCBOptA}%
               }%
               \wd1=\wd0
            \fi
         \else
            \ifdim\wd0>0pt
              \hsize=\@tempdima
              \setbox1=\vbox{%
                \unskip\GRAPHIC{#5}{#4}{#1}{#2}{0pt}%
                \break
                \unskip\hbox to \@tempdima{\hfill \QCBOptA\hfill}%
              }%
              \wd1=\@tempdima
           \else
              \hsize=\@tempdima
              \setbox1=\vbox{%
                \unskip\GRAPHIC{#5}{#4}{#1}{#2}{0pt}%
              }%
              \wd1=\@tempdima
           \fi
         \fi
         \@tempdimb=\ht1
         \advance\@tempdimb by -#2
         \advance\@tempdimb by #3
         \leavevmode
         \raise -\@tempdimb \hbox{\box1}%
      \fi
      \egroup%
}%
\def\DFRAME#1#2#3#4#5{%
  \vspace\topsep
  \hfil\break
  \bgroup
     \leftskip\@flushglue
	 \rightskip\@flushglue
	 \parindent\z@
	 \parfillskip\z@skip
     \let\QCTOptA\empty
     \let\QCTOptB\empty
     \let\QCBOptA\empty
     \let\QCBOptB\empty
	 \vbox\bgroup
        \ifOverFrame
           #5\QCTOptA\par
        \fi
        \GRAPHIC{#4}{#3}{#1}{#2}{\z@}%
        \ifUnderFrame
           \break#5\QCBOptA
        \fi
	 \egroup
  \egroup
  \vspace\topsep
  \break
}%
\def\FFRAME#1#2#3#4#5#6#7{%
  \@ifundefined{floatstyle}
    {
     \begin{figure}[#1]%
    }
    {
	 \ifx#1h
      \begin{figure}[H]%
	 \else
      \begin{figure}[#1]%
	 \fi
	}
  \let\QCTOptA\empty
  \let\QCTOptB\empty
  \let\QCBOptA\empty
  \let\QCBOptB\empty
  \ifOverFrame
    #4
    \ifx\QCTOptA\empty
    \else
      \ifx\QCTOptB\empty
        \caption{\QCTOptA}%
      \else
        \caption[\QCTOptB]{\QCTOptA}%
      \fi
    \fi
    \ifUnderFrame\else
      \label{#5}%
    \fi
  \else
    \UnderFrametrue%
  \fi
  \begin{center}\GRAPHIC{#7}{#6}{#2}{#3}{\z@}\end{center}%
  \vspace{-11pt}
  \ifUnderFrame
    #4
    \ifx\QCBOptA\empty
      \caption{}%
    \else
      \ifx\QCBOptB\empty
        \caption{\QCBOptA}%
      \else
        \caption[\QCBOptB]{\QCBOptA}%
      \fi
    \fi
    \label{#5}%
  \fi
  \end{figure}%
 }%
\def\makeactives{
  \catcode`\"=\active
  \catcode`\;=\active
  \catcode`\:=\active
  \catcode`\'=\active
  \catcode`\~=\active
}
   \gdef\activesoff{%
      \def"{\string"}%
      \def;{\string;}%
      \def:{\string:}%
      \def'{\string'}%
      \def~{\string~}%
    }
\def\FRAME#1#2#3#4#5#6#7#8{%
 \bgroup
 \ifnum\@msidraft=\@ne
   \wasdrafttrue
 \else
   \wasdraftfalse%
 \fi
 \def\LaTeXparams{}%
 \dispkind=\z@
 \def\LaTeXparams{}%
 \doFRAMEparams{#1}%
 \ifnum\dispkind=\z@\IFRAME{#2}{#3}{#4}{#7}{#8}{#5}\else
  \ifnum\dispkind=\@ne\DFRAME{#2}{#3}{#7}{#8}{#5}\else
   \ifnum\dispkind=\tw@
    \edef\@tempa{\noexpand\FFRAME{\LaTeXparams}}%
    \@tempa{#2}{#3}{#5}{#6}{#7}{#8}%
    \fi
   \fi
  \fi
  \ifwasdraft\@msidraft=1\else\@msidraft=0\fi{}%
  \egroup
 }%
\def\TEXUX#1{"texux"}
\long\def\QQQ#1#2{%
     \long\expandafter\def\csname#1\endcsname{#2}}%
\long\def\QQA#1#2{}%
\def\QTR#1#2{{\csname#1\endcsname {#2}}}%
\def\EXPAND#1[#2]#3{}%
\def\NOEXPAND#1[#2]#3{}%
\def\LaTeXparent#1{}%
\def\ChildStyles#1{}%
\def\ChildDefaults#1{}%
\def\QTagDef#1#2#3{}%
  \providecommand{\UNICODE}[2][]{\protect\rule{.1in}{.1in}}
  \providecommand{\U}[1]{\protect\rule{.1in}{.1in}}
\def\QQfnmark#1{\footnotemark}
 \def\abstract{%
  \if@twocolumn
   \section*{Abstract (Not appropriate in this style!)}%
   \else \small
   \begin{center}{\bf Abstract\vspace{-.5em}\vspace{\z@}}\end{center}%
   \quotation
   \fi
  }%
   \def\registered{\relax\ifmmode{}\r@gistered
                    \else$\m@th\r@gistered$\fi}%
 \def\r@gistered{^{\ooalign
  {\hfil\raise.07ex\hbox{$\scriptstyle\rm\text{R}$}\hfil\crcr
  \mathhexbox20D}}}}{}%
\newdimen\theight
\def\newfmtname{LaTeX2e}
  \DeclareOldFontCommand{\rm}{\normalfont\rmfamily}{\mathrm}
  \DeclareOldFontCommand{\sf}{\normalfont\sffamily}{\mathsf}
  \DeclareOldFontCommand{\tt}{\normalfont\ttfamily}{\mathtt}
  \DeclareOldFontCommand{\bf}{\normalfont\bfseries}{\mathbf}
  \DeclareOldFontCommand{\it}{\normalfont\itshape}{\mathit}
  \DeclareOldFontCommand{\sl}{\normalfont\slshape}{\@nomath\sl}
  \DeclareOldFontCommand{\sc}{\normalfont\scshape}{\@nomath\sc}
\def\alpha{{\Greekmath 010B}}%
\def\beta{{\Greekmath 010C}}%
\def\gamma{{\Greekmath 010D}}%
\def\delta{{\Greekmath 010E}}%
\def\epsilon{{\Greekmath 010F}}%
\def\zeta{{\Greekmath 0110}}%
\def\eta{{\Greekmath 0111}}%
\def\theta{{\Greekmath 0112}}%
\def\iota{{\Greekmath 0113}}%
\def\kappa{{\Greekmath 0114}}%
\def\lambda{{\Greekmath 0115}}%
\def\mu{{\Greekmath 0116}}%
\def\nu{{\Greekmath 0117}}%
\def\xi{{\Greekmath 0118}}%
\def\pi{{\Greekmath 0119}}%
\def\rho{{\Greekmath 011A}}%
\def\sigma{{\Greekmath 011B}}%
\def\tau{{\Greekmath 011C}}%
\def\upsilon{{\Greekmath 011D}}%
\def\phi{{\Greekmath 011E}}%
\def\chi{{\Greekmath 011F}}%
\def\psi{{\Greekmath 0120}}%
\def\omega{{\Greekmath 0121}}%
\def\varepsilon{{\Greekmath 0122}}%
\def\vartheta{{\Greekmath 0123}}%
\def\varpi{{\Greekmath 0124}}%
\def\varrho{{\Greekmath 0125}}%
\def\varsigma{{\Greekmath 0126}}%
\def\varphi{{\Greekmath 0127}}%
\def\nabla{{\Greekmath 0272}}
\def\FindBoldGroup{%
   {\setbox0=\hbox{$\mathbf{x\global\edef\theboldgroup{\the\mathgroup}}$}}%
}
\def\Greekmath#1#2#3#4{%
    \if@compatibility
        \ifnum\mathgroup=\symbold
           \mathchoice{\mbox{\boldmath$\displaystyle\mathchar"#1#2#3#4$}}%
                      {\mbox{\boldmath$\textstyle\mathchar"#1#2#3#4$}}%
                      {\mbox{\boldmath$\scriptstyle\mathchar"#1#2#3#4$}}%
                      {\mbox{\boldmath$\scriptscriptstyle\mathchar"#1#2#3#4$}}%
        \else
           \mathchar"#1#2#3#4%
        \fi
    \else
        \FindBoldGroup
        \ifnum\mathgroup=\theboldgroup 
           \mathchoice{\mbox{\boldmath$\displaystyle\mathchar"#1#2#3#4$}}%
                      {\mbox{\boldmath$\textstyle\mathchar"#1#2#3#4$}}%
                      {\mbox{\boldmath$\scriptstyle\mathchar"#1#2#3#4$}}%
                      {\mbox{\boldmath$\scriptscriptstyle\mathchar"#1#2#3#4$}}%
        \else
           \mathchar"#1#2#3#4%
        \fi     	
	  \fi}
\newif\ifGreekBold  \GreekBoldfalse
\let\SAVEPBF=\pbf
\def\pbf{\GreekBoldtrue\SAVEPBF}%
  \newcounter{equationnumber}
  \def\mathletters{%
     \addtocounter{equation}{1}
     \edef\@currentlabel{\theequation}%
     \setcounter{equationnumber}{\c@equation}
     \setcounter{equation}{0}%
     \edef\theequation{\@currentlabel\noexpand\alph{equation}}%
  }
    \def\BibTeX{{\rm B\kern-.05em{\sc i\kern-.025em b}\kern-.08em
                 T\kern-.1667em\lower.7ex\hbox{E}\kern-.125emX}}}{}%
\def\AmS{{\protect\usefont{OMS}{cmsy}{m}{n}%
                A\kern-.1667em\lower.5ex\hbox{M}\kern-.125emS}}}{}%
\def\@@eqncr{\let\@tempa\relax
    \ifcase\@eqcnt \def\@tempa{& & &}\or \def\@tempa{& &}%
      \else \def\@tempa{&}\fi
     \@tempa
     \if@eqnsw
        \iftag@
           \@taggnum
        \else
           \@eqnnum\stepcounter{equation}%
        \fi
     \fi
     \global\tag@false
     \global\@eqnswtrue
     \global\@eqcnt\z@\cr}
\def\TCItag{\@ifnextchar*{\@TCItagstar}{\@TCItag}}
\def\@TCItag#1{%
    \global\tag@true
    \global\def\@taggnum{(#1)}%
    \global\def\@currentlabel{#1}}
\def\@TCItagstar*#1{%
    \global\tag@true
    \global\def\@taggnum{#1}%
    \global\def\@currentlabel{#1}}
\def\tint{\msi@int\textstyle\int}%
\def\tiint{\msi@int\textstyle\iint}%
\def\tiiint{\msi@int\textstyle\iiint}%
\def\tiiiint{\msi@int\textstyle\iiiint}%
\def\tidotsint{\msi@int\textstyle\idotsint}%
\def\toint{\msi@int\textstyle\oint}%
\newtoks\temptoksa
\newtoks\temptoksb
\newtoks\temptoksc
\def\msi@int#1#2{%
 \def\@temp{{#1#2\the\temptoksc_{\the\temptoksa}^{\the\temptoksb}}}%
 \futurelet\@nextcs
 \@int
}
\def\@int{%
   \ifx\@nextcs\limits
      \typeout{Found limits}%
      \temptoksc={\limits}%
	  \let\@next\@intgobble%
   \else\ifx\@nextcs\nolimits
      \typeout{Found nolimits}%
      \temptoksc={\nolimits}%
	  \let\@next\@intgobble%
   \else
      \typeout{Did not find limits or no limits}%
      \temptoksc={}%
      \let\@next\msi@limits%
   \fi\fi
   \@next
}%
\def\@intgobble#1{%
   \typeout{arg is #1}%
   \msi@limits
}
\def\msi@limits{%
   \temptoksa={}%
   \temptoksb={}%
   \@ifnextchar_{\@limitsa}{\@limitsb}%
}
\def\@limitsa_#1{%
   \temptoksa={#1}%
   \@ifnextchar^{\@limitsc}{\@temp}%
}
\def\@limitsb{%
   \@ifnextchar^{\@limitsc}{\@temp}%
}
\def\@limitsc^#1{%
   \temptoksb={#1}%
   \@ifnextchar_{\@limitsd}{\@temp}%
}
\def\@limitsd_#1{%
   \temptoksa={#1}%
   \@temp
}
\def\dint{\msi@int\displaystyle\int}%
\def\diint{\msi@int\displaystyle\iint}%
\def\diiint{\msi@int\displaystyle\iiint}%
\def\diiiint{\msi@int\displaystyle\iiiint}%
\def\didotsint{\msi@int\displaystyle\idotsint}%
\def\doint{\msi@int\displaystyle\oint}%
\def\ExitTCILatex{\makeatother }
\if@compatibility\message{amsmath already loaded}\fi\aftergroup\ExitTCILatex}
\if@compatibility\message{amstex already loaded}\fi\aftergroup\ExitTCILatex}
\if@compatibility\message{amsgen already loaded}\fi\aftergroup\ExitTCILatex}
\let\DOTSI\relax
\def\RIfM@{\relax\ifmmode}%
\def\FN@{\futurelet\next}%
\def\iint{\DOTSI\intno@\tw@\FN@\ints@}%
\def\iiint{\DOTSI\intno@\thr@@\FN@\ints@}%
\def\iiiint{\DOTSI\intno@4 \FN@\ints@}%
\def\idotsint{\DOTSI\intno@\z@\FN@\ints@}%
\def\ints@{\findlimits@\ints@@}%
\newif\iflimtoken@
\newif\iflimits@
\def\findlimits@{\limtoken@true\ifx\next\limits\limits@true
 \else\ifx\next\nolimits\limits@false\else
 \limtoken@false\ifx\ilimits@\nolimits\limits@false\else
 \ifinner\limits@false\else\limits@true\fi\fi\fi\fi}%
\def\multint@{\int\ifnum\intno@=\z@\intdots@                          
 \else\intkern@\fi                                                    
 \ifnum\intno@>\tw@\int\intkern@\fi                                   
 \ifnum\intno@>\thr@@\int\intkern@\fi                                 
 \int}
\def\multintlimits@{\intop\ifnum\intno@=\z@\intdots@\else\intkern@\fi
 \ifnum\intno@>\tw@\intop\intkern@\fi
 \ifnum\intno@>\thr@@\intop\intkern@\fi\intop}%
\def\intic@{%
    \mathchoice{\hskip.5em}{\hskip.4em}{\hskip.4em}{\hskip.4em}}%
\def\negintic@{\mathchoice
 {\hskip-.5em}{\hskip-.4em}{\hskip-.4em}{\hskip-.4em}}%
\def\ints@@{\iflimtoken@                                              
 \def\ints@@@{\iflimits@\negintic@
   \mathop{\intic@\multintlimits@}\limits                             
  \else\multint@\nolimits\fi                                          
  \eat@}
 \else                                                                
 \def\ints@@@{\iflimits@\negintic@
  \mathop{\intic@\multintlimits@}\limits\else
  \multint@\nolimits\fi}\fi\ints@@@}%
\def\intkern@{\mathchoice{\!\!\!}{\!\!}{\!\!}{\!\!}}%
\def\plaincdots@{\mathinner{\cdotp\cdotp\cdotp}}%
\def\intdots@{\mathchoice{\plaincdots@}%
 {{\cdotp}\mkern1.5mu{\cdotp}\mkern1.5mu{\cdotp}}%
 {{\cdotp}\mkern1mu{\cdotp}\mkern1mu{\cdotp}}%
 {{\cdotp}\mkern1mu{\cdotp}\mkern1mu{\cdotp}}}%
\def\RIfM@{\relax\protect\ifmmode}
\def\text{\RIfM@\expandafter\text@\else\expandafter\mbox\fi}
\let\nfss@text\text
\def\text@#1{\mathchoice
   {\textdef@\displaystyle\f@size{#1}}%
   {\textdef@\textstyle\tf@size{\firstchoice@false #1}}%
   {\textdef@\textstyle\sf@size{\firstchoice@false #1}}%
   {\textdef@\textstyle \ssf@size{\firstchoice@false #1}}%
   \glb@settings}
\def\textdef@#1#2#3{\hbox{{%
                    \everymath{#1}%
                    \let\f@size#2\selectfont
                    #3}}}
\newif\iffirstchoice@
\def\Let@{\relax\iffalse{\fi\let\\=\cr\iffalse}\fi}%
\def\vspace@{\def\vspace##1{\crcr\noalign{\vskip##1\relax}}}%
\def\multilimits@{\bgroup\vspace@\Let@
 \baselineskip\fontdimen10 \scriptfont\tw@
 \advance\baselineskip\fontdimen12 \scriptfont\tw@
 \lineskip\thr@@\fontdimen8 \scriptfont\thr@@
 \lineskiplimit\lineskip
 \vbox\bgroup\ialign\bgroup\hfil$\m@th\scriptstyle{##}$\hfil\crcr}%
\def\Sb{_\multilimits@}%
\def\endSb{\crcr\egroup\egroup\egroup}%
\def\Sp{^\multilimits@}%
\newdimen\ex@
\def\rightarrowfill@#1{$#1\m@th\mathord-\mkern-6mu\cleaders
 \hbox{$#1\mkern-2mu\mathord-\mkern-2mu$}\hfill
 \mkern-6mu\mathord\rightarrow$}%
\def\leftarrowfill@#1{$#1\m@th\mathord\leftarrow\mkern-6mu\cleaders
 \hbox{$#1\mkern-2mu\mathord-\mkern-2mu$}\hfill\mkern-6mu\mathord-$}%
\def\leftrightarrowfill@#1{$#1\m@th\mathord\leftarrow
\mkern-6mu\cleaders
 \hbox{$#1\mkern-2mu\mathord-\mkern-2mu$}\hfill
 \mkern-6mu\mathord\rightarrow$}%
\def\overrightarrow{\mathpalette\overrightarrow@}%
\def\overrightarrow@#1#2{\vbox{\ialign{##\crcr\rightarrowfill@#1\crcr
 \noalign{\kern-\ex@\nointerlineskip}$\m@th\hfil#1#2\hfil$\crcr}}}%
\def\overleftarrow{\mathpalette\overleftarrow@}%
\def\overleftarrow@#1#2{\vbox{\ialign{##\crcr\leftarrowfill@#1\crcr
 \noalign{\kern-\ex@\nointerlineskip}$\m@th\hfil#1#2\hfil$\crcr}}}%
\def\overleftrightarrow{\mathpalette\overleftrightarrow@}%
\def\overleftrightarrow@#1#2{\vbox{\ialign{##\crcr
   \leftrightarrowfill@#1\crcr
 \noalign{\kern-\ex@\nointerlineskip}$\m@th\hfil#1#2\hfil$\crcr}}}%
\def\underrightarrow{\mathpalette\underrightarrow@}%
\def\underrightarrow@#1#2{\vtop{\ialign{##\crcr$\m@th\hfil#1#2\hfil
  $\crcr\noalign{\nointerlineskip}\rightarrowfill@#1\crcr}}}%
\def\underleftarrow{\mathpalette\underleftarrow@}%
\def\underleftarrow@#1#2{\vtop{\ialign{##\crcr$\m@th\hfil#1#2\hfil
  $\crcr\noalign{\nointerlineskip}\leftarrowfill@#1\crcr}}}%
\def\underleftrightarrow{\mathpalette\underleftrightarrow@}%
\def\underleftrightarrow@#1#2{\vtop{\ialign{##\crcr$\m@th
  \hfil#1#2\hfil$\crcr
 \noalign{\nointerlineskip}\leftrightarrowfill@#1\crcr}}}%
\def\qopnamewl@#1{\mathop{\operator@font#1}\nlimits@}
\let\nlimits@\displaylimits
\def\setboxz@h{\setbox\z@\hbox}
\def\varlim@#1#2{\mathop{\vtop{\ialign{##\crcr
 \hfil$#1\m@th\operator@font lim$\hfil\crcr
 \noalign{\nointerlineskip}#2#1\crcr
 \noalign{\nointerlineskip\kern-\ex@}\crcr}}}}
 \def\rightarrowfill@#1{\m@th\setboxz@h{$#1-$}\ht\z@\z@
  $#1\copy\z@\mkern-6mu\cleaders
  \hbox{$#1\mkern-2mu\box\z@\mkern-2mu$}\hfill
  \mkern-6mu\mathord\rightarrow$}
\def\leftarrowfill@#1{\m@th\setboxz@h{$#1-$}\ht\z@\z@
  $#1\mathord\leftarrow\mkern-6mu\cleaders
  \hbox{$#1\mkern-2mu\copy\z@\mkern-2mu$}\hfill
  \mkern-6mu\box\z@$}
\def\projlim{\qopnamewl@{proj\,lim}}
\def\injlim{\qopnamewl@{inj\,lim}}
\def\varinjlim{\mathpalette\varlim@\rightarrowfill@}
\def\varprojlim{\mathpalette\varlim@\leftarrowfill@}
\def\varliminf{\mathpalette\varliminf@{}}
\def\varliminf@#1{\mathop{\underline{\vrule\@depth.2\ex@\@width\z@
   \hbox{$#1\m@th\operator@font lim$}}}}
\def\varlimsup{\mathpalette\varlimsup@{}}
\def\varlimsup@#1{\mathop{\overline
  {\hbox{$#1\m@th\operator@font lim$}}}}
\def\align{\@verbatim \frenchspacing\@vobeyspaces \@alignverbatim
You are using the "align" environment in a style in which it is not defined.}
\let\csname endalign*\endcsname =\endtrivlist
\def\alignat{\@verbatim \frenchspacing\@vobeyspaces \@alignatverbatim
You are using the "alignat" environment in a style in which it is not defined.}
\let\csname endalignat*\endcsname =\endtrivlist
\def\xalignat{\@verbatim \frenchspacing\@vobeyspaces \@xalignatverbatim
You are using the "xalignat" environment in a style in which it is not defined.}
\let\csname endxalignat*\endcsname =\endtrivlist
\def\gather{\@verbatim \frenchspacing\@vobeyspaces \@gatherverbatim
You are using the "gather" environment in a style in which it is not defined.}
\let\csname endgather*\endcsname =\endtrivlist
\def\multiline{\@verbatim \frenchspacing\@vobeyspaces \@multilineverbatim
You are using the "multiline" environment in a style in which it is not defined.}
\let\csname endmultiline*\endcsname =\endtrivlist
\def\arrax{\@verbatim \frenchspacing\@vobeyspaces \@arraxverbatim
You are using a type of "array" construct that is only allowed in AmS-LaTeX.}
\def\tabulax{\@verbatim \frenchspacing\@vobeyspaces \@tabulaxverbatim
You are using a type of "tabular" construct that is only allowed in AmS-LaTeX.}
\let\csname endarrax*\endcsname =\endtrivlist
\let\csname endtabulax*\endcsname =\endtrivlist
 \def\endequation{%
     \ifmmode\ifinner 
      \iftag@
        \addtocounter{equation}{-1} 
        $\hfil
           \displaywidth\linewidth\@taggnum\egroup \endtrivlist
        \global\tag@false
        \global\@ignoretrue
      \else
        $\hfil
           \displaywidth\linewidth\@eqnnum\egroup \endtrivlist
        \global\tag@false
        \global\@ignoretrue
      \fi
     \else
      \iftag@
        \addtocounter{equation}{-1} 
        \eqno \hbox{\@taggnum}
        \global\tag@false%
        $$\global\@ignoretrue
      \else
        \eqno \hbox{\@eqnnum}
        $$\global\@ignoretrue
      \fi
     \fi\fi
 }
 \newif\iftag@ \tag@false
 \def\TCItag{\@ifnextchar*{\@TCItagstar}{\@TCItag}}
 \def\@TCItag#1{%
     \global\tag@true
     \global\def\@taggnum{(#1)}%
     \global\def\@currentlabel{#1}}
 \def\@TCItagstar*#1{%
     \global\tag@true
     \global\def\@taggnum{#1}%
     \global\def\@currentlabel{#1}}
     \def\tag{\@ifnextchar*{\@tagstar}{\@tag}}
     \def\@tag#1{%
         \global\tag@true
         \global\def\@taggnum{(#1)}}
     \def\@tagstar*#1{%
         \global\tag@true
         \global\def\@taggnum{#1}}
\begin{document}

\date{}
\title{\textbf{An alternative paradigm of fault diagnosis in dynamic
systems: orthogonal projection-based methods }}
\author[1]{\small Steven X. Ding}
\author[2]{\small Linlin Li}
\author[1]{\small Tianyu Liu}
\affil[1]{Institute for Automatic Control and Complex Systems, University of Duisburg-Essen, 47057, Duisburg, Germany}
\affil[2]{School of Automation and Electrical Engineering, University of Science and Technology Beijing, Beijing 100083, P. R. China (Corresponding author)}
\maketitle
%
%
%


\bigskip

\textbf{Abstract}: In this paper, we propose a new paradigm of fault
diagnosis in dynamic systems as an alternative to the well-established
observer-based framework. The basic idea behind this work is to (i)
formulate fault detection and isolation as projection of measurement signals
onto (system) subspaces in Hilbert space, and (ii) solve the resulting
problems by means of projection methods with orthogonal projection operators
and gap metric as major tools. In the new framework, fault diagnosis issues
are uniformly addressed both in the model-based and data-driven fashions.
Moreover, the design and implementation of the projection-based fault
diagnosis systems, from residual generation to threshold setting, can be
unifiedly handled. Thanks to the well-defined distance metric for
projections in Hilbert subspaces, the projection-based fault diagnosis
systems deliver optimal fault detectability. In particular, a new type of
residual-driven thresholds is proposed, which significantly increases the
fault detectability. In this work, various design schemes are proposed,
including a basic projection-based fault detection scheme, fault detection
schemes for feedback control systems, fault classification as well as two
modified fault detection schemes. As a part of our study, relations to the
existing observer-based fault detection systems are investigated, which
showcases that, with comparable online computations, the proposed
projection-based detection methods offer improved detection performance.

\bigskip

\textbf{Keywords}: Fault detection and classification, orthogonal
projection, operators, gap metric, observer-based fault detection,
parametric and multiplicative faults.

\section{Introduction}

Observer-based fault diagnosis is the state of the art technique in dealing
with fault detection, isolation and identification in dynamic systems \cite%
{PFC89,Gertler98,CP99,Blanke06,Ding2008}. Beginning 50 years ago with the
pioneer work \cite{Beard,Jones}, observer-based fault diagnosis technique
has undergone a rapid development over a couple of decades and become today
well established as an active research area in control theory and
engineering \cite%
{Frank90,DingJPC97,VRK03-I,MDES2009,HKKS2010survey,GCD-2015survey,ZXD_2018}.
Even in the era that data-driven and machine learning methods become the
most dominant research domain in technical fault diagnosis, observer-based
fault diagnosis technique is still the major and efficient tool applied to
addressing fault diagnosis issues for dynamic and particularly automatic
control systems \cite{Ding2020}. The recent development of data-driven
design of observer-based fault detection systems \cite%
{Ding2014,Ding_IJP_2014} has extended the application of this technique to
meet practical demands.

\bigskip

An observer-based fault diagnosis system consists of two major functional
blocks: residual (feature) generation and decision making. While decisions
for fault detection or isolation are made on the basis of residual
processing algorithms, the major task of residual generation is the
construction of an observer serving as residual generator. Thanks to this
intimate relation to control theory, most of investigations on
observer-based fault diagnosis systems are dedicated to observer design
issues. Keeping in mind that any fault detection (or isolation) problem is
in its core to distinguish two different system operations, i.e. the fault
(to be detected) vs. uncertainties as fault detection, or two different
classes of faults as fault isolation, research on observer-based residual
generation, beginning in its early stage, was shaped into the trade-off
framework of sensitivity (e.g. to fault) vs. robustness (against
uncertainty) \cite{Frank90,DingCDC93,Patton-chen-93,DingAUTO94,HouUKACC96}.
This development is a natural result, as robust control dominated control
theory and engineering between 80's and 90's, and has stamped the progress
of observer-based fault diagnosis technique since then. Reviewing the
literature in the relevant research and application domains gives the
impression that there is no more efficient and systematic alternative to
observer-based fault diagnosis technique in dealing with fault diagnosis in
dynamic systems, although there exist a number of open and challenging
issues even after extensive studies over past decades. These include,

\begin{itemize}
\item efficient and optimal detection of parametric (multiplicative) faults,
in particular in the presence of model uncertainties as well as in feedback
control systems,

\item definition and introduction of convincing and mathematically
well-established metrics to measure the distance between nominal and faulty
operations for fault detection purpose or the distance between two different
classes of faults towards fault isolation. Although the widely accepted
optimisation criteria like $\mathcal{H}_{\infty }/\mathcal{H}_{\infty },$ $%
\mathcal{H}_{-}/\mathcal{H}_{\infty }$ and $\mathcal{H}_{2}$ \cite%
{DingCDC93,DingAUTO94,HouUKACC96,DingIAS00,ZhongAUTO03,WYL2007} are applied
for optimal design of observer-based residual generators, they only reflect
the influences of faults and uncertainties on system dynamics under
consideration, and are not metric for measuring the distance of different
system operations,

\item residual generation and threshold setting in an integrated manner.
This is indeed the consequence of the above-mentioned problem with missing
metric for distance measurement, and

\item a uniform design of fault diagnosis systems, both in the model-based
and data-driven fashions.
\end{itemize}

These facts have considerably motivated us to explore potential alternative
strategies in recent years.

\bigskip

Hilbert space $\mathcal{H}$ is a vector space endowed with an inner product $%
\left\langle \cdot ,\cdot \right\rangle $. Given a vector $\alpha \in 
\mathcal{H},$ the norm of $\alpha $ is induced by the inner product, $%
\left\Vert \alpha \right\Vert ^{2}=\left\langle \alpha ,\alpha \right\rangle
.$ In this regard, Hilbert space is a complete metric space \cite%
{Kato_book,Feintuch_book}. Accordingly, a classification problem can be well
defined in Hilbert space as: given a subspace $\mathcal{K}\in \mathcal{H},$
check if $\alpha $ $\in \mathcal{H}$ belongs to $\mathcal{K}.$ The solution
is straightforward and consists of two steps: (i) projecting $\alpha $ onto $%
\mathcal{K},$ and then (ii) calculating the distance between $\alpha $ and
its projection using the induced norm. Figure 1 schematically illustrates
this projection-based classification problem and its solution. 
\begin{figure}[h]
\centering\includegraphics[width=10cm,height=5cm]{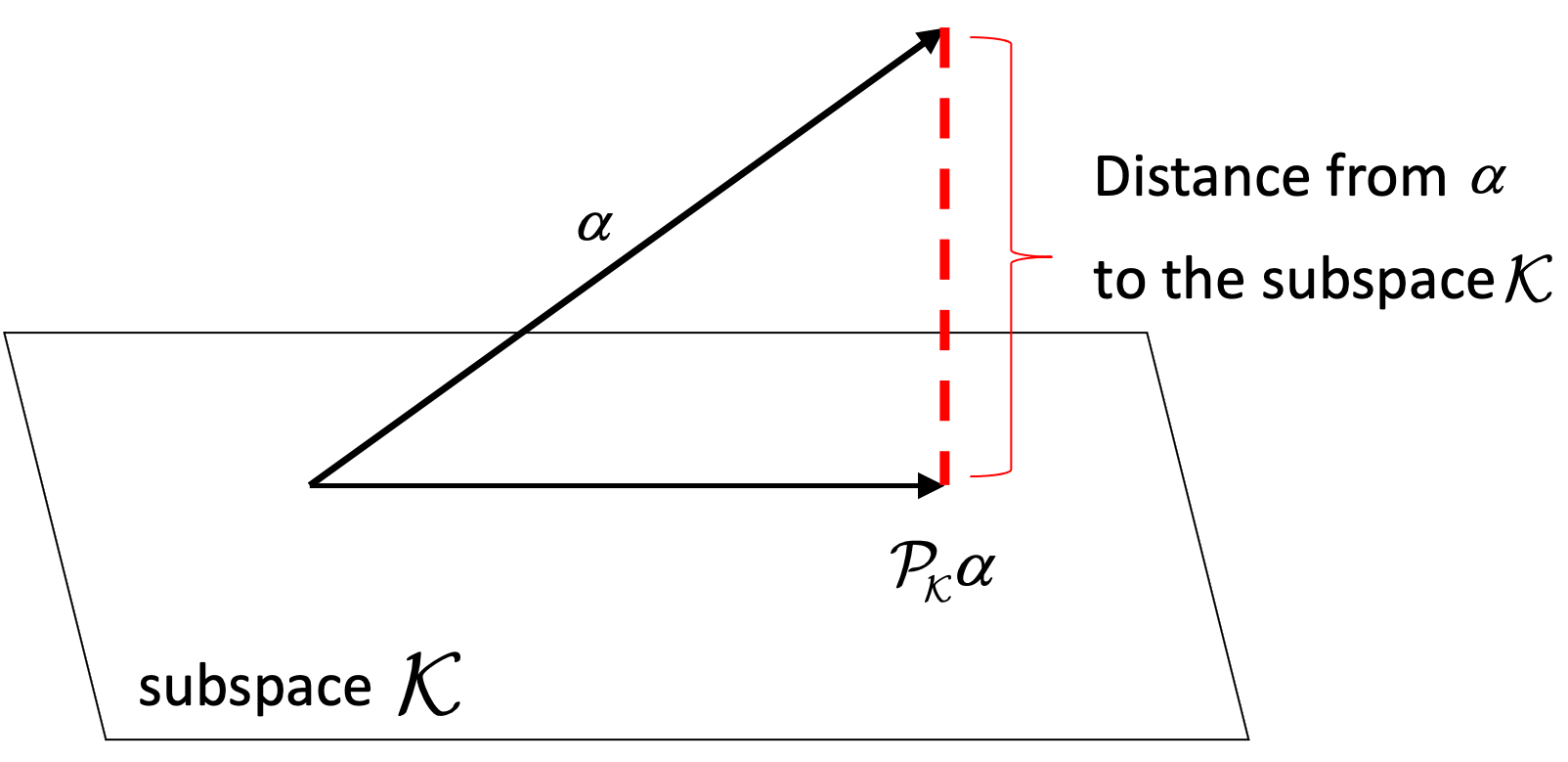}
\caption{Schemetic description of projection-based classification ($\mathcal{%
P}_{\mathcal{K}}\protect\alpha $ denotes the projection of $\protect\alpha $
onto $\mathcal{K}$)}
\end{figure}

This conceptional formulation of classification problem in terms of
projection and distance in Hilbert subspace, and above all, the facts that

\begin{itemize}
\item all signals in dynamic system can be represented by vectors in a
Hilbert subspace \cite{Francis87}, and

\item system dynamics can be modelled as subspaces in Hilbert space as well 
\cite{Francis87,Vinnicombe-book,Feintuch_book}
\end{itemize}

are convincing arguments for us to study fault diagnosis issues using
projection-based methods in the above described context. To this end, we
will first formulate fault detection and isolation issues, in particular
those open ones like detecting parametric faults in uncertainty systems or
fault detection in feedback control systems, as classification problems in
Hilbert subspaces and solve them using projection-based methods. This is the 
\textit{first} (intended) \textit{contribution} of our work towards
establishing an alternative paradigm for fault diagnosis in dynamic systems.
We will further demonstrate that, in this framework, fault detection and
isolation issues can be systematically handled in a uniform manner. To be
specific, residual generation and threshold setting will be addressed by
determining (i) the distance of a (signal) vector to a (system) subspace and
(ii) the gap between two (system) subspaces. In particular, the threshold
becomes residual-driven, which significantly increases the fault
detectability. These results would be the \textit{second contribution} of
our work. In our study, considerable attention will be paid to relations
between the proposed projection-based methods and the well-established
observer-based framework as well, and comparison will be made when possible.
As the \textit{third contribution}, it will be demonstrated that the
proposed projection-based methods result in higher fault detectability.
Moreover, a projection-based fault detection system will be proposed, which
is comparable with the existing observer-based systems with regard to the
needed online computation, but delivers better fault detection performance.

\bigskip

Projection methods are a standard technique in machine and statistic
learning and extensively applied in data-driven fault diagnosis. The most
popular ones are principal component analysis (PCA) method \cite%
{Jolliffe86,CRB2001,QinJC03,Ding2014} and partial least squares (PLS)
algorithm \cite{CRB2001,LQZautomatica2010,Ding2014}. Although their
application is limited to detecting faults in statistic processes (including
steady dynamic systems), PCA and PLS are, thanks to their uniform
formulation and computations, and above all, their clear geometric
interpretation of projection onto subspaces, not only widely accepted in
industry, but also recognised in research as a conceptional basis for the
development of advanced fault diagnosis methods. This example also inspirits
our efforts towards a projection-based fault diagnosis framework. Noticing
that projections adopted in PCA and PLS deal with linear algebraic
computations, and thus are limited to solving fault diagnosis problems in
statistic processes, different mathematical and control theoretic tools are
needed for our study on fault diagnosis in dynamic systems. We will consider
system signals (data) defined in Hilbert spaces like Lebesgue space or Hardy
space, and model the dynamic systems under consideration as subspaces in
Hilbert space, for instance, image and kernel subspaces \cite%
{Francis87,Vinnicombe-book}. In order to deal with orthogonal projections
and distance between (signal) vector and (system) subspace or gap between
two subspaces, basics of operator theory, gap metric as well as their
computations and realisations in form of dynamic systems will serve as the
major tools for our study.

\bigskip

The paper is organised as follows. In Section 2, necessary preliminaries of
system representations, orthogonal projection in Hilbert space as well as
gap metric as a similarity measurement between two Hilbert subspaces are
introduced. Section 3 is dedicated to the establishment of the basic form of
the projection-based fault detection framework. It includes (i) formulation
and solution of orthogonal projection-based residual generation, (ii) the
realisation and online computation algorithms, and (iii) gap metric-based
threshold setting. Section 4 demonstrates solutions of fault detection in
feedback control systems achieved in the framework of projection-based fault
detection. In Section 5, fault detection and isolation issues are formulated
as binary and multi-class classification problems and solved by means of
projection-based methods. Two modified projection-based fault detection
methods are proposed in Section 6. The first one is comparable with existing
observer-based fault detection schemes with regard to the associated online
computation. The second one provides us with a practical solution with
finite evaluation time interval. This method can be realised in the
data-driven fashion as well. In our experimental study in Section 7,
applications of some of the proposed methods and algorithms are illustrated
on a laboratory three-tank system. Finally, in Section 8, the major results
are first summarised, and two remarks on (i) the projection-based
interpretation of the so-called unified solution for detecting additive
faults, and (ii) computation of a type of gap metric are included.

\bigskip

Throughout this paper, standard notations known in advanced control theory
and linear algebra are adopted. In addition, $\mathcal{L}_{2}\left( -\infty
,\infty \right) =\mathcal{L}_{2}\left( -\infty ,0\right] \oplus \mathcal{L}%
_{2}\left[ 0,\infty \right) $ is time domain space of all square summable
Lebesgue signals (signals with bounded energy). $\mathcal{H}_{2}$ is the
space of Fourier transforms of signals in $\mathcal{L}_{2}\left[ 0,\infty
\right) .$ $\mathcal{H}_{\infty }$ $\left( \mathcal{RH}_{\infty }\right) $
is used to denote the set of all stable systems (with a real rational
transfer function). For transfer matrix $G(z),G^{\sim }(z)=G^{T}(z^{-1}).$
By the application of projection-based methods, following notations are
adopted. $\mathcal{P}_{\mathcal{I}}$ denotes an orthogonal projection onto
subspace $\mathcal{I}$, which is an operator whose norm is denoted by $%
\left\Vert \mathcal{P}_{\mathcal{I}}\right\Vert .$ $\mathcal{P}_{\mathcal{I}%
}^{\ast }$ is the adjoint of $\mathcal{P}_{\mathcal{I}}.$ $\mathcal{K}^{\bot
}$ represents the orthogonal complement of $\mathcal{K}$. $\mathcal{L}_{K}$
stands for Laurent (multiplication) operator with symbol $K$ \cite%
{Francis87,Vinnicombe-book,Feintuch_book}.

\section{Preliminaries}

As the methodological basis of our work, we first introduce the concepts of
kernel and image representations as alternative system model forms and some
associated computation issues. It is followed by the introduction of
orthogonal projection in Hilbert space and gap metric as a similarity
measurement between two closed Hilbert subspaces.

\subsection{Kernel and image representations and subspaces}

Consider discrete-time linear time invariant (LTI) systems modelled by%
\begin{equation}
y(z)=G(z)u(z),y(z)\in \mathbb{C}^{m},u(z)\in \mathbb{C}^{p}  \label{eq2-1}
\end{equation}%
with $u$ and $y$ as the plant input and output vectors. It is assumed that $%
G(z)$ is a proper real-rational matrix and its minimal state space
realisation is given by 
\begin{align}
x(k+1)& =Ax(k)+Bu(k),x(0)=x_{0},  \label{eq2-2a} \\
y(k)& =Cx(k)+Du(k),  \label{eq2-2b}
\end{align}%
where $x\in \mathbb{R}^{n}$ is the state vector and $x_{0}$ is the initial
condition of the system. Matrices $A,B,C,D$ are appropriately dimensioned
real constant matrices. A factorisation of a transfer function matrix over $%
\mathcal{RH}_{\infty }$ gives a further system representation form and
factorises the transfer matrix into two stable transfer matrices. The left
and right factorisations (LF and RF) of $G(z)$ are given by 
\begin{equation}
G(z)=\hat{M}^{-1}(z)\hat{N}(z)=N(z)M^{-1}(z),  \label{eq2-3}
\end{equation}%
whose state space representations are 
\begin{align}
\hat{M}(z)& =\left( A-LC,-L,WC,W\right) ,\hat{N}(z)=\left(
A-LC,B-LD,WC,WD\right) ,  \label{eq2-4a} \\
M(z)& =\left( A+BF,BV,F,V\right) ,N(z)=\left( A+BF,BV,C+DF,DV\right) ,
\label{eq2-4b}
\end{align}%
where (real) matrices $F$ and $L$ are selected such that $A+BF$ and $A-LC$
are Schur matrices, and $W$ and $V$ are invertible \cite{Hoffmann1996}. $%
\left( \hat{M},\hat{N}\right) $ and $\left( M,N\right) $ build left and
right coprime pairs (LCP and RCP), if there exist $\left( \hat{X},\hat{Y}%
\right) $ and $\left( X,Y\right) $ over $\mathcal{RH}_{\infty }$ so that the
Bezout identity holds%
\begin{equation}
\left[ 
\begin{array}{cc}
X(z) & \text{ }Y(z) \\ 
-\hat{N}(z) & \text{ }\hat{M}(z)%
\end{array}%
\right] \left[ 
\begin{array}{cc}
M(z) & \text{ }-\hat{Y}(z) \\ 
N(z) & \text{ }\hat{X}(z)%
\end{array}%
\right] =\left[ 
\begin{array}{cc}
I\text{ } & 0\text{ } \\ 
0\text{ } & I\text{ }%
\end{array}%
\right] .  \label{eq2-5}
\end{equation}%
The state space computation formulas of $\left( \hat{X},\hat{Y}\right) $ and 
$\left( X,Y\right) $ are \cite{Hoffmann1996}%
\begin{align}
\hat{X}(z)& =\left( A+BF,L,C+DF,W^{-1}\right) ,\hat{Y}(z)=\left(
A+BF,-LW^{-1},F,0\right) ,  \label{eq2-6a} \\
X(z)& =\left( A-LC,-(B-LD),F,V^{-1}\right) ,Y(z)=\left(
A-LC,-L,V^{-1}F,0\right) .  \label{eq2-6b}
\end{align}%
It follows from (\ref{eq2-3})-(\ref{eq2-4b}) that

\begin{itemize}
\item the LCP of $G(z)$ can be equivalently written as 
\begin{align}
r_{0}(z)& =\hat{M}(z)y(z)-\hat{N}(z)u(z)\Longrightarrow  \label{eq2-7a} \\
\hat{x}(k+1)& =\left( A-LC\right) \hat{x}(k)+\left( B-LD\right) u(k)+Ly(k),
\\
r_{0}(k)& =W\left( y(k)-\hat{y}(k)\right) ,\hat{y}(k)=C\hat{x}(k)+Du(k),
\label{eq2-7b}
\end{align}%
and, if $x(0)=\hat{x}(0),$ it holds $r_{0}(k)=0,$

\item the RCP of $G(z)$ can be represented by, for some $v\in \mathcal{H}%
_{2},$%
\begin{align}
u(z)& =M(z)v(z)\Longleftrightarrow
M^{-1}(z)u(z)=v(z),y(z)=N(z)v(z)\Longrightarrow  \label{eq2-8a} \\
x(k+1)& =\left( A+BF\right) x(k)+BVv(k) \\
\left[ 
\begin{array}{c}
u(k) \\ 
y(k)%
\end{array}%
\right] & =\left[ 
\begin{array}{c}
F \\ 
C+DF%
\end{array}%
\right] x(k)+\left[ 
\begin{array}{c}
V \\ 
DV%
\end{array}%
\right] v(k).  \label{eq2-8b}
\end{align}
\end{itemize}

System (\ref{eq2-7a})-(\ref{eq2-7b}) is the well-known observer-based
residual generator with $r_{0}(k)$ being a residual vector and $W$ as a
post-filter, while system (\ref{eq2-8a})-(\ref{eq2-8b}) describes the
closed-loop dynamics of a state feedback control with $v(k)$ as reference
vector and $V$ as a pre-filter. Systems (\ref{eq2-7a})-(\ref{eq2-7b}) and (%
\ref{eq2-8a})-(\ref{eq2-8b}) are called stable kernel and image
representations (SKR and SIR) of $G(z).$ For the sake of simplicity, we
introduce the following notation for SKR and SIR,%
\begin{align}
\text{SKR}& :r_{0}(z)=\left[ 
\begin{array}{cc}
-\hat{N}(z) & \hat{M}(z)%
\end{array}%
\right] \left[ 
\begin{array}{c}
u(z) \\ 
y(z)%
\end{array}%
\right] ,  \label{eq2-9a} \\
\text{SIR}& :\left[ 
\begin{array}{c}
u(z) \\ 
y(z)%
\end{array}%
\right] =\left[ 
\begin{array}{c}
M(z) \\ 
N(z)%
\end{array}%
\right] v(z).  \label{eq2-9b}
\end{align}%
In our subsequent study, the so-called normalised SKR and SIR play an
important role, which are denoted by $K_{G}$ and $I_{G}$ and defined by%
\begin{align*}
K_{G}(z)K_{G}^{\sim }(z)& =\hat{N}_{0}(z)\hat{N}_{0}^{\sim }(z)+\hat{M}%
_{0}(z)\hat{M}_{0}^{\sim }(z)=I, \\
I_{G}^{\sim }(z)I_{G}(z)& =M_{0}^{\sim }(z)M_{0}(z)+N_{0}^{\sim
}(z)N_{0}(z)=I,
\end{align*}%
where $\left( \hat{M}_{0},\hat{N}_{0}\right) $ and $\left(
M_{0},N_{0}\right) $ are LCP and RCP with $L=L_{0},W=W_{0},F=F_{0}$ and $%
V=V_{0},$ as given below \cite{Hoffmann1996}:%
\begin{align}
L& =L_{0}=\left( BD^{T}+APC^{T}\right) \left( I+DD^{T}+CPC^{T}\right) ^{-1},
\label{eq2-10a} \\
W& =W_{0}=U\left( I+DD^{T}+CPC^{T}\right)
^{-1/2},UU^{T}=I\Longleftrightarrow W_{0}\left( I+DD^{T}+CPC^{T}\right)
W_{0}^{T}=I,  \label{eq2-10b} \\
F& =F_{0}=-\left( I+D^{T}D+B^{T}QB\right) ^{-1}\left( D^{T}C+B^{T}QA\right) ,
\label{eq2-10c} \\
V& =V_{0}=\left( I+D^{T}D+B^{T}QB\right) ^{-1/2}\Gamma ,\Gamma ^{T}\Gamma
=I\Longleftrightarrow V_{0}^{T}\left( I+D^{T}D+B^{T}QB\right) V_{0}=I.
\label{eq2-10d}
\end{align}%
Here, $P>0,Q>0$ solve the following Riccati equations respectively,%
\begin{align*}
P& =APA^{T}+BB^{T}-\left( BD^{T}+APC^{T}\right) \left(
I+DD^{T}+CPC^{T}\right) ^{-1}\left( BD^{T}+APC^{T}\right) ^{T}, \\
Q& =A^{T}QA+C^{T}C-\left( D^{T}C+B^{T}QA\right) ^{T}\left(
I+D^{T}D+B^{T}QB\right) ^{-1}\left( D^{T}C+B^{T}QA\right) .
\end{align*}

\begin{Rem}
Hereafter, we may drop out the domain variable $z$\ or $k$ when there is no
risk of confusion.
\end{Rem}

The $\mathcal{H}_{2}$ input and output vectors $\left[ 
\begin{array}{c}
u \\ 
y%
\end{array}%
\right] $ satisfying (\ref{eq2-7a})-(\ref{eq2-7b}) or generated by $v\in 
\mathcal{H}_{2}$ according to (\ref{eq2-8a})-(\ref{eq2-8b}) build subspaces
in $\mathcal{H}_{2}$. For our purpose, the following definitions of kernel
and image subspaces are introduced.

\begin{Def}
Given the model (\ref{eq2-1}) and the corresponding LCP and RCP $\left( \hat{%
M},\hat{N}\right) $ and $\left( M,N\right) ,$ the $\mathcal{H}_{2}$ subspace 
$\mathcal{K}_{G}$ defined by 
\begin{equation}
\mathcal{K}_{G}=\left\{ \left[ 
\begin{array}{c}
u \\ 
y%
\end{array}%
\right] \in \mathcal{H}_{2}:\left[ 
\begin{array}{cc}
-\hat{N} & \hat{M}%
\end{array}%
\right] \left[ 
\begin{array}{c}
u \\ 
y%
\end{array}%
\right] =0\hspace{-2pt}\right\}  \label{eq2-11a}
\end{equation}%
is called kernel subspace of $G,$ and the $\mathcal{H}_{2}$ subspace $%
\mathcal{I}_{G}$ defined by%
\begin{equation}
\mathcal{I}_{G}=\left\{ \left[ 
\begin{array}{c}
u \\ 
y%
\end{array}%
\right] \in \mathcal{H}_{2}:\left[ 
\begin{array}{c}
u \\ 
y%
\end{array}%
\right] =\left[ 
\begin{array}{c}
M \\ 
N%
\end{array}%
\right] v,v\in \mathcal{H}_{2}\right\}  \label{eq2-11b}
\end{equation}%
is image subspace.
\end{Def}

It is evident that the kernel and image subspaces $\mathcal{K}_{G}$ and $%
\mathcal{I}_{G}$ consist of all (bounded) input and output pairs $(u,y).$
They are closed subspace in $\mathcal{H}_{2}$ \cite{Vinnicombe-book}.

\subsection{Orthogonal projection and gap metric}

An orthogonal projection on a subspace $\mathcal{V},$ denoted by $\mathcal{P}%
_{\mathcal{V}},$ in Hilbert space endowed with the inner product, 
\begin{equation}
\left\langle x,y\right\rangle =\sum\limits_{k=0}^{\infty
}x^{T}(k)y(k),x,y\in \mathcal{H}_{2},  \label{eq2-12a}
\end{equation}%
is a linear operator satisfying \cite{Kato_book}%
\begin{equation}
x,y\in \mathcal{V},\mathcal{P}_{\mathcal{V}}^{2}=\mathcal{P}_{\mathcal{V}%
},\left\langle \mathcal{P}_{\mathcal{V}}x,y\right\rangle =\left\langle x,%
\mathcal{P}_{\mathcal{V}}y\right\rangle .  \label{eq2-12}
\end{equation}%
The following well-known properties of an orthogonal projection are of
importance for our subsequent study \cite{Kato_book,Feintuch_book}:

\begin{itemize}
\item given $x\in \mathcal{H}_{2},$ the (orthogonal) projection of $x$ onto $%
\mathcal{V},$ $\mathcal{P}_{\mathcal{V}}x,$ satisfies%
\begin{equation}
\left\langle \mathcal{P}_{\mathcal{V}}x,x-\mathcal{P}_{\mathcal{V}%
}x\right\rangle =0.  \label{eq2-13}
\end{equation}%
Often, $\mathcal{P}_{\mathcal{V}}x$ serves as an estimate for $x,$ and in
this context, $x-\mathcal{P}_{\mathcal{V}}x$ is understood as the estimation
error. That means, the projection-induced estimate is orthogonal to the
estimation error;

\item relation (\ref{eq2-13}) can also be equivalently expressed by%
\begin{equation*}
x=\mathcal{P}_{\mathcal{V}}x+\mathcal{P}_{\mathcal{V}^{\bot }}x,
\end{equation*}%
where $\mathcal{V}^{\bot }$ is the orthogonal complement of $\mathcal{V};$

\item and given $y\in \mathcal{H}_{2},\forall x\in \mathcal{V}\in \mathcal{H}%
_{2},$%
\begin{equation}
\left\langle y-x,y-x\right\rangle =\left\Vert y-x\right\Vert _{2}^{2}\geq
\left\Vert y-\mathcal{P}_{\mathcal{V}}y\right\Vert _{2}^{2}.  \label{eq2-14}
\end{equation}
\end{itemize}

Given a closed subspace $\mathcal{V}\in \mathcal{H}_{2}$ and a vector $y\in 
\mathcal{H}_{2},$ the distance between $y$ and $\mathcal{V},dist\left( y,%
\mathcal{V}\right) ,$ is defined as%
\begin{equation*}
dist\left( y,\mathcal{V}\right) =\inf_{x\in \mathcal{V}}\left\Vert
y-x\right\Vert _{2},
\end{equation*}%
which, following (\ref{eq2-14}), can be computed as 
\begin{equation*}
dist\left( y,\mathcal{V}\right) =\left( \mathcal{I}-\mathcal{P}_{\mathcal{V}%
}\right) y=\mathcal{P}_{\mathcal{V}^{\bot }}y.
\end{equation*}%
Here, $\mathcal{I}$ is the unit operator.

\bigskip

In order to measure the similarity of two (closed) subspaces in Hilbert
space $\mathcal{H}$, the concept of gap metric is established \cite%
{Kato_book,Feintuch_book}. Given two closed subspaces $\mathcal{V},\mathcal{U%
}\in \mathcal{H},$ the gap metric between them is defined by 
\begin{align}
\delta \left( \mathcal{V},\mathcal{U}\right) & =\max \left\{ \vec{\delta}%
\left( \mathcal{V},\mathcal{U}\right) ,\vec{\delta}\left( \mathcal{U},%
\mathcal{V}\right) \right\} ,  \label{eq2-15a} \\
\vec{\delta}\left( \mathcal{V},\mathcal{U}\right) & =\sup_{\substack{ x\in 
\mathcal{V}  \\ \left\Vert x\right\Vert _{2}=1}}dist\left( x,\mathcal{U}%
\right) =\left\Vert \left( \mathcal{I}-\mathcal{P}_{\mathcal{U}}\right) 
\mathcal{P}_{\mathcal{V}}\right\Vert =\sup_{x\in \mathcal{V}}\inf_{y\in 
\mathcal{U}}\frac{\left\Vert x-y\right\Vert _{2}}{\left\Vert x\right\Vert
_{2}},  \label{eq2-15} \\
\vec{\delta}\left( \mathcal{U},\mathcal{V}\right) & =\sup_{\substack{ y\in 
\mathcal{U}  \\ \left\Vert y\right\Vert _{2}=1}}dist\left( y,\mathcal{V}%
\right) =\left\Vert \left( \mathcal{I}-\mathcal{P}_{\mathcal{V}}\right) 
\mathcal{P}_{\mathcal{U}}\right\Vert =\sup_{y\in \mathcal{U}}\inf_{x\in 
\mathcal{V}}\frac{\left\Vert y-x\right\Vert _{2}}{\left\Vert y\right\Vert
_{2}}.
\end{align}%
Here, $\vec{\delta}\left( \mathcal{V},\mathcal{U}\right) $ and $\vec{\delta}%
\left( \mathcal{U},\mathcal{V}\right) $ are called directed gap. The
following properties are well-known \cite{Kato_book,Feintuch_book} and
useful for our subsequent investigation:%
\begin{align*}
0& \leq \delta \left( \mathcal{V},\mathcal{U}\right) \leq 1, \\
\text{for }\delta \left( \mathcal{V},\mathcal{U}\right) & <1,\vec{\delta}%
\left( \mathcal{V},\mathcal{U}\right) =\vec{\delta}\left( \mathcal{U},%
\mathcal{V}\right) =\delta \left( \mathcal{V},\mathcal{U}\right) , \\
\text{for }\delta \left( \mathcal{V},\mathcal{U}\right) & =0,\mathcal{V}=%
\mathcal{U},\text{and }\delta \left( \mathcal{V},\mathcal{U}\right) =1,%
\mathcal{V}\bot \mathcal{U}.
\end{align*}

\subsection{Problem formulation}

With the aid of the defined image/kernel subspace and orthogonal projection
operator, we are now in the position to concretise fault detection problem
in terms of projection-based classification, as sketched in Figure 1. Given
system measurement vector $\left[ 
\begin{array}{c}
u \\ 
y%
\end{array}%
\right] \in \mathcal{H}_{2},$ find its orthogonal projection onto image
subspace $\mathcal{I}_{G},\mathcal{P}_{\mathcal{I}_{G}},$ and further
determine the distance $dist\left( \left[ 
\begin{array}{c}
u \\ 
y%
\end{array}%
\right] ,\mathcal{I}_{G}\right) $ so that a decision can be made on the
basis of the distance value with respect to an established threshold. In
this regard, the following problems should be solved at first:

\begin{itemize}
\item definition of orthogonal projection operator and computation of $%
dist\left( \left[ 
\begin{array}{c}
u \\ 
y%
\end{array}%
\right] ,\mathcal{I}_{G}\right) ,$

\item online realisation algorithm towards constructing a fault detection
system, and

\item determination of threshold. In our work, threshold is to be set to
guarantee that model uncertainties will not cause false alarms.
\end{itemize}

This work is essential to establish the intended projection-based fault
diagnosis framework. The first application in this framework is detecting
(parametric) faults in feedback control systems, a challenging and open
issue. As further topics, we will study fault detection and isolation,
formulated as binary and multi-class classification problems, in the
projection-based framework.

\bigskip

In our work, attention will be paid to comparison study with the
well-established observer-based fault diagnosis methods when possible. In
this context, we will propose a modified projection-based fault detection
scheme that is comparable with an observer-based fault detection system with
regard to online computation but delivers better detection performance.
Considering that realisation of the inner product defined in (\ref{eq2-12a})
requires, theoretically, data over $\left[ 0,\infty \right) ,$ a further
modified projection-based method will be developed, which allows an optimal
fault detection over finite time interval. This method will enable us to
realise a data-driven implementation of projection-based fault detection as
well. To illustrate and demonstrate the methods developed in our work,
experimental study on the laboratory three-tank system and some achieved
results will be finally presented.

\section{Basic fault detection methods based on orthogonal projection}

In this section, we apply orthogonal projection technique to achieving fault
detection. Our major focus is on optimally detecting faults in dynamic
systems with model uncertainties.

\subsection{Orthogonal projection-based residual generation}

It is a known result \cite{Georgiou88} that the orthogonal projection onto
the image subspace $\mathcal{I}_{G}$ is given by 
\begin{equation}
\mathcal{P}_{\mathcal{I}_{G}}=\mathcal{L}_{I_{G}}\mathcal{L}_{I_{G}}^{\ast },%
\mathcal{L}_{I_{G}}^{\ast }:\mathcal{H}_{2}\rightarrow \mathcal{H}_{2},%
\mathcal{L}_{I_{G}}^{\ast }=\mathcal{P}_{\mathcal{H}_{2}}\mathcal{L}%
_{I_{G}^{\sim }}.  \label{eq3-1}
\end{equation}%
Correspondingly, the projection of a data vector $\left[ 
\begin{array}{c}
u \\ 
y%
\end{array}%
\right] \in \mathcal{H}_{2}$ onto the image subspace is 
\begin{equation*}
p_{\mathcal{I}_{G}}=\mathcal{P}_{\mathcal{I}_{G}}\left[ 
\begin{array}{c}
u \\ 
y%
\end{array}%
\right] .
\end{equation*}%
The difference between $\left[ 
\begin{array}{c}
u \\ 
y%
\end{array}%
\right] $ and $p_{\mathcal{I}_{G}}$ as its estimate, 
\begin{equation}
\left[ 
\begin{array}{c}
u \\ 
y%
\end{array}%
\right] -p_{\mathcal{I}_{G}}=\left( \mathcal{I}-\mathcal{P}_{\mathcal{I}%
_{G}}\right) \left[ 
\begin{array}{c}
u \\ 
y%
\end{array}%
\right] =\left( \mathcal{I}-\mathcal{L}_{I_{G}}\mathcal{P}_{\mathcal{H}_{2}}%
\mathcal{L}_{I_{G}^{\sim }}\right) \left[ 
\begin{array}{c}
u \\ 
y%
\end{array}%
\right] ,  \label{eq3-2}
\end{equation}%
indicates how far the measurement (data) vector $\left[ 
\begin{array}{c}
u \\ 
y%
\end{array}%
\right] $ deviates from the nominal system dynamics expressed by the system
SIR. Thus, its $l_{2}$-norm is the distance between the data vector and the
image subspace. In this regard, we introduce the following definition.

\begin{Def}
\label{Def3-1}Given the model (\ref{eq2-1}) and the corresponding operator $%
\mathcal{L}_{I_{G}}$, system $\left( \mathcal{I}-\mathcal{P}_{\mathcal{I}%
_{G}}\right) \left[ 
\begin{array}{c}
u \\ 
y%
\end{array}%
\right] $ is called projection-based residual generator with output 
\begin{equation}
r_{\mathcal{I}_{G}}=\left[ 
\begin{array}{c}
u \\ 
y%
\end{array}%
\right] -p_{\mathcal{I}_{G}}=\left( \mathcal{I}-\mathcal{P}_{\mathcal{I}%
_{G}}\right) \left[ 
\begin{array}{c}
u \\ 
y%
\end{array}%
\right]  \label{eq3-3}
\end{equation}%
as projection-based residual. The $l_{2}$-norm of $r_{\mathcal{I}_{G}},$%
\begin{equation}
\left\Vert r_{\mathcal{I}_{G}}\right\Vert _{2}=dist\left( \left[ 
\begin{array}{c}
u \\ 
y%
\end{array}%
\right] ,\mathcal{I}_{G}\right)  \label{eq3-3a}
\end{equation}%
is the distance from $\left[ 
\begin{array}{c}
u \\ 
y%
\end{array}%
\right] $ to $\mathcal{I}_{G}$.
\end{Def}

It should be remarked that $\mathcal{I}-\mathcal{L}_{I_{G}}\mathcal{P}_{%
\mathcal{H}_{2}}\mathcal{L}_{I_{G}^{\sim }}$ defines the projection onto $%
\mathcal{I}_{G}^{\bot }$. That is%
\begin{eqnarray*}
\mathcal{I}_{G}^{\bot } &=&\left\{ r_{\mathcal{I}_{G}}:r_{\mathcal{I}%
_{G}}=\left( \mathcal{I}-\mathcal{L}_{I_{G}}\mathcal{P}_{\mathcal{H}_{2}}%
\mathcal{L}_{I_{G}^{\sim }}\right) \left[ 
\begin{array}{c}
u \\ 
y%
\end{array}%
\right] ,\left[ 
\begin{array}{c}
u \\ 
y%
\end{array}%
\right] \in \mathcal{H}_{2}\right\} , \\
\mathcal{H}_{2} &=&\mathcal{I}_{G}\oplus \mathcal{I}_{G}^{\bot }.
\end{eqnarray*}%
In other words, the residual subspace is in fact $\mathcal{I}_{G}^{\bot }$.
Under consideration of our study purpose, we prefer the term ``residual" over
projection onto the orthogonal complement of $\mathcal{I}_{G}$.

\bigskip

Remembering the natural relation between SKR and the residual generation as
well as the known fact that $\mathcal{K}_{G}=\mathcal{I}_{G}$ \cite%
{Vinnicombe-book}, it is of considerable interest to investigate relations
among the operator $\mathcal{L}_{I_{G}},$ the orthogonal projection $%
\mathcal{P}_{\mathcal{I}_{G}}$ and the system SKR.

\begin{Le}
\label{Cor3-1}Let $\mathcal{L}_{K_{G}}$ and $\mathcal{L}_{K_{G}^{\sim }}$ be
Laurent operators with symbol $K_{G}$ and $K_{G^{\sim }},$ respectively. It
holds%
\begin{gather}
\mathcal{L}_{K_{G}^{\sim }}\mathcal{L}_{K_{G}}+\mathcal{L}_{I_{G}}\mathcal{L}%
_{I_{G}^{\sim }}=\mathcal{I},  \label{eq3-41} \\
\mathcal{I}-\mathcal{P}_{\mathcal{H}_{2}}\mathcal{L}_{K_{G}^{\sim }}\mathcal{%
L}_{K_{G}}=\mathcal{P}_{\mathcal{I}_{G}}+\mathcal{P}_{\mathcal{H}_{2}}%
\mathcal{L}_{I_{G}}\mathcal{P}_{\mathcal{H}_{2}^{\bot }}\mathcal{L}%
_{I_{G}^{\sim }}.  \label{eq3-42}
\end{gather}
\end{Le}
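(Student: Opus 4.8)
The plan is to reduce both identities to a single pointwise (frequency-domain) statement about the symbols, then transfer it to the Laurent operators for (\ref{eq3-41}) and, for (\ref{eq3-42}), compress it to $\mathcal{H}_{2}$ by inserting the resolution of the identity $\mathcal{I}=\mathcal{P}_{\mathcal{H}_{2}}+\mathcal{P}_{\mathcal{H}_{2}^{\bot }}$.

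First I would collect the three defining relations of the normalised representations. The normalisation conditions give $K_{G}K_{G}^{\sim }=I$ and $I_{G}^{\sim }I_{G}=I$, and I claim in addition that $K_{G}I_{G}=0$. The latter follows from the identity $\mathcal{K}_{G}=\mathcal{I}_{G}$ quoted before the lemma: each column of $I_{G}$ generates an element of the image subspace, which lies in the kernel subspace, so $K_{G}(I_{G}v)=0$ for every $v$, i.e. $K_{G}I_{G}=0$ (equivalently this is a restatement of the Bezout relation $\hat{M}_{0}N_{0}=\hat{N}_{0}M_{0}$). Now form the square $(p+m)\times (p+m)$ symbol $U$ whose upper block of rows is $K_{G}$ and whose lower block is $I_{G}^{\sim }$. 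Using the three relations together with $I_{G}^{\sim }K_{G}^{\sim }=(K_{G}I_{G})^{\sim }=0$, the four blocks of $UU^{\sim }$ are $I$, $0$, $0$, $I$, so $UU^{\sim }=I$ pointwise on the unit circle. Since $U$ is square, a right inverse is also a left inverse, whence $U^{\sim }U=I$ as well; reading off the single block of $U^{\sim }U$ gives the symbol identity
\begin{equation*}
K_{G}^{\sim }K_{G}+I_{G}I_{G}^{\sim }=I .
\end{equation*}

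Second, I lift this to Laurent operators. Multiplication operators satisfy $\mathcal{L}_{A}\mathcal{L}_{B}=\mathcal{L}_{AB}$, $\mathcal{L}_{A}+\mathcal{L}_{B}=\mathcal{L}_{A+B}$ and $\mathcal{L}_{I}=\mathcal{I}$, so applying $\mathcal{L}_{(\cdot )}$ to the symbol identity gives $\mathcal{L}_{K_{G}^{\sim }}\mathcal{L}_{K_{G}}+\mathcal{L}_{I_{G}}\mathcal{L}_{I_{G}^{\sim }}=\mathcal{I}$, which is (\ref{eq3-41}).

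Third, for (\ref{eq3-42}) I would apply (\ref{eq3-41}) to an arbitrary $f\in \mathcal{H}_{2}$ and project with $\mathcal{P}_{\mathcal{H}_{2}}$. Since $\mathcal{P}_{\mathcal{H}_{2}}f=f$, rearranging yields $\bigl(\mathcal{I}-\mathcal{P}_{\mathcal{H}_{2}}\mathcal{L}_{K_{G}^{\sim }}\mathcal{L}_{K_{G}}\bigr)f=\mathcal{P}_{\mathcal{H}_{2}}\mathcal{L}_{I_{G}}\mathcal{L}_{I_{G}^{\sim }}f$. Now insert $\mathcal{I}=\mathcal{P}_{\mathcal{H}_{2}}+\mathcal{P}_{\mathcal{H}_{2}^{\bot }}$ between $\mathcal{L}_{I_{G}}$ and $\mathcal{L}_{I_{G}^{\sim }}$, splitting the right-hand side into $\mathcal{P}_{\mathcal{H}_{2}}\mathcal{L}_{I_{G}}\mathcal{P}_{\mathcal{H}_{2}}\mathcal{L}_{I_{G}^{\sim }}f$ plus $\mathcal{P}_{\mathcal{H}_{2}}\mathcal{L}_{I_{G}}\mathcal{P}_{\mathcal{H}_{2}^{\bot }}\mathcal{L}_{I_{G}^{\sim }}f$. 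The second summand is exactly the last term of (\ref{eq3-42}). For the first, because $I_{G}$ is stable (analytic) the operator $\mathcal{L}_{I_{G}}$ maps $\mathcal{H}_{2}$ into $\mathcal{H}_{2}$, so $\mathcal{P}_{\mathcal{H}_{2}}\mathcal{L}_{I_{G}}$ acts as $\mathcal{L}_{I_{G}}$ on $\mathcal{H}_{2}$; combined with $\mathcal{L}_{I_{G}}^{\ast }=\mathcal{P}_{\mathcal{H}_{2}}\mathcal{L}_{I_{G}^{\sim }}$ from (\ref{eq3-1}), the first summand collapses to $\mathcal{L}_{I_{G}}\mathcal{L}_{I_{G}}^{\ast }f=\mathcal{P}_{\mathcal{I}_{G}}f$. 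Since $f$ was arbitrary, this establishes (\ref{eq3-42}).

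I expect the genuine linear algebra (the all-pass property of $U$) and the operator lifting to be routine; the step requiring real care is the bookkeeping of domains in the last paragraph, namely keeping straight which factors are honest Laurent operators on $\mathcal{L}_{2}(-\infty ,\infty )$ and which are their compressions to $\mathcal{H}_{2}$, and in particular justifying that the $\mathcal{P}_{\mathcal{H}_{2}}$-part recombines into $\mathcal{P}_{\mathcal{I}_{G}}$. That collapse rests squarely on the analyticity of $I_{G}$ and on the representation (\ref{eq3-1}) of $\mathcal{P}_{\mathcal{I}_{G}}$, so I would state those two ingredients explicitly before invoking them.
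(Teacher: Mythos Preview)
Your proof is correct and follows essentially the same route as the paper: the paper also derives (\ref{eq3-41}) from the all-pass identity for the block matrix $\left[\begin{smallmatrix} I_{G}^{\sim }\\ K_{G}\end{smallmatrix}\right]\left[\begin{smallmatrix} I_{G} & K_{G}^{\sim }\end{smallmatrix}\right]=I$ (a row reordering of your $U$), and then obtains (\ref{eq3-42}) by compressing with $\mathcal{P}_{\mathcal{H}_{2}}$ and splitting $\mathcal{L}_{I_{G}}\mathcal{L}_{I_{G}^{\sim }}$ via $\mathcal{I}=\mathcal{P}_{\mathcal{H}_{2}}+\mathcal{P}_{\mathcal{H}_{2}^{\bot }}$, identifying the $\mathcal{H}_{2}$-piece with $\mathcal{P}_{\mathcal{I}_{G}}$. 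Your additional remarks justifying $K_{G}I_{G}=0$ and the analyticity of $I_{G}$ make explicit what the paper leaves implicit, but the argument is the same.
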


\begin{proof}
Identity (\ref{eq3-41}) is the result of 
\begin{equation*}
\left[ 
\begin{array}{cc}
M_{0}^{\sim } & N_{0}^{\sim } \\ 
-\hat{N}_{0} & \hat{M}_{0}%
\end{array}%
\right] \left[ 
\begin{array}{cc}
M_{0} & -\hat{N}_{0}^{\sim } \\ 
N_{0} & \hat{M}_{0}^{\sim }%
\end{array}%
\right] =\left[ 
\begin{array}{cc}
M_{0} & -\hat{N}_{0}^{\sim } \\ 
N_{0} & \hat{M}_{0}^{\sim }%
\end{array}%
\right] \left[ 
\begin{array}{cc}
M_{0}^{\sim } & N_{0}^{\sim } \\ 
-\hat{N}_{0} & \hat{M}_{0}%
\end{array}%
\right] =\left[ 
\begin{array}{cc}
I & 0 \\ 
0 & I%
\end{array}%
\right] .
\end{equation*}%
It follows from (\ref{eq3-41}) that  
\begin{equation*}
\mathcal{P}_{\mathcal{H}_{2}}\mathcal{L}_{K_{G}^{\sim }}\mathcal{L}_{K_{G}}=%
\mathcal{P}_{\mathcal{H}_{2}}\left( \mathcal{I}-\mathcal{L}_{I_{G}}\mathcal{L%
}_{I_{G}^{\sim }}\right) ,
\end{equation*}%
which yields  
\begin{align*}
\mathcal{I}-\mathcal{P}_{\mathcal{H}_{2}}\mathcal{L}_{K_{G}^{\sim }}\mathcal{%
L}_{K_{G}}& =\mathcal{P}_{\mathcal{H}_{2}}\mathcal{L}_{I_{G}}\mathcal{L}%
_{I_{G}^{\sim }}=\mathcal{P}_{\mathcal{H}_{2}}\mathcal{L}_{I_{G}}\mathcal{P}%
_{\mathcal{H}_{2}}\mathcal{L}_{I_{G}^{\sim }}+\mathcal{P}_{\mathcal{H}_{2}}%
\mathcal{L}_{I_{G}}\mathcal{P}_{\mathcal{H}_{2}^{\bot }}\mathcal{L}%
_{I_{G}^{\sim }} \\
& =\mathcal{P}_{\mathcal{I}_{G}}+\mathcal{P}_{\mathcal{H}_{2}}\mathcal{L}%
_{I_{G}}\mathcal{P}_{\mathcal{H}_{2}^{\bot }}\mathcal{L}_{I_{G}^{\sim }}.
\end{align*}
\end{proof}

Equation (\ref{eq3-42}) illustrates that, although $\mathcal{I}-\mathcal{P}_{%
\mathcal{H}_{2}}\mathcal{L}_{K_{G}^{\sim }}\mathcal{L}_{K_{G}}$ is a self
adjoint operator that maps vectors in $\mathcal{H}_{2}$ to $\mathcal{H}_{2},$
it is not a projection operator. The operator $\mathcal{P}_{\mathcal{H}_{2}}%
\mathcal{L}_{I_{G}}\mathcal{P}_{\mathcal{H}_{2}^{\bot }}\mathcal{L}%
_{I_{G}^{\sim }}$ and relations (\ref{eq3-41})-(\ref{eq3-42}) will play an
important role in our subsequent work.

\subsection{Residual generators and implementation algorithms}

In this subsection, online computation issues of $\left\Vert r_{\mathcal{I}%
_{G}}\right\Vert _{2}$ are addressed. It follows from (\ref{eq3-1}) and (\ref%
{eq3-2}) that 
\begin{equation}
\left\Vert r_{\mathcal{I}_{G}}\right\Vert _{2}^{2}=\left\Vert \left[ 
\begin{array}{c}
u \\ 
y%
\end{array}%
\right] \right\Vert _{2}^{2}-\left\Vert \mathcal{L}_{I_{G}}\mathcal{P}_{%
\mathcal{H}_{2}}\mathcal{L}_{I_{G}^{\sim }}\left[ 
\begin{array}{c}
u \\ 
y%
\end{array}%
\right] \right\Vert _{2}^{2}=\left\Vert \left[ 
\begin{array}{c}
u \\ 
y%
\end{array}%
\right] \right\Vert _{2}^{2}-\left\Vert \mathcal{P}_{\mathcal{H}_{2}}%
\mathcal{L}_{I_{G}^{\sim }}\left[ 
\begin{array}{c}
u \\ 
y%
\end{array}%
\right] \right\Vert _{2}^{2}.  \label{eq3-43}
\end{equation}%
The relation 
\begin{equation*}
\mathcal{L}_{I_{G}^{\sim }}=\mathcal{P}_{\mathcal{H}_{2}}\mathcal{L}%
_{I_{G}^{\sim }}+\mathcal{P}_{\mathcal{H}_{2}^{\bot }}\mathcal{L}%
_{I_{G}^{\sim }}
\end{equation*}%
leads to 
\begin{gather}
\left\Vert \mathcal{P}_{\mathcal{H}_{2}}\mathcal{L}_{I_{G}^{\sim }}\left[ 
\begin{array}{c}
u \\ 
y%
\end{array}%
\right] \right\Vert _{2}^{2}=\left\Vert \mathcal{L}_{I_{G}^{\sim }}\left[ 
\begin{array}{c}
u \\ 
y%
\end{array}%
\right] \right\Vert _{2}^{2}-\left\Vert \mathcal{P}_{\mathcal{H}_{2}^{\bot }}%
\mathcal{L}_{I_{G}^{\sim }}\left[ 
\begin{array}{c}
u \\ 
y%
\end{array}%
\right] \right\Vert _{2}^{2}  \notag \\
\Longrightarrow \left\Vert r_{\mathcal{I}_{G}}\right\Vert
_{2}^{2}=\left\Vert \left[ 
\begin{array}{c}
u \\ 
y%
\end{array}%
\right] \right\Vert _{2}^{2}-\left\Vert \mathcal{L}_{I_{G}^{\sim }}\left[ 
\begin{array}{c}
u \\ 
y%
\end{array}%
\right] \right\Vert _{2}^{2}+\left\Vert \mathcal{P}_{\mathcal{H}_{2}^{\bot }}%
\mathcal{L}_{I_{G}^{\sim }}\left[ 
\begin{array}{c}
u \\ 
y%
\end{array}%
\right] \right\Vert _{2}^{2}.  \label{eq3-23}
\end{gather}%
Operator 
\begin{equation*}
\mathcal{H}_{I_{G}^{\sim }}:=\mathcal{P}_{\mathcal{H}_{2}^{\bot }}\mathcal{L}%
_{I_{G}^{\sim }}:\mathcal{H}_{2}\rightarrow \mathcal{H}_{2}^{\bot }
\end{equation*}%
is the so-called Hankel operator with symbol $I_{G}^{\sim }$ \cite{Francis87}
and serves as a filter. Given the state space representation of $I_{G}^{\sim
},$%
\begin{align}
\xi (k-1)& =\bar{A}\xi (k)+\bar{B}\left[ 
\begin{array}{c}
u(k) \\ 
y(k)%
\end{array}%
\right] ,\xi \in \mathbb{R}^{n},\bar{A}=\left( A+BF_{0}\right) ^{T},\bar{B}=%
\left[ 
\begin{array}{cc}
F_{0}^{T} & \left( C+DF\right) ^{T}%
\end{array}%
\right] ,  \label{eq3-20a} \\
\varsigma (k)& =\bar{C}\xi (k)+\bar{D}\left[ 
\begin{array}{c}
u(k) \\ 
y(k)%
\end{array}%
\right] \in \mathbb{R}^{p},\bar{C}=B^{T},\bar{D}=\left[ 
\begin{array}{cc}
V_{0}^{T} & \left( DV_{0}\right) ^{T}%
\end{array}%
\right]  \label{eq3-20b}
\end{align}%
with $\varsigma (k)$ as the output of system $I_{G}^{\sim }\left[ 
\begin{array}{c}
u \\ 
y%
\end{array}%
\right] ,$ the computation of $\left\Vert \mathcal{L}_{I_{G}^{\sim }}\left[ 
\begin{array}{c}
u \\ 
y%
\end{array}%
\right] \right\Vert _{2}$ is straightforward. Moreover, 
\begin{equation*}
\varsigma _{\mathcal{H}}:=\mathcal{H}_{I_{G}^{\sim }}\left[ 
\begin{array}{c}
u \\ 
y%
\end{array}%
\right]
\end{equation*}%
can be computed by means of discrete convolution as follows%
\begin{equation*}
\varsigma _{\mathcal{H}}(k)=-\sum\limits_{i=0}^{\infty }\bar{C}\bar{A}^{k+i}%
\bar{B}\left[ 
\begin{array}{c}
u(i) \\ 
y(i)%
\end{array}%
\right] ,k\in \left( -\infty ,0\right] .
\end{equation*}%
As demonstrated in \cite{Francis87}, $\varsigma _{\mathcal{H}}$ can also be
written as 
\begin{equation}
\varsigma _{\mathcal{H}}(k)=\left( \Psi _{o}\Psi _{c}\left[ 
\begin{array}{c}
u \\ 
y%
\end{array}%
\right] \right) (k),k\in \left( -\infty ,0\right] ,  \label{eq3-46}
\end{equation}%
where $\Psi _{o}$ and $\Psi _{c},$%
\begin{align*}
\Psi _{c}\left[ 
\begin{array}{c}
u \\ 
y%
\end{array}%
\right] & =-\sum\limits_{i=0}^{\infty }\bar{A}^{i}\bar{B}\left[ 
\begin{array}{c}
u(i) \\ 
y(i)%
\end{array}%
\right] ,\left( \Psi _{o}x\right) (k)=\bar{C}\bar{A}^{k}x(k),k\in \left(
-\infty ,0\right] , \\
& \Longrightarrow \mathcal{H}_{I_{G}^{\sim }}=\Psi _{o}\Psi _{c},
\end{align*}%
are controllability and observability operators, respectively.

\bigskip

Note that, according to (\ref{eq3-41}) given in Lemma \ref{Cor3-1}, $%
\left\Vert r_{\mathcal{I}_{G}}\right\Vert _{2}$ can be written as 
\begin{align}
\left\Vert r_{\mathcal{I}_{G}}\right\Vert _{2}^{2}& =\left\Vert \mathcal{L}%
_{K_{G}^{\sim }}\mathcal{L}_{K_{G}}\left[ 
\begin{array}{c}
u \\ 
y%
\end{array}%
\right] \right\Vert _{2}^{2}+\left\Vert \mathcal{P}_{\mathcal{H}_{2}^{\bot }}%
\mathcal{L}_{I_{G}^{\sim }}\left[ 
\begin{array}{c}
u \\ 
y%
\end{array}%
\right] \right\Vert _{2}^{2}  \notag \\
& =\left\Vert \mathcal{L}_{K_{G}}\left[ 
\begin{array}{c}
u \\ 
y%
\end{array}%
\right] \right\Vert _{2}^{2}+\left\Vert \mathcal{P}_{\mathcal{H}_{2}^{\bot }}%
\mathcal{L}_{I_{G}^{\sim }}\left[ 
\begin{array}{c}
u \\ 
y%
\end{array}%
\right] \right\Vert _{2}^{2}.  \label{eq3-24}
\end{align}%
Recalling that $\mathcal{L}_{K_{G}}\left[ 
\begin{array}{c}
u \\ 
y%
\end{array}%
\right] $ is equivalent to the observer-based residual generator, i.e.%
\begin{equation}
r_{0}=\mathcal{L}_{K_{G}}\left[ 
\begin{array}{c}
u \\ 
y%
\end{array}%
\right] =K_{G}\left[ 
\begin{array}{c}
u \\ 
y%
\end{array}%
\right] =\left[ 
\begin{array}{cc}
-\hat{N}_{0} & \hat{M}_{0}%
\end{array}%
\right] \left[ 
\begin{array}{c}
u \\ 
y%
\end{array}%
\right] ,  \label{eq2-7c}
\end{equation}%
it holds 
\begin{equation*}
\left\Vert \mathcal{L}_{K_{G}}\left[ 
\begin{array}{c}
u \\ 
y%
\end{array}%
\right] \right\Vert _{2}=\left\Vert r_{0}\right\Vert _{2},
\end{equation*}%
and thus 
\begin{equation}
\left\Vert r_{\mathcal{I}_{G}}\right\Vert _{2}^{2}=\left\Vert
r_{0}\right\Vert _{2}^{2}+\left\Vert \mathcal{P}_{\mathcal{H}_{2}^{\bot }}%
\mathcal{L}_{I_{G}^{\sim }}\left[ 
\begin{array}{c}
u \\ 
y%
\end{array}%
\right] \right\Vert _{2}^{2}.  \label{eq3-4}
\end{equation}%
Equation (\ref{eq3-4}) reveals the relation between the projection-based and
observer-based residual generation. The latter, as well-known, is the state
of the art technique for fault detection in dynamic systems. This fact
motivates us to have a close look at the term 
\begin{equation*}
\left\Vert \mathcal{L}_{I_{G}}\mathcal{P}_{\mathcal{H}_{2}^{\bot }}\mathcal{L%
}_{I_{G}^{\sim }}\left[ 
\begin{array}{c}
u \\ 
y%
\end{array}%
\right] \right\Vert _{2}=\left\Vert \mathcal{P}_{\mathcal{H}_{2}^{\bot }}%
\mathcal{L}_{I_{G}^{\sim }}\left[ 
\begin{array}{c}
u \\ 
y%
\end{array}%
\right] \right\Vert _{2}
\end{equation*}%
that marks the difference between $\left\Vert r_{\mathcal{I}_{G}}\right\Vert
_{2}$ and $\left\Vert r_{0}\right\Vert _{2}.$ Note that, on the one hand, 
\begin{equation}
\mathcal{L}_{I_{G}}\mathcal{P}_{\mathcal{H}_{2}^{\bot }}\mathcal{L}%
_{I_{G}^{\sim }}\left[ 
\begin{array}{c}
u \\ 
y%
\end{array}%
\right] \in \mathcal{L}_{2\text{ }}\text{and }\mathcal{L}_{K_{G}}\mathcal{L}%
_{I_{G}}=0.  \label{eq3-44}
\end{equation}%
Moreover, 
\begin{equation*}
\forall \left[ 
\begin{array}{c}
u \\ 
y%
\end{array}%
\right] \in \mathcal{I}_{G},\exists v\in \mathcal{H}_{2},\text{ s.t. }\left[ 
\begin{array}{c}
u \\ 
y%
\end{array}%
\right] =\mathcal{L}_{I_{G}}v,
\end{equation*}%
which leads to 
\begin{equation*}
\mathcal{L}_{I_{G}}\mathcal{P}_{\mathcal{H}_{2}^{\bot }}\mathcal{L}%
_{I_{G}^{\sim }}\left[ 
\begin{array}{c}
u \\ 
y%
\end{array}%
\right] =\mathcal{L}_{I_{G}}\mathcal{P}_{\mathcal{H}_{2}^{\bot }}\mathcal{L}%
_{I_{G}^{\sim }}\mathcal{L}_{I_{G}}v=0.
\end{equation*}%
On the other hand, for $\left[ 
\begin{array}{c}
u \\ 
y%
\end{array}%
\right] \in \mathcal{I}_{G}^{\bot }\subset \mathcal{H}_{2},$ it is possible
that 
\begin{equation}
\mathcal{L}_{I_{G}}\mathcal{P}_{\mathcal{H}_{2}^{\bot }}\mathcal{L}%
_{I_{G}^{\sim }}\left[ 
\begin{array}{c}
u \\ 
y%
\end{array}%
\right] \neq 0.  \label{eq3-45}
\end{equation}%
Relations (\ref{eq3-44})-(\ref{eq3-45}) showcase that

\begin{itemize}
\item changes caused by $\left[ 
\begin{array}{c}
u \\ 
y%
\end{array}%
\right] \in \mathcal{I}_{G}^{\bot }\subset \mathcal{H}_{2}$ and satisfying (%
\ref{eq3-45}) cannot be detected using an observer-based residual generator,
since they do not lead to any variation in $r_{0};$

\item in against, it is possible to detect these changes using $r_{\mathcal{I%
}_{G}},$ as described by (\ref{eq3-4}).
\end{itemize}

In the context of fault detection, it can thus be claimed that the residual
signal $r_{\mathcal{I}_{G}}$ is more sensitive to faults than $r_{0}.$

\bigskip

Next, we address the interpretation of $\left\Vert \mathcal{H}_{I_{G}^{\sim
}}\left[ 
\begin{array}{c}
u \\ 
y%
\end{array}%
\right] \right\Vert _{2}=\left\Vert \mathcal{P}_{\mathcal{H}_{2}^{\bot }}%
\mathcal{L}_{I_{G}^{\sim }}\left[ 
\begin{array}{c}
u \\ 
y%
\end{array}%
\right] \right\Vert _{2}$ from the fault detection aspect. To this end,
consider the relation%
\begin{equation*}
\left\Vert \mathcal{H}_{I_{G}^{\sim }}\left[ 
\begin{array}{c}
u \\ 
y%
\end{array}%
\right] \right\Vert _{2}=\left\Vert \mathcal{H}_{I_{G}^{\sim }}^{\ast
}\varsigma _{\mathcal{H}}\right\Vert _{2},\varsigma _{\mathcal{H}}\in 
\mathcal{H}_{2}^{\bot }
\end{equation*}%
where $\mathcal{H}_{I_{G}^{\sim }}^{\ast }$ is the adjoint of $\mathcal{H}%
_{I_{G}^{\sim }}.$ As demonstrated in \cite{Francis87}, based on (\ref%
{eq3-46}) $\mathcal{H}_{I_{G}^{\sim }}^{\ast }$ can be written as 
\begin{equation*}
\mathcal{H}_{I_{G}^{\sim }}^{\ast }:\mathcal{H}_{2}^{\bot }\rightarrow 
\mathcal{H}_{2},\mathcal{H}_{I_{G}^{\sim }}^{\ast }=\left( \Psi _{o}\Psi
_{c}\right) ^{\ast }=\Psi _{c}^{\ast }\Psi _{o}^{\ast },
\end{equation*}%
and correspondingly $\mathcal{H}_{I_{G}^{\sim }}^{\ast }\varsigma _{\mathcal{%
H}}$ yields%
\begin{align}
x_{0}& =\Psi _{o}^{\ast }\varsigma _{\mathcal{H}}=\sum\limits_{k=-\infty
}^{0}\left( \bar{A}^{T}\right) ^{k}\bar{C}^{T}\varsigma _{\mathcal{H}}(k),
\label{eq3-47a} \\
\left[ 
\begin{array}{c}
u(k) \\ 
y(k)%
\end{array}%
\right] & =\left( \Psi _{c}^{\ast }x_{0}\right) \left( k\right) =-\bar{B}%
^{T}\left( \bar{A}^{T}\right) ^{k}x_{0},k\in \left[ 0,\infty \right) .
\label{eq3-47b}
\end{align}%
In this regard, $\left\Vert \mathcal{P}_{\mathcal{H}_{2}^{\bot }}\mathcal{L}%
_{I_{G}^{\sim }}\left[ 
\begin{array}{c}
u \\ 
y%
\end{array}%
\right] \right\Vert _{2}$ can be interpreted as the influence of deviations
from the (nominal) image subspace in the past (i.e. over the time interval $%
\left( -\infty ,0\right] $), which affects the dynamics of the residual
generator $\left( \mathcal{I}-\mathcal{P}_{\mathcal{I}_{G}}\right) \left[ 
\begin{array}{c}
u \\ 
y%
\end{array}%
\right] $ in form of the response to the corresponding changes in initial
condition (i.e. via $x_{0}$).

\bigskip

Summarising the discussion on relation (\ref{eq3-4}) and the interpretation
of $\left\Vert \mathcal{P}_{\mathcal{H}_{2}^{\bot }}\mathcal{L}_{I_{G}^{\sim
}}\left[ 
\begin{array}{c}
u \\ 
y%
\end{array}%
\right] \right\Vert _{2}$, it can be concluded that the projection-based
residual generator $\left( \mathcal{I}-\mathcal{P}_{\mathcal{I}_{G}}\right) %
\left[ 
\begin{array}{c}
u \\ 
y%
\end{array}%
\right] $ is not only efficient in detecting existing faults (i.e. over the
time interval $\left[ 0,\infty \right) $) like the observer-based residual
generator $\mathcal{L}_{K_{G}}\left[ 
\begin{array}{c}
u \\ 
y%
\end{array}%
\right] ,$ but also more capable of detecting faults in the past (i.e. over
the time interval $\left( -\infty ,0\right] $).

\subsection{Threshold setting\label{subsection3-3}}

We now consider threshold setting issues for systems with model uncertainty
described by%
\begin{align}
G& =NM^{-1}=\left( N_{0}+\Delta _{N}\right) \left( M_{0}+\Delta _{M}\right)
^{-1},I_{G}=\left[ 
\begin{array}{c}
M \\ 
N%
\end{array}%
\right] =\left[ 
\begin{array}{c}
M_{0}+\Delta _{M} \\ 
N_{0}+\Delta _{N}%
\end{array}%
\right] =I_{G_{0}}+\Delta _{I},  \label{eq3-5} \\
I_{G_{0}}& =\left[ 
\begin{array}{c}
M_{0} \\ 
N_{0}%
\end{array}%
\right] ,\Delta _{I}=\left[ 
\begin{array}{c}
\Delta _{M} \\ 
\Delta _{N}%
\end{array}%
\right]  \notag
\end{align}%
with normalised SIR $I_{G}.$ It is assumed that 
\begin{equation}
\sup \left\Vert \Delta _{I}\right\Vert _{\infty }=\delta _{\Delta _{I}}<1,
\label{eq3-5a}
\end{equation}

\begin{Rem}
Hereafter, notation $G_{0}$ is adopted for the nominal system transfer
matrix, i.e. uncertainty- and fault-free system dynamics.
\end{Rem}

On the assumption of (\ref{eq3-5a}), threshold is set to prevent false
alarms \cite{Ding2008}, namely during fault-free operations 
\begin{equation}
J=\left\Vert r_{\mathcal{I}_{G}}\right\Vert _{2}\leq J_{th}.  \label{eq3-6}
\end{equation}%
Recall that for some $\Delta _{I},$%
\begin{equation*}
\mathcal{I}_{G}=\left\{ \left[ 
\begin{array}{c}
u \\ 
y%
\end{array}%
\right] :\left[ 
\begin{array}{c}
u \\ 
y%
\end{array}%
\right] =\left[ 
\begin{array}{c}
M \\ 
N%
\end{array}%
\right] v,v\in \mathcal{H}_{2}\hspace{-2pt}\right\}
\end{equation*}%
defines an image subspace that is obviously different from $\mathcal{I}%
_{G_{0}},$%
\begin{equation*}
\mathcal{I}_{G_{0}}=\left\{ \left[ 
\begin{array}{c}
u \\ 
y%
\end{array}%
\right] :\left[ 
\begin{array}{c}
u \\ 
y%
\end{array}%
\right] =\left[ 
\begin{array}{c}
M_{0} \\ 
N_{0}%
\end{array}%
\right] v,v\in \mathcal{H}_{2}\hspace{-2pt}\right\} .
\end{equation*}%
Defining furthermore the following subspace in $\mathcal{H}_{2},$%
\begin{equation}
\mathcal{I}_{G,\delta }:=\left\{ \mathcal{I}_{G}:\left\Vert \Delta
_{I}\right\Vert _{\infty }=\left\Vert I_{G}-\hspace{-2pt}I_{G_{0}}\right%
\Vert _{\infty }\leq \delta _{\Delta _{I}}\right\} ,\hspace{-3pt}
\label{eq3-10}
\end{equation}%
threshold setting problem (\ref{eq3-6}) can be equivalently written as%
\begin{equation}
J_{th}=\sup_{\left\Vert \Delta _{I}\right\Vert _{\infty }\leq \delta
_{\Delta _{I}}}J=\sup_{\mathcal{I}_{G}\in \mathcal{I}_{G,\delta }}\sup_{%
\left[ 
\begin{array}{c}
u \\ 
y%
\end{array}%
\right] \in \mathcal{I}_{G}}dist\left( \left[ 
\begin{array}{c}
u \\ 
y%
\end{array}%
\right] ,\mathcal{I}_{G_{0}}\right) .  \label{eq3-6a}
\end{equation}%
In this regard, it becomes apparent that definition and computation of a
metric to measure the difference between two image subspaces, $I_{G}$ and $%
\hspace{-2pt}I_{G_{0}},$ are helpful for solving the threshold setting
problem (\ref{eq3-6}). To this end, we adopt the concept of gap metric
defined in (\ref{eq2-15a})-(\ref{eq2-15}) that is widely applied in robust
control theory \cite{Vinnicombe-book,Feintuch_book}.

\begin{Def}
Let 
\begin{equation}
\mathcal{I}_{G_{i}}=\left\{ \left[ 
\begin{array}{c}
u_{i} \\ 
y_{i}%
\end{array}%
\right] :\left[ 
\begin{array}{c}
u_{i} \\ 
y_{i}%
\end{array}%
\right] =\left[ 
\begin{array}{c}
M_{i} \\ 
N_{i}%
\end{array}%
\right] v,v\in \mathcal{H}_{2}\hspace{-2pt}\right\} ,i=1,2,  \notag
\end{equation}%
be two image subspaces. The directed gap $\vec{\delta}_{\mathcal{I}}\left( 
\mathcal{I}_{G_{1}},\mathcal{I}_{G_{2}}\right) $ and gap metric $\delta _{%
\mathcal{I}}\left( \mathcal{I}_{G_{1}},\mathcal{I}_{G_{2}}\right) $ are
respectively defined%
\begin{eqnarray}
\vec{\delta}_{\mathcal{I}}\left( \mathcal{I}_{G_{1}},\mathcal{I}%
_{G_{2}}\right) &=&\sup_{\left[ 
\begin{array}{c}
u_{1} \\ 
y_{1}%
\end{array}%
\right] \in \mathcal{I}_{G_{1}}}\inf_{\left[ 
\begin{array}{c}
u_{2} \\ 
y_{2}%
\end{array}%
\right] \in \mathcal{I}_{G_{2}}}\frac{\left\Vert \left[ 
\begin{array}{c}
u_{1} \\ 
y_{1}%
\end{array}%
\right] -\left[ 
\begin{array}{c}
u_{2} \\ 
y_{2}%
\end{array}%
\right] \right\Vert _{2}}{\left\Vert \left[ 
\begin{array}{c}
u_{1} \\ 
y_{1}%
\end{array}%
\right] \right\Vert _{2}}  \label{eq3-7} \\
&=&\sup_{\left[ 
\begin{array}{c}
u_{1} \\ 
y_{1}%
\end{array}%
\right] \in \mathcal{I}_{G_{1}},\left\Vert \left[ 
\begin{array}{c}
u_{1} \\ 
y_{1}%
\end{array}%
\right] \right\Vert _{2}=1}dist\left( \left[ 
\begin{array}{c}
u_{1} \\ 
y_{1}%
\end{array}%
\right] ,\mathcal{I}_{G_{2}}\right) ,  \label{eq3-7b} \\
\delta _{\mathcal{I}}\left( \mathcal{I}_{G_{1}},\mathcal{I}_{G_{2}}\right)
&=&\max \left\{ \vec{\delta}_{\mathcal{I}}\left( \mathcal{I}_{G_{1}},%
\mathcal{I}_{G_{2}}\right) ,\vec{\delta}_{\mathcal{I}}\left( \mathcal{I}%
_{G_{2}},\mathcal{I}_{G_{1}}\right) \right\} .
\end{eqnarray}
\end{Def}

The computation of $\delta _{\mathcal{I}}\left( \mathcal{I}_{G_{i}},\mathcal{%
I}_{G_{j}}\right) $ was intensively investigated, and one of the key results
is that 
\begin{equation}
\vec{\delta}_{\mathcal{I}}\left( \mathcal{I}_{G_{i}},\mathcal{I}%
_{G_{j}}\right) =\inf_{Q\in \mathcal{H}_{\infty }}\left\Vert \left[ 
\begin{array}{c}
M_{i} \\ 
N_{i}%
\end{array}%
\right] -\left[ 
\begin{array}{c}
M_{j} \\ 
N_{j}%
\end{array}%
\right] Q\right\Vert _{\infty },  \label{eq3-8}
\end{equation}%
i.e. the gap metric can be calculated by solving the model matching problem
(MMP) on the right-hand side of (\ref{eq3-8}) \cite%
{Georgiou88,Georgiou&Smith90,Vinnicombe-book}. Below, we briefly delineate (%
\ref{eq3-8}) using the result given in Lemma \ref{Cor3-1}. It follows from (%
\ref{eq2-15}) that 
\begin{equation*}
\vec{\delta}_{\mathcal{I}}\left( \mathcal{I}_{G_{i}},\mathcal{I}%
_{G_{j}}\right) =\left\Vert \left( \mathcal{I}-\mathcal{P}_{G_{j}}\right) 
\mathcal{P}_{G_{i}}\right\Vert .
\end{equation*}%
Since, according to Lemma \ref{Cor3-1}, 
\begin{eqnarray*}
\mathcal{I}-\mathcal{P}_{\mathcal{I}_{G_{j}}} &=&\mathcal{I}-\mathcal{L}%
_{I_{G_{j}}}\mathcal{L}_{I_{G_{j}}^{\sim }}+\mathcal{L}_{I_{G_{j}}}\mathcal{P%
}_{\mathcal{H}_{2}^{\mathcal{\bot }}}\mathcal{L}_{I_{G_{j}}^{\sim }}=%
\mathcal{L}_{K_{G_{j}}^{\sim }}\mathcal{L}_{K_{G_{j}}}+\mathcal{L}%
_{I_{G_{j}}}\mathcal{P}_{\mathcal{H}_{2}^{\mathcal{\bot }}}\mathcal{L}%
_{I_{G_{j}}^{\sim }} \\
&=&\left[ 
\begin{array}{cc}
\mathcal{L}_{K_{G_{j}}^{\sim }} & \mathcal{L}_{I_{G_{j}}}%
\end{array}%
\right] \left[ 
\begin{array}{c}
\mathcal{L}_{K_{G_{j}}} \\ 
\mathcal{P}_{\mathcal{H}_{2}^{\mathcal{\bot }}}\mathcal{L}_{I_{G_{j}}^{\sim
}}%
\end{array}%
\right] ,
\end{eqnarray*}%
where $K_{G_{j}}$ is the normalised SKR of $G_{j},$ it yields%
\begin{equation*}
\vec{\delta}\left( \mathcal{I}_{G_{i}},\mathcal{I}_{G_{j}}\right)
=\left\Vert \left[ 
\begin{array}{cc}
\mathcal{L}_{K_{G_{j}}^{\sim }} & \mathcal{L}_{I_{G_{j}}}%
\end{array}%
\right] \left[ 
\begin{array}{c}
\mathcal{L}_{K_{G_{j}}} \\ 
\mathcal{P}_{\mathcal{H}_{2}^{\mathcal{\bot }}}\mathcal{L}_{I_{G_{j}}^{\sim
}}%
\end{array}%
\right] \mathcal{L}_{I_{G_{i}}}\mathcal{L}_{I_{G_{i}}}^{\ast }\right\Vert .
\end{equation*}%
Noting further 
\begin{equation*}
\mathcal{L}_{I_{G_{i}}}^{\ast }\mathcal{L}_{I_{G_{i}}}=\mathcal{I},\left[ 
\begin{array}{cc}
\mathcal{L}_{K_{G_{j}}^{\sim }} & \mathcal{L}_{I_{G_{j}}}%
\end{array}%
\right] ^{\ast }\left[ 
\begin{array}{cc}
\mathcal{L}_{K_{G_{j}}^{\sim }} & \mathcal{L}_{I_{G_{j}}}%
\end{array}%
\right] =\mathcal{I},
\end{equation*}%
it turns out%
\begin{equation}
\vec{\delta}\left( \mathcal{I}_{G_{i}},\mathcal{I}_{G_{j}}\right)
=\left\Vert \left[ 
\begin{array}{c}
\mathcal{L}_{K_{G_{i}}}\mathcal{L}_{I_{G_{i}}} \\ 
\mathcal{P}_{\mathcal{H}_{2}^{\mathcal{\bot }}}\mathcal{L}_{I_{G_{j}}^{\sim
}}\mathcal{L}_{I_{G_{i}}}%
\end{array}%
\right] \right\Vert .  \label{eq3-33}
\end{equation}%
In \cite{Georgiou&Smith90}, it is proved that 
\begin{equation}
\left\Vert \left[ 
\begin{array}{c}
\mathcal{L}_{K_{G_{i}}}\mathcal{L}_{I_{G_{i}}} \\ 
\mathcal{P}_{\mathcal{H}_{2}^{\mathcal{\bot }}}\mathcal{L}_{I_{G_{j}}^{\sim
}}\mathcal{L}_{I_{G_{i}}}%
\end{array}%
\right] \right\Vert =\inf_{Q\in \mathcal{H}_{\infty }}\left\Vert
I_{G_{i}}-I_{G_{j}}Q\right\Vert _{\infty }.  \label{eq3-34}
\end{equation}%
Concerning the computation of MMP (\ref{eq3-8}), there exist
well-established algorithms, see e.g. \cite{Francis87}.

\bigskip

Now, we are in the position to solve the threshold setting problem defined
in (\ref{eq3-6}) and its re-formulation (\ref{eq3-6a}). Since 
\begin{equation*}
\forall \left[ 
\begin{array}{c}
u \\ 
y%
\end{array}%
\right] \in \mathcal{I}_{G},dist\left( \left[ 
\begin{array}{c}
u \\ 
y%
\end{array}%
\right] ,\mathcal{I}_{G_{0}}\right) \leq \vec{\delta}_{\mathcal{I}}\left( 
\mathcal{I}_{G},\mathcal{I}_{G_{0}}\right) \left\Vert \left[ 
\begin{array}{c}
u \\ 
y%
\end{array}%
\right] \right\Vert _{2},
\end{equation*}%
it holds%
\begin{equation}
\sup_{\left[ 
\begin{array}{c}
u \\ 
y%
\end{array}%
\right] \in \mathcal{I}_{G}}dist\left( \left[ 
\begin{array}{c}
u \\ 
y%
\end{array}%
\right] ,\mathcal{I}_{G_{0}}\right) =\vec{\delta}_{\mathcal{I}}\left( 
\mathcal{I}_{G},\mathcal{I}_{G_{0}}\right) \left\Vert \left[ 
\begin{array}{c}
u \\ 
y%
\end{array}%
\right] \right\Vert _{2}.  \label{eq3-9}
\end{equation}%
Using the well-established result given in \cite%
{Georgiou&Smith90,Vinnicombe-book} that for $0<\delta _{\Delta _{I}}<1,$%
\begin{align}
\mathcal{I}_{G,\delta }& =\left\{ \mathcal{I}_{G}:\left\Vert \Delta
_{I}\right\Vert _{\infty }=\left\Vert I_{G}-\hspace{-2pt}I_{G_{0}}\right%
\Vert _{\infty }\leq \delta _{\Delta _{I}}\right\}  \notag \\
& =\left\{ \mathcal{I}_{G}:\delta _{\mathcal{I}}\left( \mathcal{I}_{G},%
\mathcal{I}_{G_{0}}\right) =\vec{\delta}_{\mathcal{I}}\left( \mathcal{I}_{G},%
\mathcal{I}_{G_{0}}\right) \leq \delta _{\Delta _{I}}\right\} ,
\end{align}%
we further have%
\begin{gather}
\sup_{\left\Vert \Delta _{I}\right\Vert _{\infty }\leq \delta _{\Delta
_{I}}}J=\sup_{\mathcal{I}_{G}\in \mathcal{I}_{G,\delta }}\sup_{\left[ 
\begin{array}{c}
u \\ 
y%
\end{array}%
\right] \in \mathcal{I}_{G}}dist\left( \left[ 
\begin{array}{c}
u \\ 
y%
\end{array}%
\right] ,\mathcal{I}_{G_{0}}\right)  \notag \\
=\sup_{\left\Vert \Delta _{I}\right\Vert _{\infty }\leq \delta _{\Delta
_{I}}}\vec{\delta}_{\mathcal{I}}\left( \mathcal{I}_{G},\mathcal{I}%
_{G_{0}}\right) \left\Vert \left[ 
\begin{array}{c}
u \\ 
y%
\end{array}%
\right] \right\Vert _{2}=\delta _{\Delta _{I}}\left\Vert \left[ 
\begin{array}{c}
u \\ 
y%
\end{array}%
\right] \right\Vert _{2}.  \label{eq3-11}
\end{gather}%
If $\delta _{\Delta _{I}}\left\Vert \left[ 
\begin{array}{c}
u \\ 
y%
\end{array}%
\right] \right\Vert _{2}$ serves as a threshold, the detection performance
is determined by the ratio $\frac{J_{N}}{\delta _{\Delta _{I}}}$ with the
normalised residual $J_{N},$ 
\begin{equation*}
J_{N}=\frac{\left\Vert r_{\mathcal{I}_{G}}\right\Vert _{2}}{\left\Vert \left[
\begin{array}{c}
u \\ 
y%
\end{array}%
\right] \right\Vert _{2}},
\end{equation*}%
since it holds 
\begin{equation*}
J-\delta _{\Delta _{I}}\left\Vert \left[ 
\begin{array}{c}
u \\ 
y%
\end{array}%
\right] \right\Vert _{2}>0\Longleftrightarrow \frac{J_{N}}{\delta _{\Delta
_{I}}}>1.
\end{equation*}%
It becomes clear that if $\frac{J_{N}}{\delta _{\Delta _{I}}}$ is larger
than but close to one, false alarms can be easily triggered, for instance,
by noises in the system. In order to enhance the detection robustness,
further improvement of threshold setting is made. Consider the relation%
\begin{equation*}
\forall \mathcal{I}_{G}\subset \mathcal{I}_{G,\delta },\left\Vert r_{%
\mathcal{I}_{G}}\right\Vert _{2}^{2}\leq \delta _{\Delta _{I}}^{2}\left\Vert %
\left[ 
\begin{array}{c}
u \\ 
y%
\end{array}%
\right] \right\Vert _{2}^{2}.
\end{equation*}%
It turns out that $\forall \mathcal{I}_{G}\subset \mathcal{I}_{G,\delta },$%
\begin{align}
\left\Vert r_{\mathcal{I}_{G}}\right\Vert _{2}^{2}& \leq \delta _{\Delta
_{I}}^{2}\left( \left\Vert r_{\mathcal{I}_{G}}\right\Vert
_{2}^{2}+\left\Vert \mathcal{P}_{\mathcal{I}_{G}}\left[ 
\begin{array}{c}
u \\ 
y%
\end{array}%
\right] \right\Vert _{2}^{2}\right) \Longleftrightarrow  \notag \\
\left\Vert r_{\mathcal{I}_{G}}\right\Vert _{2}^{2}& \leq \frac{\delta
_{\Delta _{I}}^{2}}{1-\delta _{\Delta _{I}}^{2}}\left\Vert \mathcal{P}_{%
\mathcal{I}_{G}}\left[ 
\begin{array}{c}
u \\ 
y%
\end{array}%
\right] \right\Vert _{2}^{2}.  \label{eq3-12}
\end{align}%
Notice that 
\begin{equation*}
\mathcal{P}_{\mathcal{I}_{G}}\left[ 
\begin{array}{c}
u \\ 
y%
\end{array}%
\right] \in \mathcal{I}_{G_{0}}\subset \mathcal{I}_{G,\delta }.
\end{equation*}%
As a result, we have

\begin{Theo}
\label{Theo3-1}Given the model (\ref{eq2-1}) with model uncertainty
satisfying (\ref{eq3-5}) and (\ref{eq3-5a}), and suppose that
projection-based residual $r_{\mathcal{I}_{G}}$ is used for the detection
purpose, then the corresponding threshold defined by (\ref{eq3-6}) is given
by 
\begin{align}
J_{th}& =\frac{\delta _{\Delta _{I}}}{\sqrt{1-\delta _{\Delta _{I}}^{2}}}%
\left\Vert \mathcal{P}_{\mathcal{I}_{G}}\left[ 
\begin{array}{c}
u \\ 
y%
\end{array}%
\right] \right\Vert _{2}  \label{eq3-13a} \\
& =\frac{\delta _{\Delta _{I}}}{\sqrt{1-\delta _{\Delta _{I}}^{2}}}\left(
\left\Vert \left[ 
\begin{array}{c}
u \\ 
y%
\end{array}%
\right] \right\Vert _{2}^{2}-\left\Vert r_{\mathcal{I}_{G}}\right\Vert
_{2}^{2}\right) ^{1/2}.  \label{eq3-13b}
\end{align}

\begin{proof}
Since the inequality (\ref{eq3-12}) holds for all $\mathcal{I}_{G}\in 
\mathcal{I}_{G,\delta },$ it is straightforward that 
\begin{equation*}
J_{th}=\sup_{\left\Vert \Delta _{I}\right\Vert _{\infty }\leq \delta
_{\Delta _{I}}}\left\Vert r_{\mathcal{I}_{G}}\right\Vert _{2}=\frac{\delta
_{\Delta _{I}}}{\sqrt{1-\delta _{\Delta _{I}}^{2}}}\left\Vert \mathcal{P}_{%
\mathcal{I}_{G}}\left[ 
\begin{array}{c}
u \\ 
y%
\end{array}%
\right] \right\Vert _{2}.
\end{equation*}%
The threshold setting (\ref{eq3-13b}) immediately follows from the relation 
\begin{align*}
\left\Vert \left[ 
\begin{array}{c}
u \\ 
y%
\end{array}%
\right] \right\Vert _{2}^{2}& =\left\Vert \mathcal{P}_{\mathcal{I}_{G}^{\bot
}}\left[ 
\begin{array}{c}
u \\ 
y%
\end{array}%
\right] \right\Vert _{2}^{2}+\left\Vert \mathcal{P}_{\mathcal{I}_{G}}\left[ 
\begin{array}{c}
u \\ 
y%
\end{array}%
\right] \right\Vert _{2}^{2} \\
& =\left\Vert r_{\mathcal{I}_{G}}\right\Vert _{2}^{2}+\left\Vert \mathcal{P}%
_{\mathcal{I}_{G}}\left[ 
\begin{array}{c}
u \\ 
y%
\end{array}%
\right] \right\Vert _{2}^{2}.
\end{align*}
\end{proof}
\end{Theo}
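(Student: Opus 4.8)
The plan is to treat this statement as the short final step of a computation whose hard part---the pointwise bound (\ref{eq3-12})---has already been carried out just before the theorem. For every admissible perturbation, i.e. every $\mathcal{I}_{G}\in\mathcal{I}_{G,\delta}$, the text already supplies
\[
\left\Vert r_{\mathcal{I}_{G}}\right\Vert_{2}^{2}\leq\frac{\delta_{\Delta_{I}}^{2}}{1-\delta_{\Delta_{I}}^{2}}\left\Vert \mathcal{P}_{\mathcal{I}_{G}}\binom{u}{y}\right\Vert_{2}^{2}.
\]
So the proof splits into two tasks: (i) pass to the supremum over $\left\Vert\Delta_{I}\right\Vert_{\infty}\leq\delta_{\Delta_{I}}$ and argue that this bound is attained, which yields the first form (\ref{eq3-13a}); and (ii) rewrite the right-hand side in the computable form (\ref{eq3-13b}) by an orthogonal decomposition.

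For task (i) I would first record that $\delta\mapsto\delta^{2}/(1-\delta^{2})$ is strictly increasing on $[0,1)$, so the supremum of the right-hand side is governed by the extreme case $\left\Vert\Delta_{I}\right\Vert_{\infty}=\delta_{\Delta_{I}}$; taking square roots then gives the ``$\leq$'' half, namely $J_{th}\leq\frac{\delta_{\Delta_{I}}}{\sqrt{1-\delta_{\Delta_{I}}^{2}}}\left\Vert\mathcal{P}_{\mathcal{I}_{G}}\binom{u}{y}\right\Vert_{2}$. The reverse inequality, i.e. attainment, is where the real content sits, and I would lean on the remark that $\mathcal{P}_{\mathcal{I}_{G}}\binom{u}{y}\in\mathcal{I}_{G_{0}}\subset\mathcal{I}_{G,\delta}$: the projected vector is itself a legitimate fault-free signal, so the worst-case direction lives inside the admissible set. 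Combined with the gap-metric characterisation (\ref{eq3-8})--(\ref{eq3-34}) and the identity $\sup\vec{\delta}_{\mathcal{I}}(\mathcal{I}_{G},\mathcal{I}_{G_{0}})=\delta_{\Delta_{I}}$ already used in (\ref{eq3-11}), this produces a perturbation realising the directed gap exactly, so the chain leading to (\ref{eq3-12}) closes with equality.

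For task (ii) I would invoke the orthogonal splitting $\mathcal{H}_{2}=\mathcal{I}_{G}\oplus\mathcal{I}_{G}^{\bot}$. Since $r_{\mathcal{I}_{G}}=\left(\mathcal{I}-\mathcal{P}_{\mathcal{I}_{G}}\right)\binom{u}{y}=\mathcal{P}_{\mathcal{I}_{G}^{\bot}}\binom{u}{y}$, the Pythagorean identity gives
\[
\left\Vert\binom{u}{y}\right\Vert_{2}^{2}=\left\Vert\mathcal{P}_{\mathcal{I}_{G}}\binom{u}{y}\right\Vert_{2}^{2}+\left\Vert r_{\mathcal{I}_{G}}\right\Vert_{2}^{2},
\]
whence $\left\Vert\mathcal{P}_{\mathcal{I}_{G}}\binom{u}{y}\right\Vert_{2}=\bigl(\left\Vert\binom{u}{y}\right\Vert_{2}^{2}-\left\Vert r_{\mathcal{I}_{G}}\right\Vert_{2}^{2}\bigr)^{1/2}$. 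Substituting into the first form delivers (\ref{eq3-13b}); the purpose of this rewriting is that the threshold is then expressed purely through the runtime-measurable quantities $\left\Vert\binom{u}{y}\right\Vert_{2}$ and $\left\Vert r_{\mathcal{I}_{G}}\right\Vert_{2}$.

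I expect the attainment step of task (i) to be the main obstacle. Inequality (\ref{eq3-12}) alone only gives an upper bound, and upgrading it to the exact threshold requires both that the extremal perturbation inside the gap ball be admissible and that the data align with the extremal direction of the directed gap simultaneously. The monotonicity of $\delta\mapsto\delta/\sqrt{1-\delta^{2}}$ and the tightness of the underlying gap bound are what make this work, but they are exactly the points that merit care rather than being dismissed as ``straightforward''.
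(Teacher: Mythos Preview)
Your proposal is correct and follows essentially the same approach as the paper: invoke inequality (\ref{eq3-12}) to obtain the bound, then apply the Pythagorean identity $\left\Vert\binom{u}{y}\right\Vert_{2}^{2}=\left\Vert\mathcal{P}_{\mathcal{I}_{G}}\binom{u}{y}\right\Vert_{2}^{2}+\left\Vert r_{\mathcal{I}_{G}}\right\Vert_{2}^{2}$ to pass from (\ref{eq3-13a}) to (\ref{eq3-13b}). The only difference is that you are more careful about the attainment direction in task~(i), whereas the paper simply calls this ``straightforward'' based on (\ref{eq3-12}) without spelling it out; your observation that the algebraic step from $\|r_{\mathcal{I}_G}\|_2^2\leq\delta_{\Delta_I}^2\|[u;y]\|_2^2$ to (\ref{eq3-12}) is an equivalence (so tightness in (\ref{eq3-11}) transfers directly) is in fact the cleanest way to close that gap.
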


\bigskip

The threshold $J_{th}$ given in the above theorem is driven by the online
measurement and thus called adaptive threshold \cite{Ding2008}. More
importantly, the adaptive threshold (\ref{eq3-13a}) (equivalently (\ref%
{eq3-13b})) is more sensitive to faults, since, as indicated by (\ref%
{eq3-13b}), the threshold will decrease, as a fault occurs in the system and
thus the residual increases. This observation motivates us to deepen our
understanding of this important aspect. To this end, let 
\begin{equation}
J_{th,N}:=\sup_{\left\Vert \Delta _{I}\right\Vert _{\infty }\leq \delta
_{\Delta _{I}}}J_{N}=\frac{\delta _{\Delta _{I}}}{\sqrt{1-\delta _{\Delta
_{I}}^{2}}}\left( 1-\frac{\left\Vert r_{\mathcal{I}_{G}}\right\Vert _{2}^{2}%
}{\left\Vert \left[ 
\begin{array}{c}
u \\ 
y%
\end{array}%
\right] \right\Vert _{2}^{2}}\right) ^{1/2}  \label{eq3-25}
\end{equation}%
be the normalised threshold, and 
\begin{equation}
\mathcal{F}=\left\{ \left[ 
\begin{array}{c}
u \\ 
y%
\end{array}%
\right] \in \mathcal{H}_{2},\left\Vert r_{\mathcal{I}_{G}}\right\Vert
_{2}>\delta _{\Delta _{I}}\left\Vert \left[ 
\begin{array}{c}
u \\ 
y%
\end{array}%
\right] \right\Vert _{2}\right\} .  \label{eq3-20}
\end{equation}%
While $\mathcal{I}_{G,\delta }$ denotes the set of all data collected during
fault-free operations, $\mathcal{F}$ can be interpreted as data set
corresponding to faulty operations.

\begin{Theo}
\label{Theo3-2}Given the model (\ref{eq2-1}) with model uncertainty
satisfying (\ref{eq3-5}) and (\ref{eq3-5a}), and suppose that
projection-based residual $r_{\mathcal{I}_{G}}$ is generated, then we have%
\begin{eqnarray}
\forall \left[ 
\begin{array}{c}
u \\ 
y%
\end{array}%
\right] &\in &\mathcal{I}_{G}\subset \mathcal{I}_{G,\delta },J_{th,N}\geq
\delta _{\Delta _{I}},  \label{eq3-21} \\
\forall \left[ 
\begin{array}{c}
u \\ 
y%
\end{array}%
\right] &\in &\mathcal{F},J_{th,N}<\delta _{\Delta _{I}},  \label{eq3-22} \\
\forall \left[ 
\begin{array}{c}
u \\ 
y%
\end{array}%
\right] &\in &\mathcal{F},\frac{J_{N}}{J_{th,N}}>\frac{J_{N}}{\delta
_{\Delta _{I}}}>1.  \label{eq3-26}
\end{eqnarray}
\end{Theo}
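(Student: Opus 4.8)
The plan is to reduce all three assertions to one elementary observation: once the orthogonal decomposition from the proof of Theorem~\ref{Theo3-1} is inserted, the normalised threshold $J_{th,N}$ depends on the data only through the single scalar $J_{N}\in[0,1]$, and as a function of $J_{N}$ it is strictly decreasing with the distinguished value $\delta_{\Delta_I}$ attained exactly at $J_{N}=\delta_{\Delta_I}$. The whole statement is then just a comparison of the two regimes $J_{N}\le\delta_{\Delta_I}$ (fault-free) and $J_{N}>\delta_{\Delta_I}$ (faulty).

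First I would record the functional form of the threshold. Combining the definition (\ref{eq3-25}) with the identity $\|[u;y]\|_{2}^{2}=\|r_{\mathcal{I}_{G}}\|_{2}^{2}+\|\mathcal{P}_{\mathcal{I}_{G}}[u;y]\|_{2}^{2}$ used in the proof of Theorem~\ref{Theo3-1}, one gets
$$J_{th,N}=\frac{\delta_{\Delta_I}}{\sqrt{1-\delta_{\Delta_I}^{2}}}\sqrt{1-J_{N}^{2}},\qquad J_{N}=\frac{\|r_{\mathcal{I}_{G}}\|_{2}}{\|[u;y]\|_{2}}.$$
Since $r_{\mathcal{I}_{G}}$ is the image of an orthogonal projection, $J_{N}\le 1$, and by assumption (\ref{eq3-5a}) $\delta_{\Delta_I}<1$, so the right-hand side is well defined. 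Writing $g(t)=\frac{\delta_{\Delta_I}}{\sqrt{1-\delta_{\Delta_I}^{2}}}\sqrt{1-t^{2}}$, I would then isolate the two facts that carry the argument: $g$ is strictly decreasing on $[0,1]$, and $g(\delta_{\Delta_I})=\delta_{\Delta_I}$.

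The two regimes now follow by comparison with $\delta_{\Delta_I}$. For fault-free data $[u;y]\in\mathcal{I}_{G}\subset\mathcal{I}_{G,\delta}$, the gap-metric bound (\ref{eq3-9})--(\ref{eq3-11}) gives $\|r_{\mathcal{I}_{G}}\|_{2}\le\delta_{\Delta_I}\|[u;y]\|_{2}$, i.e.\ $J_{N}\le\delta_{\Delta_I}$; monotonicity of $g$ then yields $J_{th,N}=g(J_{N})\ge g(\delta_{\Delta_I})=\delta_{\Delta_I}$, which is (\ref{eq3-21}). For $[u;y]\in\mathcal{F}$ the defining inequality in (\ref{eq3-20}) reads $J_{N}>\delta_{\Delta_I}$, and strict monotonicity gives $J_{th,N}=g(J_{N})<g(\delta_{\Delta_I})=\delta_{\Delta_I}$, which is (\ref{eq3-22}).

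Finally (\ref{eq3-26}) is immediate from the previous step: on $\mathcal{F}$ the relation $J_{N}/\delta_{\Delta_I}>1$ is precisely the membership condition, while $J_{th,N}<\delta_{\Delta_I}$ from (\ref{eq3-22}) together with $J_{N}>0$ gives $J_{N}/J_{th,N}>J_{N}/\delta_{\Delta_I}$, chaining the two into $J_{N}/J_{th,N}>J_{N}/\delta_{\Delta_I}>1$. The only non-cosmetic ingredient is the fault-free bound $J_{N}\le\delta_{\Delta_I}$, which is not reproved here but imported from the gap-metric computation (\ref{eq3-8})--(\ref{eq3-11}); everything else is the monotonicity of $\sqrt{1-t^{2}}$ and a direct substitution, so I expect no real obstacle beyond keeping the orthogonal decomposition and the directions of the inequalities straight.
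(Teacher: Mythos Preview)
Your proposal is correct and follows essentially the same approach as the paper: both reduce the three claims to comparing $J_N$ with $\delta_{\Delta_I}$ and then invoking the monotone dependence of $J_{th,N}$ on $J_N$ through the expression $\frac{\delta_{\Delta_I}}{\sqrt{1-\delta_{\Delta_I}^2}}\sqrt{1-J_N^2}$. The only cosmetic difference is that you package the monotonicity by naming the auxiliary function $g$, whereas the paper works directly with the inequality $1-J_N^2\gtrless 1-\delta_{\Delta_I}^2$.
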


\begin{proof}
Relation (\ref{eq3-21}) follows directly from%
\begin{equation*}
\forall \left[ 
\begin{array}{c}
u \\ 
y%
\end{array}%
\right] \in \mathcal{I}_{G}\subset \mathcal{I}_{G,\delta },1-\frac{%
\left\Vert r_{\mathcal{I}_{G}}\right\Vert _{2}^{2}}{\left\Vert \left[ 
\begin{array}{c}
u \\ 
y%
\end{array}%
\right] \right\Vert _{2}^{2}}\geq 1-\delta _{\Delta _{I}}^{2}.
\end{equation*}%
For all $\left[ 
\begin{array}{c}
u \\ 
y%
\end{array}%
\right] $ belonging to $\mathcal{F},$ definition (\ref{eq3-20}) results in%
\begin{gather*}
1-\frac{\left\Vert r_{\mathcal{I}_{G}}\right\Vert _{2}^{2}}{\left\Vert \left[
\begin{array}{c}
u \\ 
y%
\end{array}%
\right] \right\Vert _{2}^{2}}<1-\delta _{\Delta _{I}}^{2}\Longleftrightarrow
J_{th,N}<\delta _{\Delta _{I}}, \\
\frac{J_{N}}{J_{th,N}}=\frac{J_{N}}{\delta _{\Delta _{I}}}\left( \frac{%
1-\delta _{\Delta _{I}}^{2}}{1-\frac{\left\Vert r_{\mathcal{I}%
_{G}}\right\Vert _{2}^{2}}{\left\Vert \left[ 
\begin{array}{c}
u \\ 
y%
\end{array}%
\right] \right\Vert _{2}^{2}}}\right) ^{1/2}>\frac{J_{N}}{\delta _{\Delta
_{I}}}>1.
\end{gather*}%
The theorem is thus proved. 
\end{proof}

\bigskip

From (\ref{eq3-21})-(\ref{eq3-22}) in Theorem \ref{Theo3-2} it becomes
apparent that

\begin{itemize}
\item during fault-free operations, the normalised threshold $J_{th,N}$ is
higher than or equal to the upper-bound of the uncertainty $\delta _{\Delta
_{I}}$, and

\item it decreases, as a fault defined in sense of (\ref{eq3-20}) occurs in
the system, and becomes smaller than $\delta _{\Delta _{I}}.$

\item More importantly, (\ref{eq3-26}) indicates that all these faults can
be detected with enhanced robustness thanks to the larger ratio $\frac{J_{N}%
}{J_{th,N}}.$
\end{itemize}

These properties reveal that the proposed projection-based threshold setting
is of higher robustness, which is useful to reduce false alarms caused by
noises. In Section 6, it will be demonstrated that the adaptive threshold
setting (\ref{eq3-13a}) or the normalised threshold (\ref{eq3-25}) results
in better detection performance than the threshold setting adopted in the
observer-based detection schemes \cite{LD-Automatica-2020,Li-IEEETCST2020}.

\section{Project-based fault detection in feedback control systems}

Due to wide integration of feedback control loops in automatic control
systems, fault detection in feedback control loops draws special attention
in research of model-based fault detection in dynamic systems. In this
section, we describe two fault detection schemes for feedback control
systems, which are developed respectively on the basis of (i) projection
onto the image subspace and (ii) projection onto the closed-loop image
subspace of the feedback control system under consideration. This work also
serves as examples for demonstrating the application of the basic
projection-based fault detection scheme presented in the previous section.

\subsection{System models and closed-loop dynamics}

Consider a feedback control loop 
\begin{equation}
y(z)=G(z)u(z),u(z)=K(z)y(z)+v(z)  \label{eq4-1}
\end{equation}%
with a dynamic output controller $K$ and reference vector $v.$ It is a
well-known result that all stabilising controllers can be parameterised by%
\begin{align}
K(z)& =-\left( X_{0}(z)-Q(z)\hat{N}_{0}(z)\right) ^{-1}\left( Y_{0}(z)+Q(z)%
\hat{M}_{0}(z)\right)  \label{eq4-2a} \\
& =-\left( \hat{Y}_{0}(z)+M_{0}(z)Q(z)\right) \left( \hat{X}%
_{0}(z)-N_{0}(z)Q(z)\right) ^{-1},  \label{eq4-2b}
\end{align}%
where $Q(z)\in \mathcal{RH}_{\infty }$ is the so-called parameter system,
and $\left( \hat{M}_{0},\hat{N}_{0}\right) $ and $\left( M_{0},N_{0}\right) $
are the coprime pairs of the nominal transfer function $G_{0},$ which,
together with $\left( \hat{X}_{0},\hat{Y}_{0}\right) $ and $\left(
X_{0},Y_{0}\right) ,$ are given in (\ref{eq2-4a})-(\ref{eq2-6b}) \cite%
{Zhou98}. Without loss of generality, it is assumed that $F=F_{0},L=L_{0}$.
Moreover, the extended form of Bezout identity (\ref{eq2-5}) holds:%
\begin{gather}
\left[ 
\begin{array}{cc}
M_{0} & \text{ }U_{0} \\ 
N_{0} & \text{ }V_{0}%
\end{array}%
\right] \left[ 
\begin{array}{cc}
\hat{V}_{0} & \text{ }-\hat{U}_{0} \\ 
-\hat{N}_{0} & \text{ }\hat{M}_{0}%
\end{array}%
\right] =\left[ 
\begin{array}{cc}
I\text{ } & 0\text{ } \\ 
0\text{ } & I\text{ }%
\end{array}%
\right] ,  \label{eq2-5a} \\
\left[ 
\begin{array}{cc}
\hat{V}_{0} & \hat{U}_{0}%
\end{array}%
\right] =\left[ 
\begin{array}{cc}
X_{0}-Q\hat{N}_{0} & \text{ }-Y_{0}-Q\hat{M}_{0}%
\end{array}%
\right] ,\left[ 
\begin{array}{c}
U_{0} \\ 
V_{0}%
\end{array}%
\right] =\left[ 
\begin{array}{c}
-\hat{Y}_{0}-M_{0}Q \\ 
\hat{X}_{0}-N_{0}Q%
\end{array}%
\right] .  \notag
\end{gather}%
It turns out%
\begin{gather}
\left[ 
\begin{array}{c}
u \\ 
y%
\end{array}%
\right] =\left[ 
\begin{array}{cc}
I & -K \\ 
-G_{0} & I%
\end{array}%
\right] ^{-1}\left[ 
\begin{array}{c}
I \\ 
0%
\end{array}%
\right] v=\left[ 
\begin{array}{cc}
\hat{V}_{0} & \hat{U}_{0} \\ 
-\hat{N}_{0} & \text{ }\hat{M}_{0}%
\end{array}%
\right] ^{-1}\left[ 
\begin{array}{c}
\hat{V}_{0} \\ 
0%
\end{array}%
\right] v  \notag \\
=\left[ 
\begin{array}{cc}
M_{0} & -U_{0} \\ 
N_{0} & V_{0}%
\end{array}%
\right] \left[ 
\begin{array}{c}
\hat{V}_{0} \\ 
0%
\end{array}%
\right] v=\left[ 
\begin{array}{c}
M_{0} \\ 
N_{0}%
\end{array}%
\right] \hat{v},\hat{v}=\hat{V}_{0}v.  \label{eq4-3}
\end{gather}%
Now, suppose that the plant is corrupted with uncertainty (\ref{eq3-5})-(\ref%
{eq3-5a}),%
\begin{equation}
G=\left( N_{0}+\Delta _{N}\right) \left( M_{0}+\Delta _{M}\right)
^{-1},I_{G}=I_{G_{0}}+\Delta _{I}.  \label{eq4-5}
\end{equation}%
The closed-loop dynamics is governed by 
\begin{equation}
\left[ 
\begin{array}{c}
u \\ 
y%
\end{array}%
\right] =\left[ 
\begin{array}{cc}
I & -K \\ 
-G & I%
\end{array}%
\right] ^{-1}\left[ 
\begin{array}{c}
I \\ 
0%
\end{array}%
\right] v.  \label{eq4-6}
\end{equation}%
It is assumed that 
\begin{equation}
\left\Vert \Delta _{I,c}\right\Vert _{\infty }\leq b<1,\Delta _{I,c}=\left[ 
\begin{array}{cc}
\hat{V}_{0} & \text{ }-\hat{U}_{0} \\ 
-\hat{N}_{0} & \text{ }\hat{M}_{0}%
\end{array}%
\right] \left[ 
\begin{array}{c}
\Delta _{M} \\ 
\Delta _{N}%
\end{array}%
\right] .  \label{eq4-6a}
\end{equation}%
It is known that (\ref{eq4-6a}) is a sufficient condition for the
closed-loop stability \cite{Vinnicombe-book}.

\bigskip

The following two lemmas are useful for our subsequent study.

\begin{Le}
\label{Le4-1}Given feedback control loop (\ref{eq4-6}) with the plant model (%
\ref{eq4-5}) and control law (\ref{eq4-2a})-(\ref{eq4-2b}), it holds%
\begin{equation}
\left[ 
\begin{array}{c}
u \\ 
y%
\end{array}%
\right] =\left[ 
\begin{array}{c}
M \\ 
N%
\end{array}%
\right] \left( I+\Delta _{1}\right) ^{-1}\hat{v},\Delta _{1}=\left[ 
\begin{array}{cc}
\hat{V}_{0} & -\hat{U}_{0}%
\end{array}%
\right] \left[ 
\begin{array}{c}
\Delta _{M} \\ 
\Delta _{N}%
\end{array}%
\right] \in \mathcal{H}_{\infty }.
\end{equation}
\end{Le}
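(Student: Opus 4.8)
The plan is to collapse the two-block closed-loop equation (\ref{eq4-6}) into a single equation for a parameter vector $w\in\mathcal{H}_{2}$ and then apply the Bezout identity (\ref{eq2-5a}). The first step is to rewrite the controller in coprime form. From (\ref{eq4-2a}) together with the identifications $\hat{V}_{0}=X_{0}-Q\hat{N}_{0}$ and $\hat{U}_{0}=-Y_{0}-Q\hat{M}_{0}$ read off from (\ref{eq2-5a}), the control law becomes $K=\hat{V}_{0}^{-1}\hat{U}_{0}$. With this, the two block rows of (\ref{eq4-6}) are $u-Ky=v$ and $y=Gu$.

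Next I would exploit the second row. Since $G=NM^{-1}$ with $M=M_{0}+\Delta_{M}$ and $N=N_{0}+\Delta_{N}$ by (\ref{eq4-5}), the relation $y=Gu$ says precisely that the data vector $\left[\begin{array}{c}u\\y\end{array}\right]$ lies in the image subspace $\mathcal{I}_{G}$ of Definition (\ref{eq2-11b}), so there exists $w\in\mathcal{H}_{2}$ with $\left[\begin{array}{c}u\\y\end{array}\right]=\left[\begin{array}{c}M\\N\end{array}\right]w$, i.e. $u=Mw$, $y=Nw$. Substituting this and $K=\hat{V}_{0}^{-1}\hat{U}_{0}$ into the first row $u-Ky=v$ gives $Mw-\hat{V}_{0}^{-1}\hat{U}_{0}Nw=v$, and left-multiplying by $\hat{V}_{0}$ yields $\left(\hat{V}_{0}M-\hat{U}_{0}N\right)w=\hat{V}_{0}v=\hat{v}$.

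The decisive simplification is the Bezout identity. Reading the $(1,1)$ block of the companion product of (\ref{eq2-5a}),
\begin{equation*}
\left[\begin{array}{cc}\hat{V}_{0} & -\hat{U}_{0}\\ -\hat{N}_{0} & \hat{M}_{0}\end{array}\right]\left[\begin{array}{cc}M_{0} & U_{0}\\ N_{0} & V_{0}\end{array}\right]=\left[\begin{array}{cc}I & 0\\ 0 & I\end{array}\right],
\end{equation*}
gives $\hat{V}_{0}M_{0}-\hat{U}_{0}N_{0}=I$. Hence $\hat{V}_{0}M-\hat{U}_{0}N=I+\left(\hat{V}_{0}\Delta_{M}-\hat{U}_{0}\Delta_{N}\right)=I+\Delta_{1}$ with $\Delta_{1}=\left[\begin{array}{cc}\hat{V}_{0} & -\hat{U}_{0}\end{array}\right]\left[\begin{array}{c}\Delta_{M}\\ \Delta_{N}\end{array}\right]$ exactly as claimed. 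Therefore $(I+\Delta_{1})w=\hat{v}$, so $w=(I+\Delta_{1})^{-1}\hat{v}$ and $\left[\begin{array}{c}u\\y\end{array}\right]=\left[\begin{array}{c}M\\N\end{array}\right](I+\Delta_{1})^{-1}\hat{v}$, which is the assertion.

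The only non-routine points, and the main potential obstacle, are the well-definedness claims rather than the algebra. To see that $\Delta_{1}\in\mathcal{H}_{\infty}$ and that $(I+\Delta_{1})^{-1}$ exists, I would observe that $\Delta_{1}$ is precisely the upper block of $\Delta_{I,c}$ in (\ref{eq4-6a}), whence $\left\Vert\Delta_{1}\right\Vert_{\infty}\leq\left\Vert\Delta_{I,c}\right\Vert_{\infty}\leq b<1$; invertibility of $I+\Delta_{1}$ in $\mathcal{H}_{\infty}$ then follows from the Neumann series, and closed-loop stability (guaranteed by the same condition (\ref{eq4-6a})) justifies taking the closed-loop signals, and hence $w$, in $\mathcal{H}_{2}$. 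The remaining care is purely bookkeeping: keeping the signs consistent between the two parameterizations (\ref{eq4-2a})--(\ref{eq4-2b}) of $K$ and the block entries of the extended Bezout identity (\ref{eq2-5a}).
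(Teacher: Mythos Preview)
Your argument is correct, and it reaches the result by a route that differs from the paper's. The paper works entirely at the transfer-matrix level: it factors
\[
\left[\begin{array}{cc} I & -K \\ -G & I\end{array}\right]^{-1}\left[\begin{array}{c} I \\ 0\end{array}\right]
=\left[\begin{array}{cc} M & 0 \\ 0 & -V_{0}\end{array}\right]
\left(\left[\begin{array}{cc} M_{0} & U_{0} \\ N_{0} & V_{0}\end{array}\right]
+\left[\begin{array}{c}\Delta_{M}\\ \Delta_{N}\end{array}\right]\left[\begin{array}{cc} I & 0\end{array}\right]\right)^{-1}
\left[\begin{array}{c} I \\ 0\end{array}\right],
\]
then inverts the perturbed block matrix explicitly and uses the off-diagonal Bezout relations $V_{0}\hat{N}_{0}=N_{0}\hat{V}_{0}$, $V_{0}\hat{M}_{0}=I+N_{0}\hat{U}_{0}$ to collapse the second row. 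You instead parameterise the closed-loop signals through the image subspace, $\left[\begin{smallmatrix}u\\y\end{smallmatrix}\right]=\left[\begin{smallmatrix}M\\N\end{smallmatrix}\right]w$, and reduce everything to the single scalar-block equation $(\hat{V}_{0}M-\hat{U}_{0}N)w=\hat{v}$, which the $(1,1)$ Bezout entry simplifies immediately. Your approach is shorter and uses only the diagonal part of the Bezout identity; the paper's approach stays closer to the standard closed-loop transfer-function machinery and makes the role of the full double Bezout matrix more visible. Your closing remarks on $\Delta_{1}\in\mathcal{H}_{\infty}$ and the invertibility of $I+\Delta_{1}$ via (\ref{eq4-6a}) are the right justification and should be retained.
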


The proof of the above lemma is given in Appendix.

\begin{Le}
\label{Le4-2}\cite{Georgiou&Smith90} Given (\ref{eq4-6a}) and let 
\begin{equation}
\left[ 
\begin{array}{cc}
\hat{V}_{0} & \text{ }-\hat{U}_{0} \\ 
-\hat{N}_{0} & \text{ }\hat{M}_{0}%
\end{array}%
\right] \left[ 
\begin{array}{c}
\Delta _{M} \\ 
\Delta _{N}%
\end{array}%
\right] =\left[ 
\begin{array}{c}
\Delta _{1} \\ 
\Delta _{2}%
\end{array}%
\right] .  \label{eq4-15}
\end{equation}%
Then it holds 
\begin{equation}
\left\Vert \Delta _{2}\left( I+\Delta _{1}\right) ^{-1}\right\Vert _{\infty
}\leq \frac{b}{\sqrt{1-b^{2}}}.  \label{eq4-7}
\end{equation}
\end{Le}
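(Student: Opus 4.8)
The plan is to reduce the $\mathcal{H}_{\infty}$ bound to a purely frequency-wise (constant-matrix) singular-value estimate, and then to settle that estimate by an elementary one-variable optimisation. First I would record, from (\ref{eq4-15}) and (\ref{eq4-6a}), that $\left[\begin{smallmatrix}\Delta_1\\\Delta_2\end{smallmatrix}\right]=\Delta_{I,c}$, so that $\left\Vert\left[\begin{smallmatrix}\Delta_1\\\Delta_2\end{smallmatrix}\right]\right\Vert_{\infty}\le b<1$; in particular $\left\Vert\Delta_1\right\Vert_{\infty}\le b<1$, whence $I+\Delta_1$ is invertible over $\mathcal{RH}_{\infty}$ (Neumann series $\sum_{k\ge0}(-\Delta_1)^{k}$) and $\Delta_2\left(I+\Delta_1\right)^{-1}\in\mathcal{H}_{\infty}$. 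Since $\left\Vert T\right\Vert_{\infty}=\sup_{\theta}\bar{\sigma}\big(T(e^{j\theta})\big)$, with $\bar{\sigma}$ the largest singular value, it suffices to prove, at an arbitrary fixed point $z=e^{j\theta}$ of the unit circle, the bound $\bar{\sigma}\big(\Delta_2(I+\Delta_1)^{-1}\big)\le b/\sqrt{1-b^{2}}$, where now $\Delta_1,\Delta_2$ denote the constant matrices $\Delta_1(z),\Delta_2(z)$ subject to $\bar{\sigma}\big(\left[\begin{smallmatrix}\Delta_1\\\Delta_2\end{smallmatrix}\right]\big)\le b$.

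Next I would note that this column-norm condition is equivalent to $\left\vert\Delta_1 y\right\vert^{2}+\left\vert\Delta_2 y\right\vert^{2}\le b^{2}\left\vert y\right\vert^{2}$ for every vector $y$, where $\left\vert\cdot\right\vert$ is the Euclidean norm. Substituting $y=(I+\Delta_1)^{-1}x$, the target estimate becomes
$$\left\vert\Delta_2 y\right\vert\le\frac{b}{\sqrt{1-b^{2}}}\,\left\vert(I+\Delta_1)y\right\vert\qquad\text{for all }y.$$
The triangle inequality gives $\left\vert(I+\Delta_1)y\right\vert\ge\left\vert y\right\vert-\left\vert\Delta_1 y\right\vert\ge(1-b)\left\vert y\right\vert>0$, which also re-confirms invertibility at this frequency.

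To finish, write $a=\left\vert y\right\vert$, $p=\left\vert\Delta_1 y\right\vert$, $q=\left\vert\Delta_2 y\right\vert$, so that $p^{2}+q^{2}\le b^{2}a^{2}$ and $0\le p\le ba<a$. Combining this with the lower bound above,
$$\frac{\left\vert\Delta_2 y\right\vert^{2}}{\left\vert(I+\Delta_1)y\right\vert^{2}}\le\frac{q^{2}}{(a-p)^{2}}\le\frac{b^{2}a^{2}-p^{2}}{(a-p)^{2}}.$$
Normalising $a=1$ by homogeneity and maximising $g(p)=\dfrac{b^{2}-p^{2}}{(1-p)^{2}}$ over $p\in[0,b]$, one computes $g'(p)=\dfrac{2(b^{2}-p)}{(1-p)^{3}}$, so the unique interior critical point is $p=b^{2}$, with $g(b^{2})=\dfrac{b^{2}(1-b^{2})}{(1-b^{2})^{2}}=\dfrac{b^{2}}{1-b^{2}}$; comparison with the endpoint values $g(0)=b^{2}$ and $g(b)=0$ shows this is the maximum. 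Taking square roots yields $\left\vert\Delta_2 y\right\vert/\left\vert(I+\Delta_1)y\right\vert\le b/\sqrt{1-b^{2}}$ at every frequency, and taking the supremum over the unit circle gives (\ref{eq4-7}).

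The only genuine computation is the one-variable maximisation in the last step; everything else is bookkeeping about norms and the frequency-domain characterisation of $\left\Vert\cdot\right\Vert_{\infty}$. The one point I would watch is whether the triangle-inequality lower bound $\left\vert(I+\Delta_1)y\right\vert\ge a-p$ degrades the constant: in fact it is tight precisely when $\Delta_1 y$ is anti-parallel to $y$, which is exactly the configuration $p=b^{2}$ realising the maximum of $g$, so the constant $b/\sqrt{1-b^{2}}$ is sharp. For the inequality (\ref{eq4-7}) itself, however, only the upper estimate is required, so sharpness is a bonus rather than an obstacle.
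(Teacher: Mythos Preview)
Your proof is correct. The paper does not supply its own proof of this lemma; it simply cites Georgiou \& Smith (1990) and states the result, so there is no argument in the paper to compare against. Your frequency-wise reduction to a constant-matrix bound, followed by the scalar optimisation of $g(p)=(b^{2}-p^{2})/(1-p)^{2}$ over $p\in[0,b]$, is the standard elementary derivation of this estimate and is exactly in the spirit of the cited reference.

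One minor remark on your closing comment about sharpness: the anti-parallel condition on $\Delta_{1}y$ and the choice $p=b^{2}$ are two independent tunings (the first makes the triangle inequality an equality, the second maximises $g$), so the phrase ``which is exactly the configuration $p=b^{2}$'' slightly conflates them. But as you yourself note, only the upper bound is needed for (\ref{eq4-7}), so this does not affect the validity of the proof.
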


\subsection{Application of the projection-based fault detection scheme \label%
{subsection4-2}}

Based on the closed-loop dynamics, we first present a fault detection scheme
based on the projection onto the system image subspace. Notice that 
\begin{equation*}
\hat{v}=\hat{V}_{0}v,\hat{V}_{0}\in \mathcal{RH}_{\infty },\hat{V}%
_{0}^{-1}\in \mathcal{RL}_{\infty }.
\end{equation*}%
Hence, there exists an invertible $R_{0}(z)\in \mathcal{RH}_{\infty }$ so
that 
\begin{equation*}
\left( \hat{V}_{0}R_{0}\right) ^{\sim }\hat{V}_{0}R_{0}=I\Longrightarrow I_{%
\bar{G}_{0}}=\left[ 
\begin{array}{c}
\bar{M}_{0} \\ 
\bar{N}_{0}%
\end{array}%
\right] =\left[ 
\begin{array}{c}
M_{0} \\ 
N_{0}%
\end{array}%
\right] \hat{V}_{0}R_{0}
\end{equation*}%
is a normalised SIR. Define 
\begin{equation}
\mathcal{I}_{\bar{G}_{0}}=:\left\{ \left[ 
\begin{array}{c}
u \\ 
y%
\end{array}%
\right] :\left[ 
\begin{array}{c}
u \\ 
y%
\end{array}%
\right] =\left[ 
\begin{array}{c}
\bar{M}_{0} \\ 
\bar{N}_{0}%
\end{array}%
\right] v,v\in \mathcal{H}_{2}\right\} .  \label{eq4-9}
\end{equation}%
It follows from the results in Subsection 3.1 that the corresponding
orthogonal projection is%
\begin{equation*}
\mathcal{P}_{\mathcal{I}_{\bar{G}_{0}}}=\mathcal{L}_{I_{\bar{G}_{0}}}%
\mathcal{L}_{I_{\bar{G}_{0}}}^{\ast },p_{\mathcal{I}_{\bar{G}_{0}}}=\mathcal{%
P}_{\mathcal{I}_{\bar{G}_{0}}}\left[ 
\begin{array}{c}
u \\ 
y%
\end{array}%
\right] ,
\end{equation*}%
and the projection-based residual is generated by%
\begin{equation}
r_{\mathcal{I}_{\bar{G}_{0}}}=\left[ 
\begin{array}{c}
u \\ 
y%
\end{array}%
\right] -p_{\mathcal{I}_{\bar{G}_{0}}}=\left( \mathcal{I}-\mathcal{P}_{%
\mathcal{I}_{\bar{G}_{0}}}\right) \left[ 
\begin{array}{c}
u \\ 
y%
\end{array}%
\right] .  \label{eq4-13}
\end{equation}%
Next, we determine the threshold following the procedure introduced in
Subsection \ref{subsection3-3}. Since $\forall \left( \Delta _{M},\Delta
_{N}\right) ,\left( I+\Delta _{1}\right) ,\left( I+\Delta _{1}\right)
^{-1}\in \mathcal{H}_{\infty },$ it follows from Lemma \ref{Le4-1} that%
\begin{equation}
\mathcal{I}_{G}:=\left\{ \left[ 
\begin{array}{c}
u \\ 
y%
\end{array}%
\right] :\left[ 
\begin{array}{c}
u \\ 
y%
\end{array}%
\right] =\left[ 
\begin{array}{c}
M \\ 
N%
\end{array}%
\right] \left( I+\Delta _{1}\right) ^{-1}\hat{V}_{0}R_{0}v,v\in \mathcal{H}%
_{2}\right\}  \label{eq4-8}
\end{equation}%
builds a closed $\mathcal{H}_{2}$ subspace and is different from $\mathcal{I}%
_{\bar{G}_{0}}.$ For our purpose, the difference 
\begin{equation*}
\left[ 
\begin{array}{c}
M \\ 
N%
\end{array}%
\right] \left( I+\Delta _{1}\right) ^{-1}\hat{V}_{0}R_{0}-\left[ 
\begin{array}{c}
M_{0} \\ 
N_{0}%
\end{array}%
\right] \hat{V}_{0}R_{0}
\end{equation*}%
should be specified. By means of the following steps, 
\begin{gather}
\left[ 
\begin{array}{c}
\left( M_{o}+\Delta _{M}\right) \left( I+\Delta _{1}\right) ^{-1} \\ 
\left( N_{o}+\Delta _{N}\right) \left( I+\Delta _{1}\right) ^{-1}%
\end{array}%
\right] -\left[ 
\begin{array}{c}
M_{0} \\ 
N_{0}%
\end{array}%
\right] =\left( \left[ 
\begin{array}{c}
\Delta _{M} \\ 
\Delta _{N}%
\end{array}%
\right] -\left[ 
\begin{array}{c}
M_{0} \\ 
N_{0}%
\end{array}%
\right] \Delta _{1}\right) \left( I+\Delta _{1}\right) ^{-1}  \notag \\
=\left( I-\left[ 
\begin{array}{c}
M_{0} \\ 
N_{0}%
\end{array}%
\right] \left[ 
\begin{array}{cc}
\hat{V}_{0} & -\hat{U}_{0}%
\end{array}%
\right] \right) \left[ 
\begin{array}{c}
\Delta _{M} \\ 
\Delta _{N}%
\end{array}%
\right] \left( I+\Delta _{1}\right) ^{-1}  \notag \\
=\left[ 
\begin{array}{c}
U_{0} \\ 
V_{0}%
\end{array}%
\right] \left[ 
\begin{array}{cc}
-\hat{N}_{0} & \text{ }\hat{M}_{0}%
\end{array}%
\right] \left[ 
\begin{array}{c}
\Delta _{M} \\ 
\Delta _{N}%
\end{array}%
\right] \left( I+\Delta _{1}\right) ^{-1}=\left[ 
\begin{array}{c}
U_{0} \\ 
V_{0}%
\end{array}%
\right] \Delta _{2}\left( I+\Delta _{1}\right) ^{-1},  \label{eq4-20}
\end{gather}%
we have 
\begin{gather*}
\left[ 
\begin{array}{c}
M \\ 
N%
\end{array}%
\right] \left( I+\Delta _{1}\right) ^{-1}\hat{V}_{0}-\left[ 
\begin{array}{c}
M_{0} \\ 
N_{0}%
\end{array}%
\right] \hat{V}_{0}=\left[ 
\begin{array}{c}
U_{0} \\ 
V_{0}%
\end{array}%
\right] \Delta _{2}\left( I+\Delta _{1}\right) ^{-1}\hat{V}_{0} \\
\Longrightarrow \left[ 
\begin{array}{c}
M \\ 
N%
\end{array}%
\right] \left( I+\Delta _{1}\right) ^{-1}\hat{V}_{0}R_{0}-\left[ 
\begin{array}{c}
\bar{M}_{0} \\ 
\bar{N}_{0}%
\end{array}%
\right] =\left[ 
\begin{array}{c}
U_{0} \\ 
V_{0}%
\end{array}%
\right] \Delta _{2}\left( I+\Delta _{1}\right) ^{-1}\hat{V}_{0}R_{0} \\
=:I_{\bar{G}}-I_{\bar{G}_{0}}=\Delta _{I_{\bar{G}_{0}}}.
\end{gather*}%
By Lemma \ref{Le4-2}, 
\begin{equation*}
\left\Vert \Delta _{2}\left( I+\Delta _{1}\right) ^{-1}\right\Vert _{\infty
}\leq \frac{b}{\sqrt{1-b^{2}}},
\end{equation*}%
it holds 
\begin{equation}
\left\Vert \Delta _{I_{\bar{G}_{0}}}\right\Vert _{\infty }\leq \frac{\gamma b%
}{\sqrt{1-b^{2}}},\gamma =\left\Vert \left[ 
\begin{array}{c}
U_{0} \\ 
V_{0}%
\end{array}%
\right] \right\Vert _{\infty }.  \label{eq4-21}
\end{equation}%
On the assumption that $\left\Vert \Delta _{I_{\bar{G}_{0}}}\right\Vert
_{\infty }<1,$ applying the result in Theorem \ref{Theo3-1} yields 
\begin{equation}
J_{th}=\frac{\gamma b}{\sqrt{1-\left( 1+\gamma ^{2}\right) b^{2}}}\left(
\left\Vert \left[ 
\begin{array}{c}
u \\ 
y%
\end{array}%
\right] \right\Vert _{2}^{2}-\left\Vert r_{\mathcal{I}_{\bar{G}%
_{0}}}\right\Vert _{2}^{2}\right) ^{1/2}.  \label{eq4-10}
\end{equation}

\subsection{A closed-loop image subspace projection-based fault detection
scheme}

Consider the nominal closed-loop dynamics (\ref{eq4-3})%
\begin{equation*}
\left[ 
\begin{array}{c}
u \\ 
y%
\end{array}%
\right] =\left[ 
\begin{array}{c}
M_{0} \\ 
N_{0}%
\end{array}%
\right] \hat{v},
\end{equation*}%
which can be, for instance, used to generate an extended residual vector, 
\begin{equation*}
\left[ 
\begin{array}{c}
r_{u} \\ 
r_{y}%
\end{array}%
\right] =\left[ 
\begin{array}{c}
u \\ 
y%
\end{array}%
\right] -\left[ 
\begin{array}{c}
M_{0} \\ 
N_{0}%
\end{array}%
\right] \hat{v}\in \mathcal{H}_{2}^{p+m},
\end{equation*}%
for the purpose of detecting cyber-attacks on feedback control systems and
recovering control performance degradation \cite{Ding2020,DLautomatica2022}.
This motivates us to propose a fault detection scheme based on the
projection onto the image subspace of the feedback control system defined in
the sequel.

\bigskip

It is obvious that $\left[ 
\begin{array}{c}
M_{0} \\ 
N_{0}%
\end{array}%
\right] \in \mathcal{RH}_{\infty }$ and $\left( I,\left[ 
\begin{array}{c}
M_{0} \\ 
N_{0}%
\end{array}%
\right] \right) $ builds a RCP of the transfer matrix $G_{c}^{0}(z),$%
\begin{equation}
y=G_{c}^{0}\hat{v},G_{c}^{0}=\left[ 
\begin{array}{c}
M_{0} \\ 
N_{0}%
\end{array}%
\right] \Longrightarrow \left[ 
\begin{array}{c}
\hat{v} \\ 
u \\ 
y%
\end{array}%
\right] =\left[ 
\begin{array}{c}
I \\ 
M_{0} \\ 
N_{0}%
\end{array}%
\right] \hat{v}  \label{eq4-14}
\end{equation}%
with the sub-index $c$ standing for closed-loop. The state space model of $%
G_{c}^{0}$ is given by%
\begin{equation*}
G_{c}^{0}:=\left( A_{c},B_{c},C_{c},D_{c}\right) =\left( A+BF_{0},BV_{0}, 
\left[ 
\begin{array}{c}
F_{0} \\ 
C+DF_{0}%
\end{array}%
\right] ,\left[ 
\begin{array}{c}
V_{0} \\ 
DV_{0}%
\end{array}%
\right] \right) .
\end{equation*}%
Based on it, we have the following normalised SIR of $G_{c}^{0}$: 
\begin{gather*}
\left[ 
\begin{array}{c}
\hat{v} \\ 
u \\ 
y%
\end{array}%
\right] =I_{G_{c}^{0}}\hat{v}=\left[ 
\begin{array}{c}
M_{0,c} \\ 
N_{0,c}%
\end{array}%
\right] \hat{v}, \\
M_{0,c}=\left( A_{c}+B_{c}F_{c},B_{c}V_{c},F_{c},V_{c}\right)
,N_{0,c}=\left(
A_{c}+B_{c}F_{c},B_{c}V_{c},C_{c}+D_{c}F_{c},D_{c}V_{c}\right) , \\
F_{c}=-\left( I+D_{c}^{T}D_{c}+B_{c}^{T}XB_{c}\right) ^{-1}\Phi ,\Phi
=D_{c}^{T}C_{c}+B_{c}^{T}XA_{c}, \\
V_{c}=\left( I+D_{c}^{T}D_{c}+B_{c}^{T}XB_{c}\right) ^{-1/2}\Gamma ,\Gamma
^{T}\Gamma =I, \\
X=A_{c}^{T}XA_{c}+C_{c}^{T}C_{c}-\Phi ^{T}\left(
I+D_{c}^{T}D_{c}+B_{c}^{T}XB_{c}\right) ^{-1}\Phi ,X>0.
\end{gather*}%
It is well-known that between any two RCPs there exists a one-to-one mapping 
\cite{Ding2020}. In our case, it holds%
\begin{equation}
I_{G_{c}^{0}}=\left[ 
\begin{array}{c}
M_{0,c} \\ 
N_{0,c}%
\end{array}%
\right] =\left[ 
\begin{array}{c}
I \\ 
M_{0} \\ 
N_{0}%
\end{array}%
\right] M_{0,c},N_{0,c}=\left[ 
\begin{array}{c}
M_{0} \\ 
N_{0}%
\end{array}%
\right] M_{0,c}.  \label{eq4-17}
\end{equation}

\begin{Def}
Given the close-loop model (\ref{eq4-14}) and the corresponding RCP $\left(
M_{0,c},N_{0,c}\right) ,$ the $\mathcal{H}_{2}$ subspace $\mathcal{I}%
_{G_{c}^{0}}$ defined by 
\begin{equation}
\mathcal{I}_{G_{c}^{0}}=\left\{ \left[ 
\begin{array}{c}
\hat{v} \\ 
u \\ 
y%
\end{array}%
\right] :\left[ 
\begin{array}{c}
\hat{v} \\ 
u \\ 
y%
\end{array}%
\right] =\left[ 
\begin{array}{c}
M_{0,c} \\ 
N_{0,c}%
\end{array}%
\right] \hat{v},\hat{v}\in \mathcal{H}_{2}\hspace{-2pt}\right\}
\label{eq4-16}
\end{equation}%
is called image subspace of the closed-loop.
\end{Def}

\begin{Rem}
In the above definition, it is assumed that $\hat{V}_{0}^{-1}\in \mathcal{RH}%
_{\infty }.$ Otherwise, $\hat{v}$ will be substituted by $v$ and $\left[ 
\begin{array}{c}
M_{0} \\ 
N_{0}%
\end{array}%
\right] \hat{V}_{0}$ is handled as done in the previous subsection without
loss of generality.
\end{Rem}

Applying the basic projection-based residual generation scheme introduced in
Section 3 results in%
\begin{equation}
r_{c}=\left[ 
\begin{array}{c}
\hat{v} \\ 
u \\ 
y%
\end{array}%
\right] -p_{\mathcal{I}_{G_{c}^{0}}},p_{\mathcal{I}_{G_{c}^{0}}}=\mathcal{P}%
_{\mathcal{I}_{G_{c}^{0}}}\left[ 
\begin{array}{c}
\hat{v} \\ 
u \\ 
y%
\end{array}%
\right] =\left( I-\mathcal{L}_{I_{G_{c}^{0}}}\mathcal{L}_{I_{G_{c}^{0}}}^{%
\ast }\right) \left[ 
\begin{array}{c}
\hat{v} \\ 
u \\ 
y%
\end{array}%
\right]  \label{eq4-18}
\end{equation}%
with the residual vector $r_{c}.$ In the next step, the threshold $J_{th,c},$%
\begin{equation*}
J_{th,c}=\sup_{\left\Vert \Delta _{I,c}\right\Vert _{\infty }\leq
b}\left\Vert r_{c}\right\Vert _{2},
\end{equation*}%
is to be determined. It follows from Lemma \ref{Le4-1} that in case of
uncertainty $\Delta _{I},$%
\begin{align*}
\left[ 
\begin{array}{c}
\hat{v} \\ 
u \\ 
y%
\end{array}%
\right] & =\left[ 
\begin{array}{c}
I \\ 
M\left( I+\Delta _{1}\right) ^{-1} \\ 
N\left( I+\Delta _{1}\right) ^{-1}%
\end{array}%
\right] \hat{v}=\left[ 
\begin{array}{c}
I \\ 
\left( M_{0}+\Delta _{M}\right) \left( I+\Delta _{1}\right) ^{-1} \\ 
\left( N_{0}+\Delta _{N}\right) \left( I+\Delta _{1}\right) ^{-1}%
\end{array}%
\right] \hat{v} \\
& =\left( \left[ 
\begin{array}{c}
M_{0,c} \\ 
N_{0,c}%
\end{array}%
\right] +\left[ 
\begin{array}{c}
0 \\ 
\Delta _{N_{0,c}}%
\end{array}%
\right] \right) M_{0,c}^{-1}\hat{v}, \\
\Delta _{N_{0,c}}& =\left( \left[ 
\begin{array}{c}
\left( M_{0}+\Delta _{M}\right) \left( I+\Delta _{1}\right) ^{-1} \\ 
\left( N_{0}+\Delta _{N}\right) \left( I+\Delta _{1}\right) ^{-1}%
\end{array}%
\right] -\left[ 
\begin{array}{c}
M_{0} \\ 
N_{0}%
\end{array}%
\right] \right) M_{0,c}.
\end{align*}%
Recalling (\ref{eq4-20}), it turns out%
\begin{gather*}
\left[ 
\begin{array}{c}
\left( M_{o}+\Delta _{M}\right) \left( I+\Delta _{1}\right) ^{-1} \\ 
\left( N_{o}+\Delta _{N}\right) \left( I+\Delta _{1}\right) ^{-1}%
\end{array}%
\right] -\left[ 
\begin{array}{c}
M_{0} \\ 
N_{0}%
\end{array}%
\right] =\left[ 
\begin{array}{c}
U_{0} \\ 
V_{0}%
\end{array}%
\right] \Delta _{2}\left( I+\Delta _{1}\right) ^{-1} \\
\Longrightarrow \Delta _{N_{0,c}}=\left[ 
\begin{array}{c}
U_{0} \\ 
V_{0}%
\end{array}%
\right] \Delta _{2}\left( I+\Delta _{1}\right) ^{-1}M_{0,c}.
\end{gather*}%
Consider that $M_{0,c},M_{0,c}^{-1}\in \mathcal{RH}_{\infty }$ and thus%
\begin{align*}
\mathcal{I}_{G_{c}}& =\left\{ \left[ 
\begin{array}{c}
\hat{v} \\ 
u \\ 
y%
\end{array}%
\right] :\left[ 
\begin{array}{c}
\hat{v} \\ 
u \\ 
y%
\end{array}%
\right] =\left[ 
\begin{array}{c}
I \\ 
M\left( I+\Delta _{1}\right) ^{-1} \\ 
N\left( I+\Delta _{1}\right) ^{-1}%
\end{array}%
\right] \hat{v},\hat{v}\in \mathcal{H}_{2}\hspace{-2pt}\right\} \\
& =\left\{ \left[ 
\begin{array}{c}
\hat{v} \\ 
u \\ 
y%
\end{array}%
\right] :\left[ 
\begin{array}{c}
\hat{v} \\ 
u \\ 
y%
\end{array}%
\right] =\left( \left[ 
\begin{array}{c}
M_{0,c} \\ 
N_{0,c}%
\end{array}%
\right] +\left[ 
\begin{array}{c}
0 \\ 
\Delta _{N_{0,c}}%
\end{array}%
\right] \right) \hat{v},\hat{v}\in \mathcal{H}_{2}\hspace{-2pt}\right\} .
\end{align*}%
Now, let 
\begin{equation}
I_{G_{c}}=\left[ 
\begin{array}{c}
M_{0,c} \\ 
N_{0,c}%
\end{array}%
\right] +\left[ 
\begin{array}{c}
0 \\ 
\Delta _{N_{0,c}}%
\end{array}%
\right] \Longrightarrow I_{G_{c}}-I_{G_{c}^{0}}=\left[ 
\begin{array}{c}
0 \\ 
\Delta _{N_{0,c}}%
\end{array}%
\right] =:\Delta _{I_{G_{c}}}.
\end{equation}%
By Lemma \ref{Le4-2}, it holds%
\begin{gather}
\left\Vert \Delta _{I_{G_{c}}}\right\Vert _{\infty }=\left\Vert \left[ 
\begin{array}{c}
U_{0} \\ 
V_{0}%
\end{array}%
\right] \Delta _{2}\left( I+\Delta _{1}\right) ^{-1}M_{0,c}\right\Vert
_{\infty }\leq \frac{\gamma b}{\sqrt{1-b^{2}}}=\varepsilon ,  \notag \\
\left\Vert \left[ 
\begin{array}{c}
U_{0} \\ 
V_{0}%
\end{array}%
\right] \right\Vert _{\infty }\left\Vert M_{0,c}\right\Vert _{\infty }\leq
\left\Vert \left[ 
\begin{array}{c}
U_{0} \\ 
V_{0}%
\end{array}%
\right] \right\Vert _{\infty }=\gamma .  \label{eq4-19}
\end{gather}%
In the sequel, it is assumed that $\varepsilon <1.$

\bigskip

Next, let $\delta \left( \mathcal{I}_{G_{c}},\mathcal{I}_{G_{c}^{0}}\right) $
be the gap metric between $\mathcal{I}_{G_{c}}$ and $\mathcal{I}%
_{G_{c}^{0}}, $ and define%
\begin{eqnarray*}
\mathcal{I}_{G_{c},\Delta _{I_{G_{c}}}}(\varepsilon ) &=&\left\{ \mathcal{I}%
_{G_{c}}:\left\Vert \Delta _{I_{G_{c}}}\right\Vert _{\infty }=\left\Vert
I_{G_{c}}-I_{G_{c}^{0}}\right\Vert _{\infty }\leq \varepsilon \right\} , \\
\mathcal{I}_{G_{c},\delta }(\varepsilon ) &=&\left\{ \mathcal{I}%
_{G_{c}}:\delta \left( \mathcal{I}_{G_{c}},\mathcal{I}_{G_{c}^{0}}\right)
\leq \varepsilon \right\} .
\end{eqnarray*}%
Since, following the well-established result given in \cite%
{Georgiou&Smith90,Vinnicombe-book}, 
\begin{equation*}
\mathcal{I}_{G_{c},\Delta _{I_{G_{c}}}}(\varepsilon )=\mathcal{I}%
_{G_{c},\delta }(\varepsilon ),
\end{equation*}%
we finally have the following theorem.

\begin{Theo}
\label{Theo4-1}Given the control loop (\ref{eq4-1}) with control law (\ref%
{eq4-2a})-(\ref{eq4-2b}) and uncertainty $\Delta _{I,c}$ satisfying (\ref%
{eq4-6a}). Suppose that projection-based residual generator (\ref{eq4-18})
is used with residual vector $r_{c}$, then the corresponding threshold is
given by 
\begin{equation}
J_{th,c}=\sup_{\left\Vert \Delta _{I,c}\right\Vert _{\infty }\leq
b}\left\Vert r_{c}\right\Vert _{2}=\frac{\gamma b}{\sqrt{1-\left( 1+\gamma
^{2}\right) b^{2}}}\left( \left\Vert \left[ 
\begin{array}{c}
\hat{v} \\ 
u \\ 
y%
\end{array}%
\right] \right\Vert _{2}^{2}-\left\Vert r_{c}\right\Vert _{2}^{2}\right)
^{1/2},
\end{equation}%
where $\gamma $ is a constant given in (\ref{eq4-19}).
\end{Theo}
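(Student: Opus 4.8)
The plan is to observe that the construction preceding the theorem has already recast the closed-loop detection problem into precisely the abstract form solved in Theorem \ref{Theo3-1}, so that the threshold follows by a direct application of that result together with one algebraic simplification. First I would fix the dictionary between the two settings: the measurement vector $\left[\begin{array}{ccc}\hat{v}^{T} & u^{T} & y^{T}\end{array}\right]^{T}$ takes the role of $\left[\begin{array}{cc}u^{T} & y^{T}\end{array}\right]^{T}$, the nominal normalised SIR $I_{G_{c}^{0}}=\left[\begin{array}{c}M_{0,c}\\ N_{0,c}\end{array}\right]$ takes the role of $I_{G_{0}}$, the perturbed image subspace $\mathcal{I}_{G_{c}}=\mathcal{I}_{G_{c}^{0}}+\Delta_{I_{G_{c}}}$ takes the role of $\mathcal{I}_{G}$, and the residual $r_{c}$ produced by $\left(\mathcal{I}-\mathcal{L}_{I_{G_{c}^{0}}}\mathcal{L}_{I_{G_{c}^{0}}}^{\ast}\right)$ takes the role of $r_{\mathcal{I}_{G}}$. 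The uncertainty radius in this dictionary is $\varepsilon=\frac{\gamma b}{\sqrt{1-b^{2}}}$, which the computation leading to (\ref{eq4-19}) has already shown to bound $\left\Vert\Delta_{I_{G_{c}}}\right\Vert_{\infty}$ (using Lemma \ref{Le4-2} together with $\left\Vert M_{0,c}\right\Vert_{\infty}\leq 1$, the latter being a consequence of the normalisation $I_{G_{c}^{0}}^{\sim}I_{G_{c}^{0}}=\mathcal{I}$).

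Second, I would check that the hypotheses of Theorem \ref{Theo3-1} are in force and then invoke it. The standing assumption $\varepsilon<1$ is imposed just before the statement, and the set identity $\mathcal{I}_{G_{c},\Delta_{I_{G_{c}}}}(\varepsilon)=\mathcal{I}_{G_{c},\delta}(\varepsilon)$ recorded above (from \cite{Georgiou&Smith90,Vinnicombe-book}) identifies the $\mathcal{H}_{\infty}$-ball of admissible SIR perturbations with the gap-ball of equal radius; this is exactly what allows the supremum over $\left\Vert\Delta_{I,c}\right\Vert_{\infty}\leq b$ to be evaluated as the supremum over the gap-ball, so Theorem \ref{Theo3-1} applies verbatim and gives
\[
J_{th,c}=\frac{\varepsilon}{\sqrt{1-\varepsilon^{2}}}\left(\left\Vert\left[\begin{array}{ccc}\hat{v}^{T} & u^{T} & y^{T}\end{array}\right]^{T}\right\Vert_{2}^{2}-\left\Vert r_{c}\right\Vert_{2}^{2}\right)^{1/2}.
\]
Substituting $\varepsilon=\frac{\gamma b}{\sqrt{1-b^{2}}}$ and simplifying, one has $1-\varepsilon^{2}=\frac{1-(1+\gamma^{2})b^{2}}{1-b^{2}}$, whence $\frac{\varepsilon}{\sqrt{1-\varepsilon^{2}}}=\frac{\gamma b}{\sqrt{1-(1+\gamma^{2})b^{2}}}$, which is exactly the coefficient claimed in the theorem.

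The step I expect to require the most care is not the algebra but the justification that the supremum is attained, i.e. that as $\Delta_{I,c}$ sweeps its $b$-ball the subspace $\mathcal{I}_{G_{c}}$ actually fills the entire gap-ball of radius $\varepsilon$, so that the result is a genuine equality and not merely an upper bound. This relies on the tightness of the estimate (\ref{eq4-7}) in Lemma \ref{Le4-2} and on the set identity $\mathcal{I}_{G_{c},\Delta_{I_{G_{c}}}}(\varepsilon)=\mathcal{I}_{G_{c},\delta}(\varepsilon)$; I would verify both before asserting equality. I would also note that the condition $(1+\gamma^{2})b^{2}<1$ (equivalently $\varepsilon<1$) is precisely what keeps the radicand $1-(1+\gamma^{2})b^{2}$ positive, so the final expression is well defined.
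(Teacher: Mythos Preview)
Your proposal is correct and matches the paper's own approach: the paper's proof consists of the single sentence ``The proof is analogue to the one of Theorem \ref{Theo3-1} and thus omitted,'' and you have carried out exactly that analogy, including the explicit substitution $\varepsilon=\gamma b/\sqrt{1-b^{2}}$ and the coefficient simplification that the paper leaves implicit. Your caveat about attainment of the supremum is well taken but applies with equal force to the paper's proof of Theorem \ref{Theo3-1} itself, so you are operating at (indeed slightly above) the level of rigor the paper adopts.
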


\begin{proof}
The proof is analogue to the one of Theorem \ref{Theo3-1} and thus omitted.
\end{proof}

\bigskip

It is noteworthy that, for both detection schemes presented in this section,
the influence of controller design on the detection performance can be
clearly seen from the condition (\ref{eq4-6a}) and threshold setting. This
reveals that, in order to enhance the fault detectability (by reducing the
threshold), the norms of the LCP and RCP of the controller, $\left\Vert %
\left[ 
\begin{array}{cc}
\hat{V}_{0} & \text{ }-\hat{U}_{0}%
\end{array}%
\right] \right\Vert _{\infty }$ and $\left\Vert \left[ 
\begin{array}{c}
U_{0} \\ 
V_{0}%
\end{array}%
\right] \right\Vert _{\infty },$ are to be set as small as possible. It is
well-known from robust control theory that reducing $\left\Vert \left[ 
\begin{array}{cc}
\hat{V}_{0} & \text{ }-\hat{U}_{0}%
\end{array}%
\right] \right\Vert _{\infty }$ or/and $\left\Vert \left[ 
\begin{array}{c}
U_{0} \\ 
V_{0}%
\end{array}%
\right] \right\Vert _{\infty }$ increases the stability margin. This
observation coincides with the result reported in \cite%
{LD-Automatica-2020,Li-IEEETCST2020}.

\section{Fault classification issues}

In the observer-based fault diagnosis framework, fault detection and
isolation are two major tasks and mainly handled with the aid of system
analysis and observer design. They can be addressed in the context of fault
classification as well. Roughly speaking, the task of fault classification
is to determine to which class a fault belongs. In this section,
projection-based methods and algorithms are applied to dealing with fault
classification issues.

\subsection{Fault detection: a binary classification scheme}

In the previous sections, we have studied the basic projection-based fault
detection methods whose basis is the nominal system model and information
about model uncertainties. This is a typical one-class classification
problem. When a system model (or data) for faulty operations, in addition to
the nominal model, exists, fault detection can be achieved using both
models. This is a binary fault classification task. It is evident that
binary classification solutions, thanks to additional information, could
considerably enhance the fault detection performance. Additionally, a
multi-class classification problem towards fault isolation can be
reformulated as a bank of binary fault classification sub-problems. In this
subsection, binary fault classification issues are addressed in the
framework of projection-based methods.

\bigskip

Consider $G_{i}\in \mathcal{RL}_{\infty }^{m\times p},i=0,1,$ and let 
\begin{equation*}
K_{G_{i}}=\left[ 
\begin{array}{cc}
-\hat{N}_{i} & \hat{M}_{i}%
\end{array}%
\right] ,I_{G_{i}}=\left[ 
\begin{array}{c}
M_{i} \\ 
N_{i}%
\end{array}%
\right]
\end{equation*}%
denote the corresponding normalised SKR and SIR of $G_{i},$ and $\mathcal{K}%
_{G_{i}},\mathcal{I}_{G_{i}}$ be the kernel and image subspaces, 
\begin{align*}
\mathcal{K}_{G_{i}}& =\left\{ \left[ 
\begin{array}{c}
u \\ 
y%
\end{array}%
\right] \in \mathcal{H}_{2}:\mathcal{K}_{i}\left[ 
\begin{array}{c}
u \\ 
y%
\end{array}%
\right] =0\right\} , \\
\mathcal{I}_{G_{i}}& =\left\{ \left[ 
\begin{array}{c}
u \\ 
y%
\end{array}%
\right] :\left[ 
\begin{array}{c}
u \\ 
y%
\end{array}%
\right] =\mathcal{I}_{i}v,v\in \mathcal{H}_{2}\right\} .
\end{align*}%
It is supposed that the system operates either in the nominal or faulty
state, and the nominal and faulty models are represented by sub-index $0$
and sub-index $1$, respectively, i.e. any data vector $\left[ 
\begin{array}{c}
u \\ 
y%
\end{array}%
\right] $ belongs to $\mathcal{I}_{G_{0}}\cup \mathcal{I}_{G_{1}}.$

\bigskip

For the fault detection purpose, we now construct two projection-based
residual generators, 
\begin{align}
\mathcal{P}_{\mathcal{I}_{G_{i}}}& =\mathcal{L}_{I_{G_{i}}}\mathcal{L}%
_{I_{G_{i}}}^{\ast },p_{\mathcal{I}_{G_{i}}}=\mathcal{P}_{\mathcal{I}%
_{G_{i}}}\left[ 
\begin{array}{c}
u \\ 
y%
\end{array}%
\right] ,i=0,1,  \notag \\
r_{\mathcal{I}_{G_{i}}}& =\left[ 
\begin{array}{c}
u \\ 
y%
\end{array}%
\right] -p_{\mathcal{I}_{G_{i}}}=\left( \mathcal{I}-\mathcal{L}_{I_{G_{i}}}%
\mathcal{L}_{I_{G_{i}}}^{\ast }\right) \left[ 
\begin{array}{c}
u \\ 
y%
\end{array}%
\right] .  \label{eq5-2}
\end{align}%
It is evident that, without considering model uncertainties, 
\begin{equation*}
\left\{ 
\begin{array}{l}
r_{\mathcal{I}_{G_{0}}}=0,\text{ fault-free,} \\ 
r_{\mathcal{I}_{G_{1}}}=0,\text{ faulty.}%
\end{array}%
\right.
\end{equation*}%
Consequently, the (ideal) detection logic seems to be%
\begin{equation}
\left\{ 
\begin{array}{l}
\left\Vert r_{\mathcal{I}_{G_{0}}}\right\Vert _{2}\leq J_{th,0}\text{ }%
\Rightarrow \text{fault-free, } \\ 
\left\Vert r_{\mathcal{I}_{G_{1}}}\right\Vert _{2}\leq J_{th,1}\Rightarrow 
\text{faulty,}%
\end{array}%
\right.  \label{eq5-3}
\end{equation}%
with $J_{th,0}$ and $J_{th,1}$ denoting the corresponding thresholds.

\bigskip

In order to study threshold setting for the residual generator (\ref{eq5-2}%
), next, we analyse relations between $\mathcal{P}_{\mathcal{I}%
_{G_{i}}},i=0,1,$ and, associated with it, the dynamics of $r_{\mathcal{I}%
_{G_{i}}}$. Aiming at addressing practical cases, it is assumed that 
\begin{equation}
0<\vec{\delta}\left( \mathcal{I}_{G_{0}},\mathcal{I}_{G_{1}}\right) =\vec{%
\delta}\left( \mathcal{I}_{G_{1}},\mathcal{I}_{G_{0}}\right) =\delta \left( 
\mathcal{I}_{G_{0}},\mathcal{I}_{G_{1}}\right) <1.  \label{eq5-4}
\end{equation}%
The following theorem given by \cite{Feintuch_book} (Theorem 9.1.4) is
essential for our study.

\begin{Theo}
\label{Theo5-1}Suppose that $\mathcal{V}_{1}$ and $\mathcal{V}_{2}$ are two
closed subspaces of $\mathcal{H}.$ Then $\delta \left( \mathcal{V}_{1},%
\mathcal{V}_{2}\right) <1$ if and only if operator $\mathcal{P}_{\mathcal{V}%
_{1}}:\mathcal{V}_{2}$ $\rightarrow \mathcal{V}_{1}$ is one-to-one and onto.
\end{Theo}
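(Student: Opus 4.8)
The plan is to study the single bounded operator $T:=\mathcal{P}_{\mathcal{V}_1}\big|_{\mathcal{V}_2}:\mathcal{V}_2\to\mathcal{V}_1$ (well defined because $\mathcal{P}_{\mathcal{V}_1}$ maps into $\mathcal{V}_1$, and $\|T\|\le 1$) and to translate each of the two directed gaps into a quantitative statement about $T$ and its adjoint. First I would record the adjoint identity $T^{*}=\mathcal{P}_{\mathcal{V}_2}\big|_{\mathcal{V}_1}$: for $x\in\mathcal{V}_2$ and $y\in\mathcal{V}_1$ the self-adjointness of the projections in (\ref{eq2-12}) gives $\left\langle \mathcal{P}_{\mathcal{V}_1}x,y\right\rangle =\left\langle x,\mathcal{P}_{\mathcal{V}_1}y\right\rangle =\left\langle x,y\right\rangle =\left\langle x,\mathcal{P}_{\mathcal{V}_2}y\right\rangle$, using $\mathcal{P}_{\mathcal{V}_1}y=y$ and $\mathcal{P}_{\mathcal{V}_2}x=x$. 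This symmetry is exactly what couples the two directed gaps, so establishing it cleanly up front is essential.

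The key computation is to rewrite the directed gap in terms of $T$. Since $\mathcal{P}_{\mathcal{V}_2}x=x$ for $x\in\mathcal{V}_2$ and $\mathcal{P}_{\mathcal{V}_1}$ is an orthogonal projection, every unit vector $x\in\mathcal{V}_2$ satisfies $\left\Vert (\mathcal{I}-\mathcal{P}_{\mathcal{V}_1})x\right\Vert^{2}=1-\left\Vert \mathcal{P}_{\mathcal{V}_1}x\right\Vert^{2}=1-\left\Vert Tx\right\Vert^{2}$. Taking the supremum in definition (\ref{eq2-15}) yields $\vec{\delta}(\mathcal{V}_2,\mathcal{V}_1)^{2}=1-\inf_{x\in\mathcal{V}_2,\,\|x\|=1}\left\Vert Tx\right\Vert^{2}$, so that $\vec{\delta}(\mathcal{V}_2,\mathcal{V}_1)<1$ holds precisely when $T$ is bounded below. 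Applying the same argument to $T^{*}=\mathcal{P}_{\mathcal{V}_2}\big|_{\mathcal{V}_1}$ shows $\vec{\delta}(\mathcal{V}_1,\mathcal{V}_2)<1$ holds precisely when $T^{*}$ is bounded below.

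For the forward implication, suppose $\delta(\mathcal{V}_1,\mathcal{V}_2)<1$, so both directed gaps are strictly below one. Then $T$ is bounded below, hence one-to-one with closed range, while $T^{*}$ is bounded below, hence one-to-one, i.e. $\ker T^{*}=\{0\}$. Since $\overline{\operatorname{ran}T}=(\ker T^{*})^{\perp}$ inside $\mathcal{V}_1$, the already-closed range of $T$ equals all of $\mathcal{V}_1$, so $T$ is onto as well. Conversely, if $T$ is a bounded bijection of $\mathcal{V}_2$ onto $\mathcal{V}_1$, the bounded inverse theorem gives $\left\Vert Tx\right\Vert \ge \|T^{-1}\|^{-1}\left\Vert x\right\Vert$, so $T$ is bounded below and $\vec{\delta}(\mathcal{V}_2,\mathcal{V}_1)<1$; moreover $T^{*}$ is then a bijection with $(T^{*})^{-1}=(T^{-1})^{*}$, hence also bounded below, giving $\vec{\delta}(\mathcal{V}_1,\mathcal{V}_2)<1$ and therefore $\delta(\mathcal{V}_1,\mathcal{V}_2)<1$.

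I expect the main obstacle to be the functional-analytic bookkeeping in the forward direction, namely passing from ``$T$ has closed range'' together with ``$\ker T^{*}=\{0\}$'' to surjectivity via the orthogonal decomposition $\mathcal{V}_1=\overline{\operatorname{ran}T}\oplus\ker T^{*}$. The adjoint identity of the first paragraph is what lets both directed gaps speak about the single operator $T$, and once $\vec{\delta}(\mathcal{V}_2,\mathcal{V}_1)^{2}$ is rewritten as $1-\inf\left\Vert Tx\right\Vert^{2}$ the remaining arguments are standard Hilbert-space operator theory.
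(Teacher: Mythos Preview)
Your proof is correct and complete. The paper itself does not prove this theorem --- it is quoted from \cite{Feintuch_book} (Theorem 9.1.4) --- but the Remark following the statement sketches the forward direction via a different device: it invokes the identity $\left\Vert \mathcal{P}_{\mathcal{V}_1}-\mathcal{P}_{\mathcal{V}_2}\right\Vert =\delta(\mathcal{V}_1,\mathcal{V}_2)$, so that $\delta<1$ makes $\mathcal{I}+\mathcal{P}_{\mathcal{V}_2}-\mathcal{P}_{\mathcal{V}_1}$ invertible by a Neumann-series argument, and then the algebraic identity $\mathcal{P}_{\mathcal{V}_1}(\mathcal{I}+\mathcal{P}_{\mathcal{V}_2}-\mathcal{P}_{\mathcal{V}_1})=\mathcal{P}_{\mathcal{V}_1}\mathcal{P}_{\mathcal{V}_2}$ forces $\mathcal{P}_{\mathcal{V}_1}\mathcal{P}_{\mathcal{V}_2}\mathcal{H}=\mathcal{P}_{\mathcal{V}_1}\mathcal{H}=\mathcal{V}_1$, i.e.\ surjectivity. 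Your route is genuinely different: you work entirely with the restricted operator $T=\mathcal{P}_{\mathcal{V}_1}\big|_{\mathcal{V}_2}$ and its adjoint $T^{*}=\mathcal{P}_{\mathcal{V}_2}\big|_{\mathcal{V}_1}$, translate each directed gap into a lower bound for $T$ (resp.\ $T^{*}$), and then use the standard closed-range/kernel-of-adjoint decomposition. The payoff of your approach is that it is self-contained (you do not need the projection-difference identity as an input), it handles both directions and both directed gaps symmetrically, and it makes transparent exactly which directed gap controls injectivity and which controls surjectivity. The Feintuch/Neumann-series argument is slicker for surjectivity once the projection-difference identity is granted, but that identity is itself nontrivial, and the sketch in the paper's Remark does not address the converse implication at all.
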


\begin{Rem}
The proof given in \cite{Feintuch_book} shows that 
\begin{align*}
\left\Vert \mathcal{P}_{\mathcal{V}_{1}}-\mathcal{P}_{\mathcal{V}%
_{2}}\right\Vert & =\delta \left( \mathcal{V}_{1},\mathcal{V}_{2}\right)
<1\Longrightarrow \text{operator }\left( \mathcal{I}+\mathcal{P}_{\mathcal{V}%
_{2}}-\mathcal{P}_{\mathcal{V}_{1}}\right) \text{ is invertible,} \\
& \Longrightarrow \mathcal{P}_{\mathcal{V}_{1}}\mathcal{H}=\mathcal{P}_{%
\mathcal{V}_{1}}\left( \mathcal{I}+\mathcal{P}_{\mathcal{V}_{2}}-\mathcal{P}%
_{\mathcal{V}_{1}}\right) \mathcal{H}=\mathcal{P}_{\mathcal{V}_{1}}\mathcal{P%
}_{\mathcal{V}_{2}}\mathcal{H}.
\end{align*}%
As a result, $\mathcal{P}_{\mathcal{V}_{1}}$ maps $\mathcal{V}_{2}$ to $%
\mathcal{V}_{1}$ and it is one-to-one.
\end{Rem}

Applying this theorem to our case results in the following corollary.

\begin{Corol}
\label{Co5-1}Given $\mathcal{I}_{G_{i}},i=0,1,$ and the corresponding
projection operators $\mathcal{P}_{\mathcal{I}_{G_{i}}},$ then 
\begin{gather}
\mathcal{I}_{G_{i}}\cap \mathcal{I}_{G_{j}}\neq \mathcal{O},  \label{eq5-6}
\\
\exists \left[ 
\begin{array}{c}
u \\ 
y%
\end{array}%
\right] \left( \neq 0\right) ,\left[ 
\begin{array}{c}
u \\ 
y%
\end{array}%
\right] \in \mathcal{I}_{G_{i}},i=0,1\Longrightarrow r_{\mathcal{I}%
_{G_{j}}}=0,j\neq i,  \label{eq5-7} \\
\forall \left[ 
\begin{array}{c}
u \\ 
y%
\end{array}%
\right] \in \mathcal{I}_{G_{j}},\left\Vert r_{\mathcal{I}_{G_{i}}}\right%
\Vert _{2}=dist\left( \left[ 
\begin{array}{c}
u \\ 
y%
\end{array}%
\right] ,\mathcal{I}_{G_{i}}\right)  \notag \\
=\sqrt{\left\Vert \left[ 
\begin{array}{c}
u \\ 
y%
\end{array}%
\right] \right\Vert _{2}^{2}-\left\Vert \mathcal{P}_{\mathcal{I}_{G_{i}}}%
\mathcal{P}_{\mathcal{I}_{G_{j}}}\left[ 
\begin{array}{c}
u \\ 
y%
\end{array}%
\right] \right\Vert _{2}^{2}}<\left\Vert \left[ 
\begin{array}{c}
u \\ 
y%
\end{array}%
\right] \right\Vert _{2},  \label{eq5-8}
\end{gather}%
where $\mathcal{O}$ denotes empty set.
\end{Corol}

\begin{proof}
The claim (\ref{eq5-6}) is an immediate result of Theorem \ref{Theo5-1}.
From (\ref{eq5-6}) follows (\ref{eq5-7}). Equation in (\ref{eq5-8}) is
straightforward, because%
\begin{align*}
\left\Vert \left[ 
\begin{array}{c}
u \\ 
y%
\end{array}%
\right] \right\Vert _{2}^{2}& =\left\Vert \mathcal{P}_{\mathcal{I}_{G_{i}}}%
\left[ 
\begin{array}{c}
u \\ 
y%
\end{array}%
\right] \right\Vert _{2}^{2}+\left\Vert r_{\mathcal{I}_{G_{i}}}\right\Vert
_{2}^{2}, \\
\forall \left[ 
\begin{array}{c}
u \\ 
y%
\end{array}%
\right] & \in \mathcal{I}_{G_{j}},\left[ 
\begin{array}{c}
u \\ 
y%
\end{array}%
\right] =\mathcal{P}_{\mathcal{I}_{G_{j}}}\left[ 
\begin{array}{c}
u \\ 
y%
\end{array}%
\right] .
\end{align*}%
The inequality is due to the fact that 
\begin{equation*}
\forall \left[ 
\begin{array}{c}
u \\ 
y%
\end{array}%
\right] ,\mathcal{P}_{\mathcal{I}_{G_{i}}}\mathcal{P}_{\mathcal{I}_{G_{j}}}%
\left[ 
\begin{array}{c}
u \\ 
y%
\end{array}%
\right] \neq 0.
\end{equation*}
\end{proof}

Observe that (\ref{eq5-8}) implies, 
\begin{equation*}
\forall \left[ 
\begin{array}{c}
u \\ 
y%
\end{array}%
\right] \in \mathcal{I}_{G_{j}},p_{\mathcal{I}_{G_{i}}}=\mathcal{P}_{%
\mathcal{I}_{G_{i}}}\left[ 
\begin{array}{c}
u \\ 
y%
\end{array}%
\right] \in \mathcal{I}_{G_{i}}\cap \mathcal{I}_{G_{j}},
\end{equation*}%
that is, the projection of $\left[ 
\begin{array}{c}
u \\ 
y%
\end{array}%
\right] $ belonging to $\mathcal{I}_{G_{j}}$ onto $\mathcal{I}_{G_{i}}$\ is
a vector in the intersection set of $\mathcal{I}_{G_{j}}$ and $\mathcal{I}%
_{G_{i}}.$ Accordingly, $\left\Vert r_{\mathcal{I}_{G_{i}}}\right\Vert _{2}$
reaches its largest value at $\left[ 
\begin{array}{c}
u \\ 
y%
\end{array}%
\right] \in \mathcal{I}_{G_{j}},$ whose projection onto $\mathcal{I}%
_{G_{i}}\cap \mathcal{I}_{G_{j}}$ is at the smallest value. 

\bigskip 

Consider\ that 
\begin{eqnarray}
\mathcal{I}_{G_{i}} &=&\left( \mathcal{I}-\mathcal{P}_{\mathcal{I}%
_{G_{j}}}\right) \mathcal{I}_{G_{i}}+\mathcal{P}_{\mathcal{I}_{G_{j}}}%
\mathcal{I}_{G_{i}},  \label{eq5-8a} \\
\mathcal{P}_{\mathcal{I}_{G_{j}}}\mathcal{I}_{G_{i}} &=&\mathcal{I}%
_{G_{i}}\cap \mathcal{I}_{G_{j}}\neq \mathcal{O},\left( \mathcal{I}-\mathcal{%
P}_{\mathcal{I}_{G_{j}}}\right) \mathcal{I}_{G_{i}}\subseteq \mathcal{I}%
_{G_{j}}^{\perp }.
\end{eqnarray}%
This means that the whole process data subspace, $\mathcal{I}_{G}=\mathcal{I}%
_{G_{0}}\cup \mathcal{I}_{G_{1}},$ consists of three sets, 
\begin{align}
\mathcal{I}_{G}& =\mathcal{S}_{I}\oplus \mathcal{S}_{II}\oplus \mathcal{S}%
_{III},  \label{eq10-9} \\
\mathcal{S}_{II}& =\mathcal{I}_{G_{0}}\cap \mathcal{I}_{G_{1}},  \notag \\
\mathcal{S}_{I}& =\left\{ \left[ 
\begin{array}{c}
u \\ 
y%
\end{array}%
\right] ,\left[ 
\begin{array}{c}
u \\ 
y%
\end{array}%
\right] \in \mathcal{I}_{G_{0}}\cap \mathcal{I}_{G_{1}}^{\perp }\right\} , 
\notag \\
\mathcal{S}_{III}& =\left\{ \left[ 
\begin{array}{c}
u \\ 
y%
\end{array}%
\right] ,\left[ 
\begin{array}{c}
u \\ 
y%
\end{array}%
\right] \in \mathcal{I}_{G_{1}}\cap \mathcal{I}_{G_{0}}^{\perp }\right\} . 
\notag
\end{align}%
Correspondingly, the dynamics of residuals $r_{\mathcal{I}_{G_{i}}},i=0,1,$
can be divided into three ranges indicating different operation states of
the system:%
\begin{equation}
\left\{ 
\begin{array}{l}
\text{I. \ \ }0\leq \left\Vert r_{\mathcal{I}_{G_{0}}}\right\Vert _{2}\leq
J_{th,0}\text{ and }J_{th,1}<\left\Vert r_{\mathcal{I}_{G_{1}}}\right\Vert
_{2}\leq \delta \left( \mathcal{I}_{G_{0}},\mathcal{I}_{G_{1}}\right)
\left\Vert \left[ 
\begin{array}{c}
u \\ 
y%
\end{array}%
\right] \right\Vert _{2}, \\ 
\text{II. \ }0\leq \left\Vert r_{\mathcal{I}_{G_{0}}}\right\Vert _{2}\leq
J_{th,0}\text{ and }0\leq \left\Vert r_{\mathcal{I}_{G_{1}}}\right\Vert
_{2}\leq J_{th,1}, \\ 
\text{III. }J_{th,0}<\left\Vert r_{\mathcal{I}_{G_{0}}}\right\Vert _{2}\leq
\delta \left( \mathcal{I}_{G_{0}},\mathcal{I}_{G_{1}}\right) \left\Vert %
\left[ 
\begin{array}{c}
u \\ 
y%
\end{array}%
\right] \right\Vert _{2}\text{and }0\leq \left\Vert r_{\mathcal{I}%
_{G_{1}}}\right\Vert _{2}\leq J_{th,1}.%
\end{array}%
\right.   \label{eq5-9}
\end{equation}%
Here, the thresholds $J_{th,0}$ and $J_{th,1}$ are introduced to ensure a
reliable detection in case of model uncertainties and thus can be determined
using the gap metric schemes proposed in the previous sections. It is
evident that Cases I and III, corresponding to $\mathcal{S}_{I}$ and $%
\mathcal{S}_{III},$ respectively, indicate fault-free and faulty operations,
respectively. Case II is the result of (\ref{eq5-7}), which raises our
interest for a reasonable and convincing interpretation.

\bigskip

It is well-known that faults caused by e.g. ageing is a longtime process. It
begins with incipient degradation that does not affect the system dynamics
significantly, and thus is interpreted as the transitional phase from the
fault-free to faulty operation. Accordingly, the operations in this range
can be called incipient fault. This is one interpretation of Case II. A
further possible explanation for Case II is the so-called intermittent
faults which repeatedly occur in the process over a time interval and then
disappear. Their emergence and disappearance are activated, for instance, by
certain system operation conditions. That means, in Case II, for some input
signals (and so the corresponding outputs), the system may operate normally,
and then, as a response to the change of input signals, works in the faulty
operation. In summary, it can be concluded that, if the system operates in
range II, warning should be triggered to call operator's attention for
possible faults.

\bigskip

In order to gain a deeper insight into the above discussion, we would like
to introduce some useful system theoretic aspects in the sequel. The
following theorem is a summary of some relevant results in \cite%
{Feintuch_book} (Section 9.1).

\begin{Theo}
\label{Theo5-2}Suppose that $\mathcal{V}_{1}$ and $\mathcal{V}_{2}$ are two
closed subspaces of $\mathcal{H},$ $\delta \left( \mathcal{V}_{1},\mathcal{V}%
_{2}\right) <1,$ and the operator%
\begin{equation*}
\mathcal{M}_{21}:\mathcal{V}_{2}\rightarrow \mathcal{V}_{1},\forall x\in 
\mathcal{V}_{2},\mathcal{M}_{21}x=\mathcal{P}_{\mathcal{V}_{1}}x.
\end{equation*}%
is thus invertible. Then, it holds%
\begin{align*}
\mathcal{V}_{2}& =\left( \mathcal{I}\left( \mathcal{V}_{1}\right) +\mathcal{X%
}\right) \mathcal{V}_{1}, \\
\mathcal{X}& :\mathcal{V}_{1}\rightarrow \mathcal{V}_{1}^{\bot },\mathcal{X}%
=\left( \mathcal{I-P}_{\mathcal{V}_{1}}\right) \mathcal{M}_{21}^{-1},
\end{align*}%
where $\mathcal{I}\left( \mathcal{V}_{1}\right) $ is the identity operator
restricted to $\mathcal{V}_{1}.$
\end{Theo}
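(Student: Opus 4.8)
The plan is to read off the claimed representation directly from the orthogonal splitting $\mathcal{H}=\mathcal{V}_{1}\oplus \mathcal{V}_{1}^{\bot }$ once $\mathcal{M}_{21}$ has been inverted. Since $\delta \left( \mathcal{V}_{1},\mathcal{V}_{2}\right) <1$ by hypothesis, Theorem \ref{Theo5-1} guarantees that $\mathcal{M}_{21}:\mathcal{V}_{2}\rightarrow \mathcal{V}_{1}$ is one-to-one and onto, hence invertible, and its inverse $\mathcal{M}_{21}^{-1}:\mathcal{V}_{1}\rightarrow \mathcal{V}_{2}$ maps $\mathcal{V}_{1}$ bijectively onto $\mathcal{V}_{2}$. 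This invertibility is the only nontrivial input, and it is already available.

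First I would fix an arbitrary $w\in \mathcal{V}_{1}$ and set $y=\mathcal{M}_{21}^{-1}w$, so that $y\in \mathcal{V}_{2}$ and, since $\mathcal{M}_{21}$ is $\mathcal{P}_{\mathcal{V}_{1}}$ restricted to $\mathcal{V}_{2}$, we have $\mathcal{P}_{\mathcal{V}_{1}}y=\mathcal{M}_{21}y=w$. I would then decompose $y$ through the orthogonal projection onto $\mathcal{V}_{1}$:
\[
y=\mathcal{P}_{\mathcal{V}_{1}}y+\left( \mathcal{I}-\mathcal{P}_{\mathcal{V}_{1}}\right) y=w+\left( \mathcal{I}-\mathcal{P}_{\mathcal{V}_{1}}\right) \mathcal{M}_{21}^{-1}w=\left( \mathcal{I}\left( \mathcal{V}_{1}\right) +\mathcal{X}\right) w,
\]
where the last equality uses $\mathcal{X}=\left( \mathcal{I}-\mathcal{P}_{\mathcal{V}_{1}}\right) \mathcal{M}_{21}^{-1}$ and identifies the first summand with the restricted identity $\mathcal{I}\left( \mathcal{V}_{1}\right) w=w$. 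Note that the second summand lies in $\mathcal{V}_{1}^{\bot }$ because $\mathcal{I}-\mathcal{P}_{\mathcal{V}_{1}}=\mathcal{P}_{\mathcal{V}_{1}^{\bot }}$, which simultaneously confirms that $\mathcal{X}$ indeed maps $\mathcal{V}_{1}$ into $\mathcal{V}_{1}^{\bot }$ as stated.

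Finally I would let $w$ range over $\mathcal{V}_{1}$. Since $\mathcal{M}_{21}^{-1}$ is a bijection of $\mathcal{V}_{1}$ onto $\mathcal{V}_{2}$, the vectors $y=\left( \mathcal{I}\left( \mathcal{V}_{1}\right) +\mathcal{X}\right) w$ exhaust precisely $\mathcal{V}_{2}$, giving $\mathcal{V}_{2}=\left( \mathcal{I}\left( \mathcal{V}_{1}\right) +\mathcal{X}\right) \mathcal{V}_{1}$. There is no genuine obstacle here beyond bookkeeping: the entire content is carried by the invertibility of $\mathcal{M}_{21}$ from Theorem \ref{Theo5-1}, and the asserted formula is nothing more than the graph of $\mathcal{V}_{2}$ over $\mathcal{V}_{1}$ expressed in the coordinates supplied by the orthogonal decomposition. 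The only step deserving a line of care is verifying that $\mathcal{P}_{\mathcal{V}_{1}}y=w$ exactly (not merely $\mathcal{P}_{\mathcal{V}_{1}}y\in \mathcal{V}_{1}$), which follows immediately from $\mathcal{M}_{21}\mathcal{M}_{21}^{-1}=\mathcal{I}\left( \mathcal{V}_{1}\right) $.
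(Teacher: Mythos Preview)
Your argument is correct and is precisely the standard graph-representation proof: invert $\mathcal{M}_{21}$, decompose each $y\in\mathcal{V}_{2}$ along $\mathcal{V}_{1}\oplus\mathcal{V}_{1}^{\bot}$, and observe that the $\mathcal{V}_{1}$-component is exactly $w=\mathcal{M}_{21}y$ so that $y=(\mathcal{I}(\mathcal{V}_{1})+\mathcal{X})w$. The paper does not supply its own proof of this theorem; it is stated as a summary of results from Feintuch's book (Section~9.1), so there is nothing to compare against beyond noting that your proof is the expected one.
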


According to this theorem, we have, for instance, 
\begin{equation}
\mathcal{I}_{G_{1}}=\left( \mathcal{I}\left( \mathcal{I}_{G_{0}}\right) +%
\mathcal{X}\right) \mathcal{I}_{G_{0}},\mathcal{X}=\left( \mathcal{I-P}_{%
\mathcal{I}_{G_{0}}}\right) \mathcal{M}_{10}^{-1}.  \label{eq5-17}
\end{equation}%
Equation (\ref{eq5-17}) reveals the relation between fault-free and faulty
subspaces. The subspace in $\mathcal{I}_{G_{1}}$, which is orthogonal to the
(fault-free) subspace $\mathcal{I}_{G_{0}}$ and thus the corresponding
residual dynamics is represented by Case III in (\ref{eq5-9}), is modelled
by $\mathcal{XI}_{G_{0}},\mathcal{X}:\mathcal{I}_{G_{0}}\rightarrow \mathcal{%
I}_{G_{0}}^{\bot }.$ The transitional phase from the fault-free to faulty
operation (Case II) is represented by $\mathcal{I}_{G_{1}}-\mathcal{XI}%
_{G_{0}}.$ Although it is difficult to give an analytical form of $\mathcal{I%
}_{G_{1}}-\mathcal{XI}_{G_{0}},$ it is possible to characterise this
subspace by means of simulation or data, for instance, using the so-called
randomised algorithm technique \cite{DLKautomatica2019}. This would be
helpful to understand the mechanism of incipient and intermittent faults.

\bigskip

As a summary of this subsection, Figure 2 sketches the system operation
state and the corresponding detection logic. 
\begin{figure}[h]
\centering\includegraphics[width=13.5cm,height=5cm]{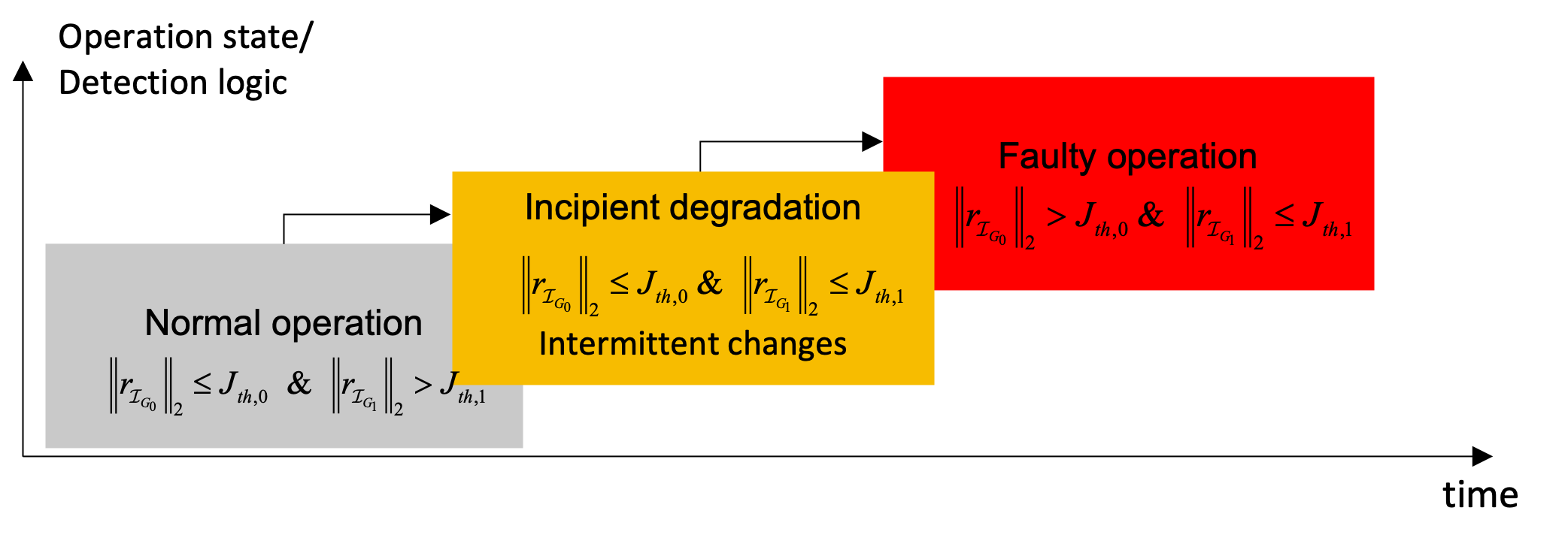}
\caption{Operation states and detection logic}
\end{figure}

\subsection{Fault isolation: multi-class classification}

A straightforward extension of the binary classification method introduced
in the previous subsection to multi-class classification provides us with a
projection-based method for fault isolation. To this end, we first define
fault classes under consideration. Let $G_{i}\in \mathcal{RL}_{\infty
}^{m\times p},i=1,\cdots ,M,$ be (uncertain) transfer functions representing
the $M$ faulty classes of the nominal system (\ref{eq2-1}), and $\mathcal{K}%
_{G_{i}},\mathcal{I}_{G_{i}}$ denote the corresponding kernel and image
subspaces, 
\begin{align*}
\mathcal{K}_{G_{i}}& =\left\{ \left[ 
\begin{array}{c}
u \\ 
y%
\end{array}%
\right] \in \mathcal{H}_{2}:K_{G_{i}}\left[ 
\begin{array}{c}
u \\ 
y%
\end{array}%
\right] =0\right\} , \\
\mathcal{I}_{G_{i}}& =\left\{ \left[ 
\begin{array}{c}
u \\ 
y%
\end{array}%
\right] :\left[ 
\begin{array}{c}
u \\ 
y%
\end{array}%
\right] =I_{G_{i}}v,v\in \mathcal{H}_{2}\right\} .
\end{align*}%
Here, $K_{G_{i}}$ and $I_{G_{i}},i=1,\cdots ,M,$ are the normalised SKR and
SIR of $G_{i},$ which are known and defined by%
\begin{equation*}
K_{G_{i}}=\left[ 
\begin{array}{cc}
-\hat{N}_{i} & \hat{M}_{i}%
\end{array}%
\right] ,I_{G_{i}}=\left[ 
\begin{array}{c}
M_{i} \\ 
N_{i}%
\end{array}%
\right] .
\end{equation*}%
Considering the existence of possible model uncertainties, let $\Delta
_{I_{G_{i}}},i=1,\cdots ,M,$ denote uncertainties satisfying 
\begin{equation*}
\left\Vert \Delta _{I_{G_{i}}}\right\Vert _{\infty }\leq \delta _{I_{i}}<1.
\end{equation*}%
Recall that 
\begin{gather*}
\left\{ \left[ 
\begin{array}{c}
u \\ 
y%
\end{array}%
\right] :\left[ 
\begin{array}{c}
u \\ 
y%
\end{array}%
\right] =\left( I_{G_{i}}+\Delta _{I_{G_{i}}}\right) v,v\in \mathcal{H}%
_{2},\left\Vert \Delta _{I_{G_{i}}}\right\Vert _{\infty }\leq \delta
_{I_{i}}\right\} \\
=\left\{ \mathcal{I}_{G_{i}}+\Delta \mathcal{I}_{G_{i}}:\delta \left( 
\mathcal{I}_{G_{i}},\mathcal{I}_{G_{i}}+\Delta \mathcal{I}_{G_{i}}\right)
\leq \delta _{I_{i}}\right\} .
\end{gather*}%
This motivates us to define fault classifiability (isolability) as follows.

\begin{Def}
Given fault classes $G_{i},i=1,\cdots ,M.$ The $j$-th fault class with $j\in
\left\{ 1,\cdots ,M\right\} $ is called classifiable, if 
\begin{equation}
\forall i\in \left\{ 1,\cdots ,M,i\neq j\right\} ,\delta \left( \mathcal{I}%
_{G_{i}},\mathcal{I}_{G_{j}}\right) >\max \left\{ \delta _{I_{i}},\delta
_{I_{j}}\right\} .  \label{eq5-22}
\end{equation}%
The faults are called classifiable or equivalently isolable, if all fault
classes are classifiable, i.e.%
\begin{equation}
\forall i,j\in \left\{ 1,\cdots ,M\right\} ,i\neq j,\delta \left( \mathcal{I}%
_{G_{i}},\mathcal{I}_{G_{j}}\right) >\max \left\{ \delta _{I_{i}},\delta
_{I_{j}}\right\} .  \label{eq5-23}
\end{equation}
\end{Def}

Condition (\ref{eq5-22}) implies that the binary classification method
introduced in the last subsection can be applied to classify (isolate) the $j
$-th fault class. And, if this is true for all fault classes, as described
by condition (\ref{eq5-23}), then all faults are classifiable (isolable). In
the sequel, on the assumption that the fault classes under consideration are
classifiable, a fault classification algorithm is proposed. Suppose that

\begin{itemize}
\item process measurement data $\left( u,y\right) $ have been collected and,
based on them and by means of the projection-based detection algorithms
proposed in Section 3, the fault has been detected,

\item the collected data $\left( u,y\right) $ as well as the
projection-based binary fault classification method proposed in the last
subsection are applied to the classification of the detected fault. The
thresholds adopted in the classification algorithm are denoted by $%
J_{th,i},i=1,\cdots ,M,$ where%
\begin{equation}
J_{th,i}=\frac{\delta _{I_{i}}}{\sqrt{1-\delta _{I_{i}}^{2}}}\left\Vert 
\mathcal{P}_{\mathcal{I}_{G_{i}}}\left[ 
\begin{array}{c}
u \\ 
y%
\end{array}%
\right] \right\Vert _{2}  \label{eq4-24}
\end{equation}%
following Theorem \ref{Theo3-1}.
\end{itemize}

\bigskip

\textbf{Fault classification algorithm}

\begin{itemize}
\item Generation of $M$ projection-based residuals,%
\begin{align*}
\mathcal{P}_{\mathcal{I}_{G_{i}}}& =\mathcal{L}_{I_{G_{i}}}\mathcal{L}%
_{I_{G_{i}}}^{\ast },p_{\mathcal{I}_{G_{i}}}=\mathcal{P}_{\mathcal{I}%
_{G_{i}}}\left[ 
\begin{array}{c}
u \\ 
y%
\end{array}%
\right] ,i=1,\cdots ,M, \\
r_{\mathcal{I}_{G_{i}}}& =\left[ 
\begin{array}{c}
u \\ 
y%
\end{array}%
\right] -p_{\mathcal{I}_{G_{i}}}=\left( \mathcal{I}-\mathcal{L}_{I_{G_{i}}}%
\mathcal{L}_{I_{G_{i}}}^{\ast }\right) \left[ 
\begin{array}{c}
u \\ 
y%
\end{array}%
\right] ;
\end{align*}

\item Decision logic%
\begin{align}
\left\Vert r_{\mathcal{I}_{G_{j}}}\right\Vert _{2}& \leq J_{th,j}\text{ and }%
\forall i\in \left\{ 1,\cdots ,M,i\neq j\right\} ,J_{th,i}<\left\Vert r_{%
\mathcal{I}_{G_{i}}}\right\Vert _{2}\leq \delta \left( \mathcal{I}_{G_{j}},%
\mathcal{I}_{G_{i}}\right) \left\Vert \left[ 
\begin{array}{c}
u \\ 
y%
\end{array}%
\right] \right\Vert _{2}  \label{eq5-25} \\
& \Longrightarrow \text{the fault belongs to the }j\text{-th fault class,} 
\notag \\
\left\Vert r_{\mathcal{I}_{G_{j}}}\right\Vert _{2}& \leq J_{th,j}\text{ and }%
\exists i\in \left\{ 1,\cdots ,M,i\neq j\right\} ,\left\Vert r_{\mathcal{I}%
_{G_{i}}}\right\Vert _{2}\leq J_{th,i}  \label{eq5-26} \\
& \Longrightarrow \text{the fault belongs to both the }j\text{-th and }i%
\text{-th fault classes.}  \notag
\end{align}
\end{itemize}

Next, we briefly explain the proposed algorithm. It is clear that in the
first step, the measurement data $\left( u,y\right) $ are first projected
onto $\mathcal{I}_{G_{i}}$ and, based on it, the residuals $r_{\mathcal{I}%
_{G_{i}}},i=1,\cdots ,M,$ are generated. In the second step, by means of the
decision logic (\ref{eq5-25}) a decision is made to which fault class the
measurement data $\left( u,y\right) $ belong. As discussed in the last
subsection, it is possible that the fault may simultaneously belong to more
than one fault class. Accordingly, the rule (\ref{eq5-26}) is introduced.

\section{Two modified projection-based fault detection schemes}

During real-time implementation of the projection-based residual generator (%
\ref{eq3-3}), the following two problems may arise: (i) the involved online
computation of $\left\Vert r_{\mathcal{I}_{G}}\right\Vert _{2}$ as described
by (\ref{eq3-23}) or (\ref{eq3-24}), and (ii) the (infinitely) long time
interval required for the computation of $l_{2}$-norm of the residual
signal. In this section, two alternative realisation schemes are proposed.

\subsection{A fault detection scheme with projection onto $\mathcal{L}_{2}$
space}

Comparing with a standard observer-based residual generation and evaluation
makes it clear that the term $\left\Vert \mathcal{P}_{\mathcal{H}_{2}^{\bot
}}\mathcal{L}_{I_{G}^{\sim }}\left[ 
\begin{array}{c}
u \\ 
y%
\end{array}%
\right] \right\Vert _{2}$ in the projection-based residual requires extra
(online) computation in addition to the implementation of an observer.
Moreover, the discussion in Section 3 reveals that this term is dedicated to
detecting "past" faults in the (nominal) system image subspace. In other
words, neglecting the term $\left\Vert \mathcal{P}_{\mathcal{H}_{2}^{\bot }}%
\mathcal{L}_{I_{G}^{\sim }}\left[ 
\begin{array}{c}
u \\ 
y%
\end{array}%
\right] \right\Vert _{2}$ could (considerably) reduce the online
computation, although at the cost of missing detection of those faults in
the set defined by $\mathcal{L}_{I_{G}}\mathcal{P}_{\mathcal{H}_{2}^{\bot }}%
\mathcal{L}_{I_{G}^{\sim }}\left[ 
\begin{array}{c}
u \\ 
y%
\end{array}%
\right] .$ In this subsection, we propose a projection-based scheme for the
realisation of this trade-off strategy.

\bigskip

Remember that the first term of $\left\Vert r_{\mathcal{I}%
_{G_{0}}}\right\Vert _{2}$ in (\ref{eq3-24}) is the realisation of $\mathcal{%
L}_{K_{G_{0}}^{\sim }}\mathcal{L}_{K_{G_{0}}},$ which is an operator of the
orthogonal projection given by%
\begin{equation*}
\mathcal{L}_{K_{G_{0}}^{\sim }}\mathcal{L}_{K_{G_{0}}}:\mathcal{L}%
_{2}\rightarrow \mathcal{L}_{2},\left( \mathcal{L}_{K_{G_{0}}^{\sim }}%
\mathcal{L}_{K_{G_{0}}}\right) ^{2}=\mathcal{L}_{K_{G_{0}}^{\sim }}\mathcal{L%
}_{K_{G_{0}}}.
\end{equation*}%
Here, $G_{0}$ represents the nominal system transfer matrix. It is clear
that 
\begin{equation}
\mathcal{P}_{\mathcal{K}_{G_{0}}}\left( \mathcal{L}_{2}\right) =\mathcal{I}-%
\mathcal{L}_{K_{G_{0}}^{\sim }}\mathcal{L}_{K_{G_{0}}}:\mathcal{L}%
_{2}\rightarrow \mathcal{L}_{2}  \label{eq6-1}
\end{equation}%
defines an orthogonal projection onto the kernel subspace $\mathcal{K}%
_{G_{0}}$ in $\mathcal{L}_{2},$%
\begin{equation*}
\mathcal{K}_{G_{0}}=\left\{ \left[ 
\begin{array}{c}
u \\ 
y%
\end{array}%
\right] \in \mathcal{L}_{2}:\left[ 
\begin{array}{cc}
-\hat{N}_{0} & \hat{M}_{0}%
\end{array}%
\right] \left[ 
\begin{array}{c}
u \\ 
y%
\end{array}%
\right] =0\hspace{-2pt}\right\} ,
\end{equation*}%
and, accordingly, 
\begin{equation}
r_{\mathcal{K}_{G_{0}}}=\left( \mathcal{I}-\mathcal{P}_{\mathcal{K}%
_{G_{0}}}\right) \left[ 
\begin{array}{c}
u \\ 
y%
\end{array}%
\right] =\mathcal{L}_{K_{G_{0}}^{\sim }}\mathcal{L}_{K_{G_{0}}}\left[ 
\begin{array}{c}
u \\ 
y%
\end{array}%
\right]  \label{eq6-2}
\end{equation}%
gives the implementation form of the residual vector. Consequently, 
\begin{equation*}
\left\Vert r_{\mathcal{K}_{G_{0}}}\right\Vert _{2}=\left\Vert \mathcal{L}%
_{K_{G_{0}}^{\sim }}\mathcal{L}_{K_{G_{0}}}\left[ 
\begin{array}{c}
u \\ 
y%
\end{array}%
\right] \right\Vert _{2}=\left\Vert \mathcal{L}_{K_{G_{0}}}\left[ 
\begin{array}{c}
u \\ 
y%
\end{array}%
\right] \right\Vert _{2},
\end{equation*}%
as expected. That means, for the detection purpose with the residual
evaluation function $\left\Vert r_{\mathcal{K}_{G_{0}}}\right\Vert _{2},$
the needed online computation is the observer-based residual generator (\ref%
{eq2-7a})-(\ref{eq2-7b}) or equivalently the SKR (\ref{eq2-7c}) leading to%
\begin{equation}
\left\Vert r_{0}\right\Vert _{2}=\left\Vert r_{\mathcal{K}%
_{G_{0}}}\right\Vert _{2}.  \label{eq6-13}
\end{equation}%
Next, we study threshold setting. Analogue to the uncertainty model (\ref%
{eq3-5})-(\ref{eq3-5a}), left-coprime factor uncertainty is introduced, 
\begin{align}
G& =\hat{M}^{-1}\hat{N}=\left( \hat{M}_{0}+\Delta _{\hat{M}}\right)
^{-1}\left( \hat{N}_{0}+\Delta _{\hat{N}}\right) ,\Delta _{\hat{N}},\Delta _{%
\hat{M}}\in \mathcal{RH}_{\infty },  \label{eq6-12a} \\
K_{G}& =\left[ 
\begin{array}{cc}
-\hat{N} & \text{ }\hat{M}%
\end{array}%
\right] =\left[ 
\begin{array}{cc}
-\hat{N}_{0}-\Delta _{\hat{N}} & \hat{M}_{0}+\Delta _{\hat{M}}%
\end{array}%
\right] =K_{G_{0}}+\Delta _{K},  \label{eq6-12b} \\
K_{G_{0}}& =\left[ 
\begin{array}{cc}
-\hat{N}_{0} & \hat{M}_{0}%
\end{array}%
\right] ,\Delta _{K}=\left[ 
\begin{array}{cc}
-\Delta _{\hat{N}} & \Delta _{\hat{M}}%
\end{array}%
\right] ,\sup \left\Vert \Delta _{K}\right\Vert _{\infty }=\delta _{\Delta
_{K}}<1  \label{eq6-12c}
\end{align}%
with normalised SKR $K_{G}.$ The threshold is defined by 
\begin{equation}
J_{th}=\sup_{\left\Vert \Delta _{K}\right\Vert _{\infty }\leq \delta
_{\Delta _{K}}}\left\Vert r_{\mathcal{K}_{G_{0}}}\right\Vert _{2}.
\label{eq6-3}
\end{equation}%
Notice that 
\begin{equation}
r_{0}(z)=\left[ 
\begin{array}{cc}
-\hat{N}_{0}(z) & \hat{M}_{0}(z)%
\end{array}%
\right] \left[ 
\begin{array}{c}
u(z) \\ 
y(z)%
\end{array}%
\right] =\left[ 
\begin{array}{cc}
\Delta _{\hat{N}}\text{ } & -\Delta _{\hat{M}}%
\end{array}%
\right] \left[ 
\begin{array}{c}
u(z) \\ 
y(z)%
\end{array}%
\right] ,  \label{eq6-14}
\end{equation}%
which, thanks to (\ref{eq6-13}), leads to, 
\begin{gather*}
\sup_{\left\Vert \Delta _{K}\right\Vert _{\infty }\leq \delta _{\Delta
_{K}}}\left\Vert r_{\mathcal{K}_{G_{0}}}\right\Vert _{2}=\sup_{\left\Vert
\Delta _{K}\right\Vert _{\infty }\leq \delta _{\Delta _{K}}}\left\Vert
r_{0}\right\Vert _{2}=\sup_{\left\Vert \Delta _{K}\right\Vert _{\infty }\leq
\delta _{\Delta _{K}}}\left\Vert \left[ 
\begin{array}{cc}
\Delta _{\hat{N}}\text{ } & -\Delta _{\hat{M}}%
\end{array}%
\right] \left[ 
\begin{array}{c}
u \\ 
y%
\end{array}%
\right] \right\Vert _{2} \\
=\delta _{\Delta _{K}}\left\Vert \left[ 
\begin{array}{c}
u \\ 
y%
\end{array}%
\right] \right\Vert _{2}.
\end{gather*}%
As a result, we have the following theorem.

\begin{Theo}
\label{Theo6-1}Given the model (\ref{eq2-1}) with model uncertainty
satisfying (\ref{eq6-12a})-(\ref{eq6-12c}), and suppose that
projection-based residual generator (\ref{eq6-14}) is used for the detection
purpose, then the corresponding threshold is given by 
\begin{align}
J_{th}& =\frac{\delta _{\Delta _{K}}}{\sqrt{1-\delta _{\Delta _{K}}^{2}}}%
\left\Vert \mathcal{P}_{\mathcal{K}_{G_{0}}}\left[ 
\begin{array}{c}
u \\ 
y%
\end{array}%
\right] \right\Vert _{2}  \label{eq6-4a} \\
& =\frac{\delta _{\Delta _{K}}}{\sqrt{1-\delta _{\Delta _{K}}^{2}}}\left(
\left\Vert \left[ 
\begin{array}{c}
u \\ 
y%
\end{array}%
\right] \right\Vert _{2}^{2}-\left\Vert r_{0}\right\Vert _{2}^{2}\right)
^{1/2}.  \label{eq6-4b}
\end{align}
\end{Theo}

\begin{proof}
The proof is similar to the one of Theorem \ref{Theo3-1}. Hence, we only
give the major step. It follows from 
\begin{eqnarray*}
\forall \left[ 
\begin{array}{c}
u \\ 
y%
\end{array}%
\right]  &\in &\mathcal{K}_{G}=\left\{ \left[ 
\begin{array}{c}
u \\ 
y%
\end{array}%
\right] \in \mathcal{L}_{2}:\left[ 
\begin{array}{cc}
-\hat{N} & \hat{M}%
\end{array}%
\right] \left[ 
\begin{array}{c}
u \\ 
y%
\end{array}%
\right] =0\hspace{-2pt}\right\}  \\
\left\Vert r_{\mathcal{K}_{G_{0}}}\right\Vert _{2}^{2} &=&\left\Vert
r_{0}\right\Vert _{2}^{2}\leq \delta _{\Delta _{K}}^{2}\left\Vert \left[ 
\begin{array}{c}
u \\ 
y%
\end{array}%
\right] \right\Vert _{2}^{2}=\delta _{\Delta _{K}}^{2}\left( \left\Vert 
\mathcal{P}_{\mathcal{K}_{G_{0}}}\left[ 
\begin{array}{c}
u \\ 
y%
\end{array}%
\right] \right\Vert _{2}^{2}+\left\Vert r_{\mathcal{K}_{G_{0}}}\right\Vert
_{2}^{2}\right) 
\end{eqnarray*}%
that 
\begin{align*}
\left\Vert r_{\mathcal{K}_{G_{0}}}\right\Vert _{2}& =\left\Vert
r_{0}\right\Vert _{2}\leq \frac{\delta _{\Delta _{K}}}{\sqrt{1-\delta
_{\Delta _{K}}^{2}}}\left\Vert \mathcal{P}_{\mathcal{K}_{G_{0}}}\left[ 
\begin{array}{c}
u \\ 
y%
\end{array}%
\right] \right\Vert _{2} \\
& =\frac{\delta _{\Delta _{K}}}{\sqrt{1-\delta _{\Delta _{K}}^{2}}}\left(
\left\Vert \left[ 
\begin{array}{c}
u \\ 
y%
\end{array}%
\right] \right\Vert _{2}^{2}-\left\Vert r_{0}\right\Vert _{2}^{2}\right)
^{1/2}.
\end{align*}
The proof is completed.
\end{proof}

\bigskip

At the end of this subsection, we would like to compare the standard
observer-based and the above introduced projection-based detection methods.
In view of residual generation, the relation $\left\Vert r_{0}\right\Vert
_{2}=\left\Vert r_{\mathcal{K}_{G}}\right\Vert _{2}$ implies that both
methods are equivalent. From the information aspect, both residuals, $r_{0}$
and $r_{\mathcal{K}_{G}},$ serve as a tool to gain information about changes
in the system dynamics, and, in this regard, they contain the same
information amount. This is the logic consequence of the intimate relation
between the system kernel subspace and observer-based residual generation.
Considering that the online computation cost for generating $r_{0}$ is
considerably lower than that for $r_{\mathcal{K}_{G}},$ it is reasonable to
apply observer-based residual generator (\ref{eq6-14}) for the residual
generation purpose. It is noteworthy that, as a by-product, it provides us
with an optimal observer-based solution for detecting faults in uncertain
dynamic systems, a challenging issue as reported in \cite%
{LD-Automatica-2020,Li-IEEETCST2020}. Concerning with threshold setting,
notice that the basic idea of the existing observer-based methods, roughly
speaking, consists in substituting $y$ in the residual evaluation function
by its upper-bound in the way 
\begin{align*}
\left\Vert r_{0}\right\Vert _{2}& =\left\Vert \left[ 
\begin{array}{cc}
\Delta _{\hat{N}}\text{ } & -\Delta _{\hat{M}}%
\end{array}%
\right] \left[ 
\begin{array}{c}
u \\ 
y%
\end{array}%
\right] \right\Vert _{2}\leq \delta _{\Delta _{K}}\left\Vert \left[ 
\begin{array}{c}
u \\ 
y%
\end{array}%
\right] \right\Vert _{2}\leq \delta _{\Delta _{K}}\left( 1+\delta
_{y}^{2}\right) ^{1/2}\left\Vert u\right\Vert _{2}, \\
& \Longrightarrow J_{th,r_{0}}:=\delta _{\Delta _{K}}\left( 1+\delta
_{y}^{2}\right) ^{1/2}\left\Vert u\right\Vert _{2},
\end{align*}%
where $J_{th,r_{0}}$ is the threshold, and 
\begin{equation*}
\left\Vert y\right\Vert _{2}\leq \delta _{y}\left\Vert u\right\Vert _{2}
\end{equation*}%
for some $\delta _{y}>0$ as an upper-bound of the system dynamics. In light
of our discussion in Subsection 3.3 and calling that 
\begin{equation*}
\left\Vert \mathcal{P}_{\mathcal{K}_{G_{0}}}\left[ 
\begin{array}{c}
u \\ 
y%
\end{array}%
\right] \right\Vert _{2}\leq \left\Vert \left[ 
\begin{array}{c}
u \\ 
y%
\end{array}%
\right] \right\Vert _{2}\leq \left( 1+\delta _{y}^{2}\right)
^{1/2}\left\Vert u\right\Vert _{2},
\end{equation*}%
it can be concluded that the threshold setting (\ref{eq6-4a}) of the
projection-based method delivers better detection performance than the
observer-based methods.

\bigskip

In a nutshell, comparing with existing observer-based schemes, the
projection-based detection method proposed above\ offers better detection
performance with identical online computations.

\subsection{A projection-based fault detection over a finite time interval}

In real applications, the $l_{2}$-norm (ref. to (\ref{eq2-12a})) of the
residual signal has to be approximately computed over a finite time
interval. A reasonable solution of this concern is to substitute the inner
product definition given in (\ref{eq2-12a}) by 
\begin{equation*}
\left\langle x,y\right\rangle =\sqrt{\sum\limits_{k=0}^{N}x^{T}(k)y(k)}%
,N<\infty ,x,y\in \mathcal{H}_{2}.
\end{equation*}%
Accordingly, the design of a projection-based residual generator should be
performed in the framework of time-varying systems. This is a challenging
topic and outside the scope of this work. Below, we propose a practical
solution, which can be realised both in the model-based and data-driven
fashions \cite{Ding2014,Ding2020}.

\bigskip

For our purpose, we first introduce the following input-output (I/O) system
model, which is achieved based on the (nominal) state space model (\ref%
{eq2-2a})-(\ref{eq2-2b}) and widely adopted in subspace technique aided
process identification and data-driven fault detection \cite%
{Huang_2008_book,Ding2014,Ding2020}, 
\begin{align}
y_{s}(k)& =\Gamma _{s}L_{p}z_{p}+H_{u,s}u_{s}(k),z_{p}=\left[ 
\begin{array}{c}
u_{p} \\ 
y_{p}%
\end{array}%
\right] ,  \label{eq6-6} \\
y_{s}(k)& =\left[ 
\begin{array}{c}
y(k-s) \\ 
\vdots  \\ 
y(k)%
\end{array}%
\right] \in \mathbb{R}^{(s+1)m},u_{s}(k)=\left[ 
\begin{array}{c}
u(k-s) \\ 
\vdots  \\ 
u(k)%
\end{array}%
\right] \in \mathbb{R}^{(s+1)p},  \notag \\
y_{p}& =\left[ 
\begin{array}{c}
y(k-s-s_{p}) \\ 
\vdots  \\ 
y(k-s-1)%
\end{array}%
\right] \in \mathbb{R}^{s_{p}m},u_{p}=\left[ 
\begin{array}{c}
u(k-s-s_{p}) \\ 
\vdots  \\ 
u(k-s-1)%
\end{array}%
\right] \in \mathbb{R}^{s_{p}p},  \notag \\
\Gamma _{s}& =\left[ 
\begin{array}{c}
C \\ 
CA \\ 
\vdots  \\ 
CA^{s}%
\end{array}%
\right] \in \mathbb{R}^{(s+1)m\times n},H_{u,s}=\left[ 
\begin{array}{cccc}
D & 0 &  &  \\ 
CB & \ddots  & \ddots  &  \\ 
\vdots  & \ddots  & \ddots  & 0 \\ 
CA^{s-1}B & \cdots  & CB & D%
\end{array}%
\right] \in \mathbb{R}^{(s+1)m\times (s+1)p}  \notag \\
L_{p}& =\left[ 
\begin{array}{cccccc}
A_{K}^{s_{p}-1}B_{K} & \cdots  & B_{K} & A_{K}^{s_{p}-1}K & \cdots  & K%
\end{array}%
\right] ,A_{K}=A-KC,B_{K}=B-KD,  \notag
\end{align}%
where $K$ is the Kalman-filter or observer gain matrix, and $s,s_{p}$ are
two integers typically chosen larger than or equal to $n$. The reader is
referred to, for instance, \cite{Huang_2008_book,Ding2014,Ding2020} for
details about the above model. It is noteworthy that $L_{p}z_{p}$ is a
(good) approximation of the state vector $x(k-s).$\bigskip 

Define 
\begin{equation*}
\mathcal{K}_{G}^{I/O}=\left\{ \left[ 
\begin{array}{c}
z_{p} \\ 
u_{s}(k) \\ 
y_{s}(k)%
\end{array}%
\right] \in \mathbb{R}^{(s+s_{p}+1)\left( m+p\right) }:\left[ 
\begin{array}{ccc}
-\Gamma _{s}L_{p} & -H_{u,s} & I%
\end{array}%
\right] \left[ 
\begin{array}{c}
z_{p} \\ 
u_{s}(k) \\ 
y_{s}(k)%
\end{array}%
\right] =0\hspace{-2pt}\right\} 
\end{equation*}%
as the kernel subspace of the I/O system model (\ref{eq6-6}). Endowed with
the inner product, 
\begin{align*}
\left\langle \alpha ,\beta \right\rangle &
=\sum\limits_{i=1}^{s+s_{p}+1}\alpha ^{T}(i)\beta (i),\alpha =\left[ 
\begin{array}{c}
\alpha (1) \\ 
\vdots  \\ 
\alpha (s+s_{p}+1)%
\end{array}%
\right] ,\beta =\left[ 
\begin{array}{c}
\beta (1) \\ 
\vdots  \\ 
\beta (s+s_{p}+1)%
\end{array}%
\right] \in \mathbb{R}^{(s+s_{p}+1)\left( m+p\right) }, \\
\left\langle \alpha ,\alpha \right\rangle & =:\left\Vert \alpha \right\Vert ,
\end{align*}%
$\mathcal{K}_{G}^{I/O}$ builds a closed subspace in Hilbert space. It is
straightforward that 
\begin{align}
\mathcal{P}_{\mathcal{K}_{G}^{I/O}}& :=I-K_{I/O}^{T}K_{I/O},  \label{eq6-7}
\\
K_{I/O}& =\Sigma ^{-1/2}\left[ 
\begin{array}{ccc}
-\Gamma _{s}L_{p} & -H_{u,s} & I%
\end{array}%
\right] ,\Sigma =\left[ 
\begin{array}{ccc}
-\Gamma _{s}L_{p} & -H_{u,s} & I%
\end{array}%
\right] \left[ 
\begin{array}{c}
-\left( \Gamma _{s}L_{p}\right) ^{T} \\ 
-H_{u,s}^{T} \\ 
I%
\end{array}%
\right]   \notag
\end{align}%
is an orthogonal projection onto $\mathcal{K}_{G}^{I/O}.$ Analogue to Lemma
1, $\mathcal{P}_{\mathcal{K}_{G}^{I/O}}$ has the following properties. Let%
\begin{equation*}
I_{I/O}=\left[ 
\begin{array}{cc}
I & 0 \\ 
0 & I \\ 
\Gamma _{s}L_{p} & H_{u,s}%
\end{array}%
\right] \hat{\Sigma}^{-1/2},\hat{\Sigma}=\left[ 
\begin{array}{ccc}
I & 0 & \left( \Gamma _{s}L_{p}\right) ^{T} \\ 
0 & I & H_{u,s}^{T}%
\end{array}%
\right] \left[ 
\begin{array}{cc}
I & 0 \\ 
0 & I \\ 
\Gamma _{s}L_{p} & H_{u,s}%
\end{array}%
\right] .
\end{equation*}%
It holds 
\begin{gather}
K_{I/O}I_{I/O}=0,\left[ 
\begin{array}{c}
K_{I/O} \\ 
I_{I/O}^{T}%
\end{array}%
\right] \left[ 
\begin{array}{cc}
K_{I/O}^{T} & I_{I/O}%
\end{array}%
\right] =\left[ 
\begin{array}{cc}
I & 0 \\ 
0 & I%
\end{array}%
\right] \Longrightarrow   \notag \\
\left[ 
\begin{array}{cc}
K_{I/O}^{T} & I_{I/O}%
\end{array}%
\right] \left[ 
\begin{array}{c}
K_{I/O} \\ 
I_{I/O}^{T}%
\end{array}%
\right] =I\Longleftrightarrow \mathcal{P}_{\mathcal{K}%
_{G}^{I/O}}=I-K_{I/O}^{T}K_{I/O}=I_{I/O}I_{I/O}^{T}.  \label{eq6-8}
\end{gather}%
In fact, $K_{I/O}$ and $I_{I/O}$ are the normalised kernel and image
representations of the I/O model (\ref{eq6-6}). By means of $\mathcal{P}_{%
\mathcal{K}_{G}^{I/O}}$, a projection-based residual can be generated as
follows%
\begin{equation}
r_{I/O}(k)=\left( \mathcal{I}-\mathcal{P}_{\mathcal{K}_{G}^{I/O}}\right) %
\left[ 
\begin{array}{c}
z_{p} \\ 
u_{s}(k) \\ 
y_{s}(k)%
\end{array}%
\right] =K_{I/O}^{T}K_{I/O}\left[ 
\begin{array}{c}
z_{p} \\ 
u_{s}(k) \\ 
y_{s}(k)%
\end{array}%
\right] .  \label{eq6-9}
\end{equation}%
The corresponding evaluation function is%
\begin{align}
\left\Vert r_{I/O}(k)\right\Vert & =\left\Vert K_{I/O}^{T}K_{I/O}\left[ 
\begin{array}{c}
z_{p} \\ 
u_{s}(k) \\ 
y_{s}(k)%
\end{array}%
\right] \right\Vert =\left\Vert K_{I/O}\left[ 
\begin{array}{c}
z_{p} \\ 
u_{s}(k) \\ 
y_{s}(k)%
\end{array}%
\right] \right\Vert =\left\Vert r_{s}(k)\right\Vert ,  \label{eq6-9a} \\
r_{s}(k)& =K_{I/O}\left[ 
\begin{array}{c}
z_{p} \\ 
u_{s}(k) \\ 
y_{s}(k)%
\end{array}%
\right] =\Sigma ^{-1/2}\left( y_{s}(k)-\Gamma
_{s}L_{p}z_{p}-H_{u,s}u_{s}(k)\right) \in \mathbb{R}^{(s+1)m}.
\label{eq6-9b}
\end{align}%
Equations (\ref{eq6-9a})-(\ref{eq6-9b}) show that, for the detection
purpose, generating $r_{s}(k)$ is sufficient. It is known that the residual
vector $r_{s}(k)$ is widely used in the so-called parity space or
data-driven methods \cite{Ding2020}.

\bigskip

Next, threshold setting is addressed. On the assumption that model
uncertainties cause variations in the nominal system model and lead to%
\begin{gather}
y_{s}(k)=\left( \Gamma _{s}L_{p}+\Delta _{x}\right) z_{p}+\left(
H_{u,s}+\Delta _{u}\right) u_{s}(k),  \label{eq6-11a} \\
\Delta K_{I/O}=\Sigma ^{-1/2}\left[ 
\begin{array}{ccc}
-\Delta _{x} & -\Delta _{u} & 0%
\end{array}%
\right] , \\
\sup_{\Delta _{x},\Delta _{u}}\left\Vert \Delta K_{I/O}\right\Vert
_{2}=\sup_{\Delta _{x},\Delta _{u}}\bar{\sigma}\left( \Delta K_{I/O}\right)
=\delta _{I/O}<1  \label{eq6-11b}
\end{gather}%
with $\Delta _{x},\Delta _{u}$ representing the uncertainties, the threshold
is set to be%
\begin{equation*}
J_{th}=\sup_{\left\Vert \Delta K_{I/O}\right\Vert _{2}\leq \delta
_{I/O}}\left\Vert r_{I/O}(k)\right\Vert .
\end{equation*}%
Departing from the relations%
\begin{gather*}
\left\Vert r_{I/O}(k)\right\Vert =\left\Vert r_{s}(k)\right\Vert
,y_{s}(k)=\left( \Gamma _{s}L_{p}+\Delta _{x}\right) z_{p}+\left(
H_{u,s}+\Delta _{u}\right) u_{s}(k) \\
r_{s}(k)=\Sigma ^{-1/2}\left( \Delta _{x}z_{p}+\Delta _{u}u_{s}(k)\right)
=-\Delta K_{I/O}\left[ 
\begin{array}{c}
z_{p} \\ 
u_{s}(k) \\ 
y_{s}(k)%
\end{array}%
\right] ,
\end{gather*}%
and definitions%
\begin{align*}
\mathcal{K}_{G,\Delta K}^{I/O}& =\left\{ \left[ 
\begin{array}{c}
z_{p} \\ 
u_{s}(k) \\ 
y_{s}(k)%
\end{array}%
\right] :\left[ 
\begin{array}{ccc}
-\Gamma _{s}L_{p}-\Delta _{x} & -H_{u,s}-\Delta _{u} & I%
\end{array}%
\right] \left[ 
\begin{array}{c}
z_{p} \\ 
u_{s}(k) \\ 
y_{s}(k)%
\end{array}%
\right] =0\hspace{-2pt}\right\} , \\
\mathcal{K}_{G,\delta }^{I/O}& =\left\{ \mathcal{K}_{G,\Delta
K}^{I/O}:\left\Vert \Delta K_{I/O}\right\Vert _{2}\leq \delta _{I/O}\hspace{%
-2pt}\right\} ,
\end{align*}%
we have%
\begin{equation}
\forall \left[ 
\begin{array}{c}
u \\ 
y%
\end{array}%
\right] \in \mathcal{K}_{G,\Delta K}^{I/O}\subset \mathcal{K}_{G,\delta
}^{I/O},\left\Vert r_{s}(k)\right\Vert \leq \delta _{I/O}\left\Vert \left[ 
\begin{array}{c}
z_{p} \\ 
u_{s}(k) \\ 
y_{s}(k)%
\end{array}%
\right] \right\Vert .  \label{eq6-17}
\end{equation}%
In the light of Theorem \ref{Theo3-1}, the following theorem becomes obvious.

\begin{Theo}
\label{Theo6-2}Given the I/O model (\ref{eq6-6}) with model uncertainty
satisfying (\ref{eq6-11a})-(\ref{eq6-11b}), and suppose that
projection-based residual generator (\ref{eq6-9}) and evaluation function (%
\ref{eq6-9a}) are used for the detection purpose, then the corresponding
threshold is given by 
\begin{align}
J_{th}& =\frac{\delta _{I/O}}{\sqrt{1-\delta _{I/O}^{2}}}\left\Vert \mathcal{%
P}_{\mathcal{K}_{G}^{I/O}}\left[ 
\begin{array}{c}
z_{p} \\ 
u_{s}(k) \\ 
y_{s}(k)%
\end{array}%
\right] \right\Vert  \label{eq6-18a} \\
& =\frac{\delta _{I/O}}{\sqrt{1-\delta _{I/O}^{2}}}\left( \left\Vert \left[ 
\begin{array}{c}
z_{p} \\ 
u_{s}(k) \\ 
y_{s}(k)%
\end{array}%
\right] \right\Vert ^{2}-\left\Vert r_{s}(k)\right\Vert ^{2}\right) ^{1/2}.
\label{eq6-18b}
\end{align}
\end{Theo}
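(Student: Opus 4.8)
The plan is to reproduce, in the finite-dimensional Euclidean setting of the I/O model (\ref{eq6-6}), the same chain of estimates used in the proof of Theorem \ref{Theo3-1}. Writing $\zeta=\left[\begin{array}{ccc} z_p^{T} & u_s^{T}(k) & y_s^{T}(k)\end{array}\right]^{T}$ for the stacked data vector, the already-established bound (\ref{eq6-17}), namely $\left\Vert r_{I/O}(k)\right\Vert =\left\Vert r_s(k)\right\Vert \le \delta_{I/O}\left\Vert \zeta\right\Vert$, takes over the role that the gap-based bound $\left\Vert r_{\mathcal{I}_G}\right\Vert _2\le\delta_{\Delta_I}\left\Vert \cdot\right\Vert _2$ plays in Subsection \ref{subsection3-3}. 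Everything else is algebra resting on the projection identities (\ref{eq6-7})--(\ref{eq6-8}) and on the norm equality (\ref{eq6-9a}).

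First I would use that, by (\ref{eq6-7})--(\ref{eq6-8}), $\mathcal{P}_{\mathcal{K}_{G}^{I/O}}=I_{I/O}I_{I/O}^{T}$ is the orthogonal projection onto $\mathcal{K}_{G}^{I/O}$ while $\mathcal{I}-\mathcal{P}_{\mathcal{K}_{G}^{I/O}}=K_{I/O}^{T}K_{I/O}$ is the complementary orthogonal projection, so that $r_{I/O}(k)=\left(\mathcal{I}-\mathcal{P}_{\mathcal{K}_{G}^{I/O}}\right)\zeta$ as in (\ref{eq6-9}) and the data vector splits orthogonally as
\[
\left\Vert \zeta\right\Vert ^{2}=\left\Vert \mathcal{P}_{\mathcal{K}_{G}^{I/O}}\zeta\right\Vert ^{2}+\left\Vert r_{I/O}(k)\right\Vert ^{2}.
\]
Substituting this into the squared form of (\ref{eq6-17}) gives $\left\Vert r_{I/O}(k)\right\Vert ^{2}\le \delta_{I/O}^{2}\left(\left\Vert \mathcal{P}_{\mathcal{K}_{G}^{I/O}}\zeta\right\Vert ^{2}+\left\Vert r_{I/O}(k)\right\Vert ^{2}\right)$, and since $\delta_{I/O}<1$ this rearranges to
\[
\left\Vert r_{I/O}(k)\right\Vert ^{2}\le \frac{\delta_{I/O}^{2}}{1-\delta_{I/O}^{2}}\left\Vert \mathcal{P}_{\mathcal{K}_{G}^{I/O}}\zeta\right\Vert ^{2}.
\]
Taking the supremum over the admissible ball $\left\Vert \Delta K_{I/O}\right\Vert _{2}\le \delta_{I/O}$ produces (\ref{eq6-18a}); feeding the orthogonal split back in, in the form $\left\Vert \mathcal{P}_{\mathcal{K}_{G}^{I/O}}\zeta\right\Vert ^{2}=\left\Vert \zeta\right\Vert ^{2}-\left\Vert r_s(k)\right\Vert ^{2}$, then converts (\ref{eq6-18a}) into the directly computable expression (\ref{eq6-18b}).

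The step that needs the most care is verifying that the supremum is actually attained, so that the last inequality holds with equality rather than merely ``$\le$''. The estimate above only certifies $J_{th}\le \frac{\delta_{I/O}}{\sqrt{1-\delta_{I/O}^{2}}}\left\Vert \mathcal{P}_{\mathcal{K}_{G}^{I/O}}\zeta\right\Vert$; equality requires exhibiting an admissible perturbation with $\bar{\sigma}(\Delta K_{I/O})=\delta_{I/O}$ whose action saturates the Cauchy--Schwarz estimate behind (\ref{eq6-17}). Because the problem is finite-dimensional, this reduces to choosing $\Delta K_{I/O}$ so that $\Delta_{x}z_p+\Delta_{u}u_s(k)$ is aligned with the projected signal $\mathcal{P}_{\mathcal{K}_{G}^{I/O}}\zeta$ and attains the maximal gain $\delta_{I/O}$, while checking that the resulting data still lies in $\mathcal{K}_{G,\delta}^{I/O}$. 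This is the exact counterpart of the observation $\mathcal{P}_{\mathcal{I}_G}\left[\begin{array}{c}u\\y\end{array}\right]\in \mathcal{I}_{G_0}\subset \mathcal{I}_{G,\delta}$ exploited in Theorem \ref{Theo3-1}; once it is in place, both displayed forms of $J_{th}$ follow at once and the proof is complete.
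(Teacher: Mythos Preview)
Your proposal is correct and follows essentially the same approach as the paper: both use the orthogonal split $\left\Vert \zeta\right\Vert^{2}=\left\Vert \mathcal{P}_{\mathcal{K}_{G}^{I/O}}\zeta\right\Vert^{2}+\left\Vert r_{I/O}(k)\right\Vert^{2}$, substitute it into the bound (\ref{eq6-17}), and rearrange using $\delta_{I/O}<1$ to obtain (\ref{eq6-18a}) and then (\ref{eq6-18b}). Your final paragraph on attainment of the supremum is more careful than the paper, which (as in the proof of Theorem~\ref{Theo3-1}) simply declares the resulting upper bound to be the threshold without separately verifying equality.
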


\begin{proof}
Since 
\begin{equation*}
\left\Vert \left[ 
\begin{array}{c}
z_{p} \\ 
u_{s}(k) \\ 
y_{s}(k)%
\end{array}%
\right] \right\Vert ^{2}=\left\Vert \mathcal{P}_{\mathcal{K}_{G}^{I/O}}\left[
\begin{array}{c}
z_{p} \\ 
u_{s}(k) \\ 
y_{s}(k)%
\end{array}%
\right] \right\Vert ^{2}+\left\Vert r_{I/O}(k)\right\Vert ^{2},
\end{equation*}%
it follows from (\ref{eq6-17}) that%
\begin{equation*}
\forall \left[ 
\begin{array}{c}
u \\ 
y%
\end{array}%
\right] \in \mathcal{K}_{G,\Delta K}^{I/O}\subset \mathcal{K}_{G,\delta
}^{I/O},\left\Vert r_{s}(k)\right\Vert ^{2}\leq \delta _{I/O}^{2}\left(
\left\Vert \mathcal{P}_{\mathcal{K}_{G}^{I/O}}\left[ 
\begin{array}{c}
z_{p} \\ 
u_{s}(k) \\ 
y_{s}(k)%
\end{array}%
\right] \right\Vert ^{2}+\left\Vert r_{I/O}(k)\right\Vert ^{2}\right) ,
\end{equation*}%
which leads to (\ref{eq6-18a}) and (\ref{eq6-18b}). 
\end{proof}

\section{An experimental study}

\subsection{Description of the experimental system}

Due to their typical characteristics of a chemical process, three-tank
systems are widely accepted as a benchmark process in laboratories for
process control and fault diagnosis. The three-tank system used in our
experimental study is laboratory setup TTS20, whose model and the model
parameters are summarised as follows \cite{Ding2020}:%
\begin{align*}
\mathcal{A}\dot{h}_{1}& =Q_{1}-Q_{13},\mathcal{A}\dot{h}%
_{2}=Q_{2}+Q_{32}-Q_{20},\mathcal{A}\dot{h}_{3}=Q_{13}-Q_{32}, \\
Q_{13}& =a_{1}s_{13}\text{sgn}(h_{1}-h_{3})\sqrt{2g|h_{1}-h_{3}|}, \\
Q_{32}& =a_{3}s_{23}\text{sgn}(h_{3}-h_{2})\sqrt{2g|h_{3}-h_{2}|}%
,Q_{20}=a_{2}s_{0}\sqrt{2gh_{2}},
\end{align*}%
where $Q_{1},Q_{2}$ are incoming mass flow (cm$^{3}$/s), $Q_{ij}$ is the
mass flow (cm$^{3}$/s) from the $i$-th tank to the $j$-th tank, $%
h_{i}(t),i=1,2,3,$ are the water level (cm) in the $i$-th tank and
measurement variables, and $s_{13}=s_{23}=s_{0}=s_{n.}$

\begin{table}[h]
\caption{Parameters of TTS20}
\label{table 18-1}\centering\vspace{0.2cm} 
\begin{tabular}{|c|c|c|c|}
\hline
Parameters & Symbol & Value & Unit \\ \hline
cross section area of tanks & $\mathcal{A}$ & $154$ & cm$^{2}$ \\ \hline
cross section area of pipes & $s_{n}$ & $0.5$ & cm$^{2}$ \\ \hline
max. height of tanks & $H_{max}$ & $62$ & cm \\ \hline
max. flow rate of pump 1 & $Q_{1_{max}}$ & $100$ & cm$^{3}$/s \\ \hline
max. flow rate of pump 2 & $Q_{2_{max}}$ & $100$ & cm$^{3}$/s \\ \hline
coeff. of flow for pipe 1 & $a_{1}$ & $0.45$ &  \\ \hline
coeff. of flow for pipe 2 & $a_{2}$ & $0.60$ &  \\ \hline
coeff. of flow for pipe 3 & $a_{3}$ & $0.45$ &  \\ \hline
\end{tabular}%
\end{table}
\bigskip

For our purpose of testing the projection-based fault detection schemes, the
above nonlinear model is linearised at the operating point $h_{1}=30$cm, $%
h_{2}=20$cm, $h_{3}=24$cm and discretised with a sampling time equal to $5$%
s. The resulted nominal model is given by%
\begin{align*}
x(k+1)& =Ax(k)+Bu(k),x(k)=\left[ 
\begin{array}{c}
h_{1}(k) \\ 
h_{2}(k) \\ 
h_{3}(k)%
\end{array}%
\right] ,y(k)=\left[ 
\begin{array}{c}
x_{1}(k) \\ 
x_{2}(k)%
\end{array}%
\right] ,u(k)=\left[ 
\begin{array}{c}
Q_{1}(k) \\ 
Q_{2}(k)%
\end{array}%
\right] , \\
A& =\left[ {%
\begin{array}{ccc}
0.9272 & -0.0012 & 0.0605 \\ 
-0.0064 & 0.8829 & 0.0561 \\ 
0.0645 & 0.0632 & 0.8695%
\end{array}%
}\right] ,B=\left[ {%
\begin{array}{cc}
0.03126 & -0.0000 \\ 
-0.0001 & 0.0305 \\ 
0.0011 & 0.0011%
\end{array}%
}\right] .
\end{align*}%
In order to regulate the water level in tank 1 and tank 2, an observer-based
state feedback controller is adopted,%
\begin{align}
\hat{x}(k+1)& =\left( A-LC\right) \hat{x}(k)+Bu(k)+Ly(k),  \label{eq7-4a} \\
u(k)& =F\hat{x}(k)+v(k),  \label{eq7-4b}
\end{align}%
where $F,L$ are designed using LQ optimal control and Kalman filter
algorithms, respectively, which results in%
\begin{equation}
F=\left[ {%
\begin{array}{ccc}
-8.8470 & -0.0663 & -1.3821 \\ 
-0.0406 & -7.8259 & -1.1961%
\end{array}%
}\right] ,L=\left[ {%
\begin{array}{cc}
0.7125 & 0.0213 \\ 
0.0070 & 0.7202 \\ 
0.0661 & 0.0917%
\end{array}%
}\right] .  \label{eq7-4}
\end{equation}%
$v$ is the reference signal that is set corresponding to the operating point 
$h_{1}=30$cm, $h_{2}=20$cm and given by 
\begin{equation*}
v(k)=\left[ {%
\begin{array}{c}
323.2805 \\ 
219.1368%
\end{array}%
}\right] .
\end{equation*}

\subsection{Design of fault detection systems}

The main objective of experimental test is to compare detection performance
of observer-based and projection-based fault detection systems. To this end,
three fault detection systems are designed as follows:

\begin{itemize}
\item A projection-based detection system, as proposed in Subsection 6.1.
According to the relation%
\begin{equation*}
\left\Vert r_{\mathcal{K}_{G_{0}}}\right\Vert _{2}=\left\Vert
r_{0}\right\Vert _{2},
\end{equation*}%
the residual generation is realised using an observer-based residual
generator with the observer gain $L_{0}$ and post filter $W_{0}$ computed by
means of (\ref{eq2-10a}) and (\ref{eq2-10b}), which results in%
\begin{equation*}
L_{0}=\left[ {%
\begin{array}{cc}
0.0094 & 0.0015 \\ 
0.0015 & 0.0046 \\ 
0.0046 & 0.0024%
\end{array}%
}\right] ,W_{0}=\left[ {%
\begin{array}{cc}
0.9951 & -0.0008 \\ 
-0.0008 & 0.9974%
\end{array}%
}\right] .
\end{equation*}%
The computation of $\left\Vert r_{0}\right\Vert _{2}$ is approximated by%
\begin{equation*}
\left\Vert r_{0}\right\Vert _{2}=\left(
\sum\limits_{i=k}^{k+120}r_{0}^{T}(k+i)r_{0}(k+i)\right) ^{1/2}
\end{equation*}%
with a moving evaluation window. As discussed in Subsection \ref%
{subsection3-3}, the residual evaluation function is built by a
normalisation of $r_{0},$%
\begin{equation*}
J_{N}=\frac{\left\Vert r_{0}\right\Vert _{2}}{\left\Vert \left[ 
\begin{array}{c}
u \\ 
y%
\end{array}%
\right] \right\Vert _{2}}=\left( \frac{\sum%
\limits_{i=k}^{k+120}r_{0}^{T}(k+i)r_{0}(k+i)}{\sum\limits_{i=k}^{k+120}%
\left( u^{T}(k+i)u(k+i)+y^{T}(k+i)y(k+i)\right) }\right) ^{1/2}.
\end{equation*}%
The corresponding threshold is set following Theorem \ref{Theo6-1} and by
means of a normalisation, 
\begin{equation*}
J_{th,N}=\frac{\delta _{\Delta _{K}}}{\sqrt{1-\delta _{\Delta _{K}}^{2}}}%
\left( 1-J_{N}^{2}\right) ^{1/2}.
\end{equation*}%
To determine an upper-bound of left-coprime factor uncertainty $\Delta _{K}$
(refer to (\ref{eq6-12a})-(\ref{eq6-12c})) $\delta _{\Delta _{K}}$, possible
variations from $\pm 3.3\%$ in the system parameters $\mathcal{A}%
,s_{n},a_{1},a_{2},a_{3}$ are considered, which leads to 
\begin{equation*}
\Delta _{K}=\left[ 
\begin{array}{cc}
-\Delta _{\hat{N}} & \Delta _{\hat{M}}%
\end{array}%
\right] ,\left\Vert \Delta _{K}\right\Vert _{\infty }\leq \delta _{\Delta
_{K}}=0.052.
\end{equation*}

\item An observer-based fault detection system. Here, the Kalman filter
applied in the observer-based controller is used for residual generation,
with gain matrix $L$ given by (\ref{eq7-4}), and the generated residual is
denoted by%
\begin{equation*}
r_{0,k}(k)=y(k)-C\hat{x}(k).
\end{equation*}%
The threshold is set using the algorithm given in Theorem 2 \cite%
{LD-Automatica-2020}, 
\begin{equation*}
J_{th,k}=\frac{\beta \delta _{\Delta _{K}}}{1-\delta _{\Delta _{K}}b}%
\left\Vert v\right\Vert _{2},\left\Vert v\right\Vert _{2}=\left(
\sum\limits_{i=k}^{k+120}v^{T}(k+i)v(k+i)\right) ^{1/2},
\end{equation*}%
where 
\begin{equation*}
\beta =\left\Vert \left( I-C(zI-A+LC)^{-1}(L-L_{0})\right)
W_{0}^{-1}\right\Vert _{\infty }=1.6257,b=\left\Vert \left[ 
\begin{array}{c}
U_{0} \\ 
V_{0}%
\end{array}%
\right] \right\Vert _{\infty }=7.9179,
\end{equation*}%
$\left( V_{0},U_{0}\right) $ is the normalised RCP of the controller (\ref%
{eq7-4a})-(\ref{eq7-4b}) (refer to (\ref{eq2-5a})).

\item An observer-based fault detection system, which consists of an
observer-based residual generator whose SKR, different from a Kalman filter,
is normalised with $L_{0},W_{0}$ as the observer gain matrix and
post-filter. As a result, it delivers $r_{0}$ as the residual signal.
Similar to the above observer-based fault detection system, the threshold
setting is achieved using the algorithm given in Theorem 2 \cite%
{LD-Automatica-2020}, which leads to%
\begin{equation*}
J_{th}=\frac{\delta _{\Delta _{K}}}{1-\delta _{\Delta _{K}}b}\left\Vert
v\right\Vert _{2},\left\Vert v\right\Vert _{2}=\left(
\sum\limits_{i=k}^{k+120}v^{T}(k+i)v(k+i)\right) ^{1/2}.
\end{equation*}
\end{itemize}

\subsection{Experimental results and analysis}

To test the proposed projection-based fault detection system and compare it
with the existing observer-based ones, the following two faults are realised
directly on the laboratory setup TTS20:

\begin{itemize}
\item a 5\% leakage fault in tank 1,

\item a 7\% plugging fault in pipe 1 (connecting tank 1 and tank 3).
\end{itemize}

Figures 3 - 8 show the test results. In detail,

\begin{itemize}
\item the top figure in Figure 3 gives the responses of the residual
evaluation function $J_{N}$ and the threshold $J_{th,N}$ during fault-free
operation over the time interval $[1500$s, $2150$s$]$ and faulty operation
(5\% leakage) over $[2150$s, $4000$s$].$ A successful fault detection is
achieved. The threshold $J_{th,N}$ is shown in the bottom figure to
demonstrate that it decreases as the residual increases;

\item Figure 4 shows the residual evaluation function $\left\Vert
r_{0}\right\Vert _{2}$ and the threshold $J_{th}$ over the same intervals
with the 5\% leakage fault. A successful fault detection is demonstrated as
well;

\item the top figure in Figure 5 showcases the residual evaluation function $%
\left\Vert r_{0,k}\right\Vert _{2}$ and the threshold $J_{th,k}$ over the
same intervals with the 5\% leakage fault, and demonstrates, together with
the bottom figure, that the fault cannot be detected;

\item Figures 6 - 8 give, analogue to Figures 3 - 5, the responses of the
residual evaluation functions $J_{N},\left\Vert r_{0}\right\Vert
_{2},\left\Vert r_{0,k}\right\Vert _{2}$ and the corresponding thresholds $%
J_{th,N},J_{th},J_{th,k}$ during fault-free operation over the time interval 
$[2500$s, $4000$s$]$ and faulty operation (7\% plugging) over $[4000$s, $%
6500 $s$],$ respectively. It is apparent that the fault can be successfully
detected by the projection-based detection system, while both observer-based
detection systems fail.
\end{itemize}

\begin{figure}[h]
\centering\includegraphics[width=10cm,height=7cm]{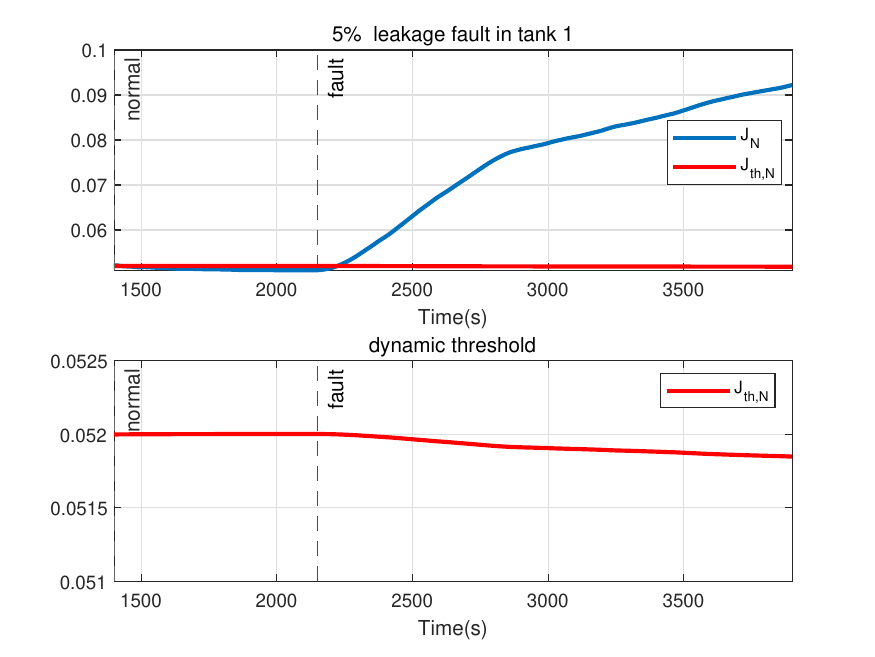}
\caption{Detection of a leakage fault in tank 1 using the projection-based
detection system}
\end{figure}
\begin{figure}[h]
\centering\includegraphics[width=10cm,height=7cm]{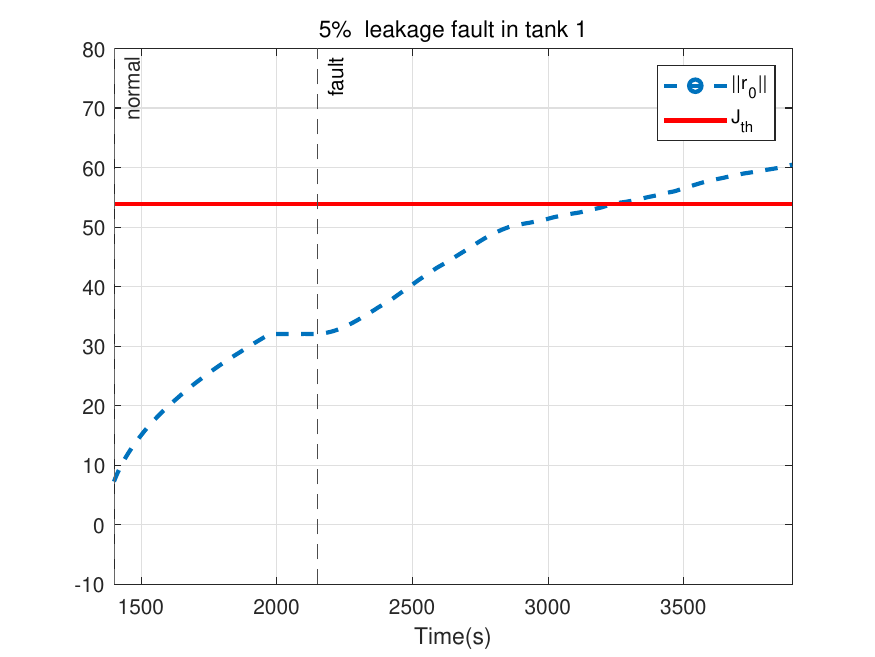}
\caption{Detection of a leakage fault in tank 1 using the observer-based
detection system (a normalised SKR)}
\end{figure}

\begin{figure}[h]
\centering\includegraphics[width=10cm,height=7cm]{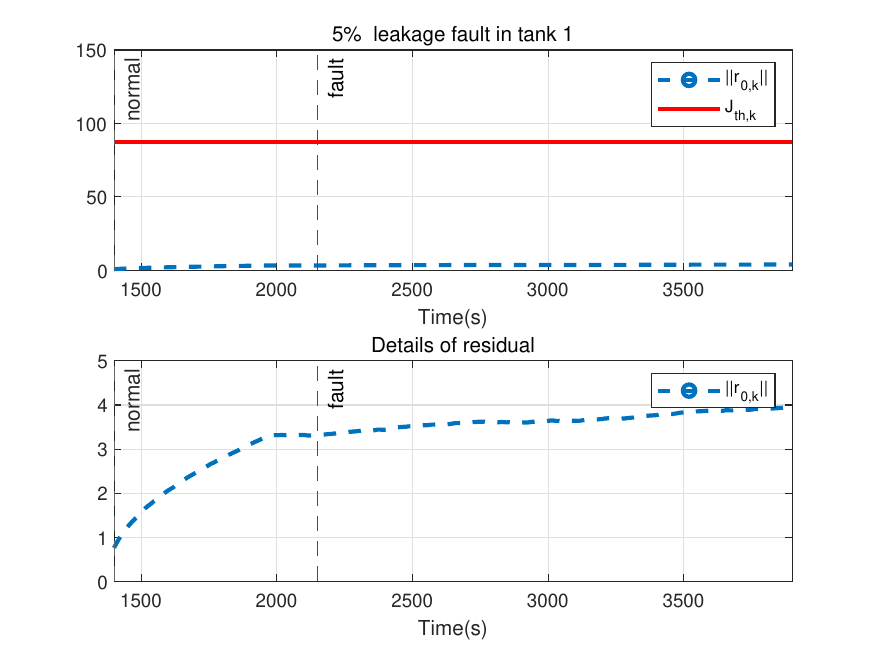}
\caption{Detection of a leakage fault in tank 1 using the observer-based
detection system (Kalman filter-based)}
\end{figure}

\begin{figure}[h]
\centering\includegraphics[width=10cm,height=7cm]{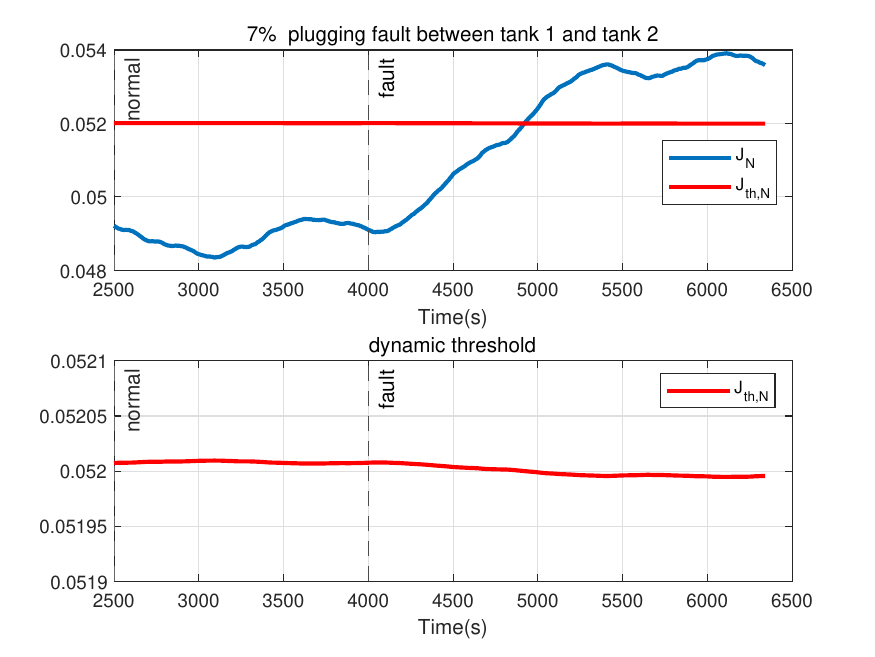}
\caption{Detection of a plugging fault in pipe 1 using the projection-based
detection system}
\end{figure}

\begin{figure}[h]
\centering\includegraphics[width=10cm,height=7cm]{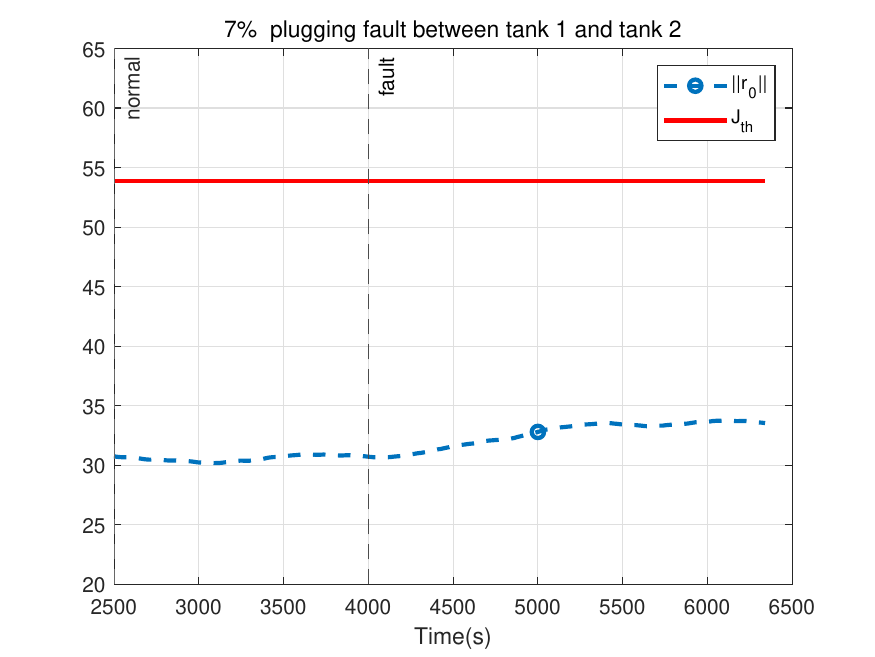}
\caption{Detection of a plugging fault in pipe 1 using the observer-based
detection system (a normalised SKR)}
\end{figure}
\begin{figure}[h]
\centering\includegraphics[width=10cm,height=7cm]{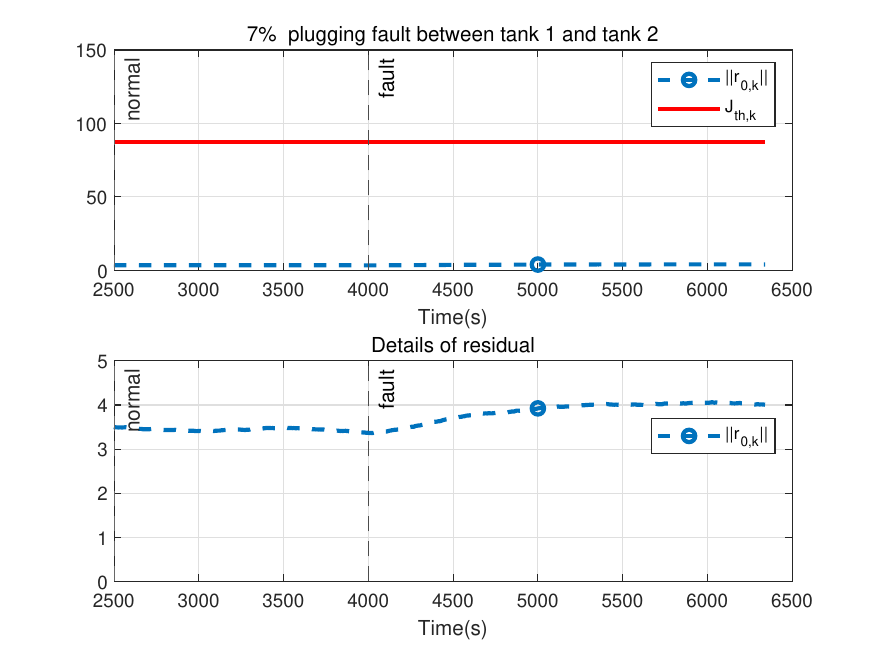}
\caption{Detection of a plugging fault in pipe 1 using the observer-based
detection system (Kalman filter-based)}
\end{figure}

Next, we briefly analyse the above experimental results. The first
conclusion is that the Kalman filter-based detection system is less capable
of dealing with robust fault detection, when model uncertainties exists. The
major reason lies in the residual generator setting. Recall that the
projection-based method proposed in Subsection 6.1 leads to the normalised
SKR as the (optimal) setting of the residual generator. From the system
analysis point of view, \cite{Ding2020} (Subsection 9.2.3) illustrates and
verifies this conclusion. The normalised SKR is co-inner \cite{Ding2020},
that is, all singular values of SKR from the input pair $\left( u,y\right) $
to the residual $r_{0}$ are identically equal to $1$. Checking the relation 
\begin{equation*}
r_{0,k}(z)=R(z)r_{0}(z),R(z)=\left( I-C(zI-A+LC)^{-1}(L-L_{0})\right)
W_{0}^{-1}
\end{equation*}%
and the maximal and minimal singular values of $R(z)$,%
\begin{equation*}
\sigma _{\max }\left( R\right) =\left\Vert R(z)\right\Vert _{\infty
}=1.6257,\sigma _{\min }\left( R\right) =0.0556,
\end{equation*}%
reveals that the Kalman filter-based residual generation is far away from
optimal. We would like to emphasise that \cite{DingIAS00} has proved that an
optimal residual generation is achieved if the transfer function matrix from
the input to the residual is co-inner.

\bigskip

Comparing the results in Figures 3 and 4 as well as 6 and 7 demonstrates
obviously that the projection-based detection system is more sensitive to
the faults than the observer-based one. Since both detection systems have
the same residual generator, i.e. the normalised SKR, the different
threshold settings are the result of the different detection performance. To
put it in a nutshell, it can be concluded that the projection-based
detection methods result in optimal fault detection performance.

\section{Concluding remarks}

In the previous sections, we have presented fault diagnosis schemes in the
projection-based framework, including the basic fault detection (one-class
classification) scheme, two detection methods for feedback control systems,
binary and multi-class fault classification methods towards fault detection
and isolation, as well as two modified fault detection methods. Although
they have been developed for different application purposes under possibly
different system configuration assumptions, they have one point in common,
namely the corresponding fault diagnosis system design follows the uniform
procedure with the steps: (i) definition and determination of an orthogonal
projection operator, (ii) construction of the projection-based residual
generator and realisation of online implementation algorithm, mainly
consisting of an observer plus, possibly, an additional filter, and (iii)
gap metric-based threshold settings. All these results are summarised in
Theorems \ref{Theo3-1}-\ref{Theo4-1} and \ref{Theo6-1}-\ref{Theo6-2}. It is
remarkable that all resulted fault diagnosis systems are optimal with
respect to the classification distance metric, and the above design
procedure can be applied both in the model-based and data-driven fashions.
So far, the basic targets of our work, as described in Introduction, have
been successfully reached.

\bigskip

Moreover, we have illustrated the major differences between the
projection-based and observer-based fault detection schemes, and
demonstrated the advantage of the projection-based scheme in enhancing fault
detectability. In this context, a modified approach has also been proposed
for the design of projection-based fault detection systems that are
comparable with observer-based ones with regard to the online computation
and offer better detection performance.

\bigskip

In our study on projection-based binary (fault) classification, it has been
revealed that there generally exists an overlapping subspace of fault-free
and faulty operations, which is described by an one-to-one mapping between
the image subspaces of the fault-free and faulty system dynamics. This
result is helpful for us to understand mechanisms of incipient and
intermittent faults. Accordingly, an optimal classification scheme has been
proposed, which is also the basic algorithm for multi-class fault
classification (fault isolation).

\bigskip

Our work in this paper has focused on detecting and isolating multiplicative
faults in dynamic systems with (multiplicative/parametric) uncertainties,
motivated by the fact that there exist no general and systematic solutions
for the relevant issues. It is natural to raise a question if
projection-based methods can be applied to systems with additive
disturbances (unknown inputs) and faults. Below, we briefly illustrate such
a case. For our purpose, consider the nominal model (\ref{eq2-2a})-(\ref%
{eq2-2b}) with $l_{2}$-norm bounded unknown input vector $d\in \mathbb{R}%
^{k_{d}},$ and unknown fault vector $f\in \mathbb{R}^{k_{f}},$ 
\begin{gather}
x(k+1)=Ax(k)+Bu(k)+E_{f}f(k)+E_{d}d(k),  \label{eq3-14a} \\
y(k)=Cx(k)+Du(k)+F_{f}f(k)+F_{d}d(k),\left\Vert d\right\Vert _{2}\leq \delta
_{d},  \label{eq3-14b}
\end{gather}%
where $E_{d},F_{d},E_{f},F_{f}$ are known matrices and $\delta _{d}$ is the
known upper bound. The corresponding optimal fault detection problem has
been extensively studied \cite{HouUKACC96,DingIAS00,WYL2007}. In order to
compare with the existing results, we only consider projection $\mathcal{P}_{%
\mathcal{K}_{G}}$ defined by (\ref{eq6-1}) and assume that 
\begin{equation}
\forall \theta \in \lbrack 0,2\pi ],rank\left[ 
\begin{array}{cc}
A-e^{j\theta }I & E_{d} \\ 
C & F_{d}%
\end{array}%
\right] =n+m.  \label{eq3-15}
\end{equation}%
It turns out%
\begin{equation}
\left\Vert r_{\mathcal{K}_{G}}\right\Vert _{2}=\left\Vert \mathcal{L}%
_{K_{G_{0}}}\left[ 
\begin{array}{c}
u \\ 
y%
\end{array}%
\right] \right\Vert _{2}=\left\Vert r_{0}\right\Vert _{2}.  \label{eq3-16}
\end{equation}%
It is well-known \cite{Ding2008} that 
\begin{align*}
r_{0}(z)& =\hat{M}_{0}(z)y(z)-\hat{N}_{0}(z)u(z)=\hat{N}_{d}(z)d(z), \\
\hat{N}_{d}(z)& =\left( A-LC,E_{d}-LF_{d},WC,WF_{d}\right) \in \mathcal{RH}%
_{\infty }.
\end{align*}%
Consequently, the threshold is set to be \cite{DingAUTO94}%
\begin{equation}
J_{th}=\sup_{\left\Vert d\right\Vert _{2}\leq \delta _{d}}\left\Vert
r_{0}\right\Vert _{2}=\left\Vert \hat{N}_{d}\right\Vert _{\infty }\delta
_{d}.  \label{eq3-18}
\end{equation}%
On the other hand, it is known that the above threshold setting is
considerably conservative \cite{Ding2008}. Alternatively, we introduce a
projection-based method to improve the threshold setting. Note that 
\begin{equation*}
\mathcal{R}_{d}=\left\{ d\in \mathcal{L}_{2}:\hat{N}_{d}d\neq 0\right\} 
\end{equation*}%
is a subspace in $\mathcal{L}_{2}$, and the orthogonal projection of $d$
onto $\mathcal{R}_{d}$ is given by 
\begin{equation*}
\mathcal{P}_{\mathcal{R}_{d}}:\mathcal{L}_{2}\rightarrow \mathcal{L}_{2},%
\mathcal{P}_{\mathcal{R}_{d}}=\mathcal{L}_{\hat{N}_{d,0}^{\sim }}\mathcal{L}%
_{\hat{N}_{d,0}},\hat{N}_{d,0}(z)\hat{N}_{d,0}^{\sim }(z)=I
\end{equation*}%
with $\hat{N}_{d,0}$ satisfying \cite{Ding2008}%
\begin{gather}
\hat{N}_{d,0}(z)=\left( A-L_{d}C,E_{d}-L_{d}F_{d},W_{d}C,W_{d}F_{d}\right) ,
\label{eq3-16a} \\
L_{d}=\left( AXC^{T}+E_{d}F_{d}^{T}\right) \left(
CXC^{T}+F_{d}F_{d}^{T}\right) ^{-1},W_{d}=\left(
CXC^{T}+F_{d}F_{d}^{T}\right) ^{-1/2},  \label{eq3-16b} \\
X=AXA^{T}+E_{d}E_{d}^{T}-L_{d}\left( CXC^{T}+F_{d}F_{d}^{T}\right) L_{d}^{T}.
\notag
\end{gather}%
As shown in \cite{Ding2008}, there exists a post-filter $R(z)$ so that%
\begin{align*}
R\hat{N}_{d}& =\hat{N}_{d,0}\Longrightarrow Rr_{0}=\hat{N}_{d,0}d=:\bar{r}%
_{0}, \\
R(z)& =\left( A-L_{d}C,\left( L_{d}-L\right)
W^{-1},-W_{d}C,W_{d}W^{-1}\right) \in \mathcal{RH}_{\infty }.
\end{align*}%
Note that 
\begin{equation}
\left\Vert \mathcal{P}_{\mathcal{R}_{d}}d\right\Vert _{2}=\left\Vert \hat{N}%
_{d,0}d\right\Vert _{2}=\left\Vert \bar{r}_{o}\right\Vert _{2},\forall d\in 
\mathcal{R}_{d},\left\Vert \mathcal{P}_{\mathcal{R}_{d}}d\right\Vert
_{2}=\left\Vert d\right\Vert _{2}.  \label{eq3-17a}
\end{equation}%
The second equation in (\ref{eq3-17a}) leads to 
\begin{equation}
J_{th}=\sup_{\left\Vert d\right\Vert _{2}\leq \delta _{d}}\left\Vert \bar{r}%
_{o}\right\Vert _{2}=\delta _{d}.  \label{eq3-17}
\end{equation}%
To demonstrate that $\bar{r}_{o}$ delivers a better fault detection
performance in comparison with $r_{0},$ it is sufficient to check $%
\left\Vert \bar{r}_{o}\right\Vert _{2}$ vs. $\left\Vert r_{0}\right\Vert
_{2}/\left\Vert \hat{N}_{d}\right\Vert _{\infty },$ i.e. on the condition of
the same threshold equal to $\delta _{d}.$ It can be seen that%
\begin{equation*}
\gamma \left\Vert r_{0}\right\Vert _{2}=\gamma \left\Vert \hat{N}%
_{d}d\right\Vert _{2}\leq \left\Vert d\right\Vert _{2}=\left\Vert \bar{r}%
_{o}\right\Vert _{2},\gamma =\frac{1}{\left\Vert \hat{N}_{d}\right\Vert
_{\infty }}.
\end{equation*}%
This implies that $\bar{r}_{o}$ delivers higher fault detectability than $%
\gamma r_{0}.$

\bigskip

We would like to call the reader's attention that the above fault detection
system with the residual generator (\ref{eq3-16a})-(\ref{eq3-16b}) and
threshold (\ref{eq3-17}) is the so-called unified solution \cite%
{DingIAS00,Ding2008}, which can be iteratively approached using linear
matrix inequality solutions \cite{HouUKACC96,WYL2007}. The above result
gives a geometric solution and interpretation of optimal detection for
systems with additive unknown inputs and faults in terms of projection-based
detection methods.

\bigskip

Our final remark is dedicated to the following MMP, 
\begin{equation}
\inf_{Q\in \mathcal{H}_{\infty }}\left\Vert \left[ 
\begin{array}{cc}
-\hat{N}_{1} & \hat{M}_{1}%
\end{array}%
\right] -Q\left[ 
\begin{array}{cc}
-\hat{N}_{2} & \hat{M}_{2}%
\end{array}%
\right] \right\Vert _{\infty },  \label{eq7-1}
\end{equation}%
where $\left[ 
\begin{array}{cc}
-\hat{N}_{i} & \hat{M}_{i}%
\end{array}%
\right] $ is the normalised SKR of system $G_{i}=\hat{M}_{i}^{-1}\hat{N}%
_{i},i=1,2,$ whose solution is defined as T-gap \cite{Georgiou&Smith90} and
adopted in \cite{LD-Automatica-2020} under the concept of $\mathcal{K}$-gap
for fault detection study. For our purpose, define 
\begin{equation}
\mathcal{K}_{G_{i}}\left( \mathcal{H}_{2}^{\bot }\right) =\left\{ \left[ 
\begin{array}{c}
u \\ 
y%
\end{array}%
\right] \in \mathcal{H}_{2}^{\bot }:\left[ 
\begin{array}{cc}
-\hat{N}_{i} & \hat{M}_{i}%
\end{array}%
\right] \left[ 
\begin{array}{c}
u \\ 
y%
\end{array}%
\right] =0\hspace{-2pt}\right\} \in \mathcal{H}_{2}^{\bot },i=1,2,
\label{eq7-2}
\end{equation}%
as a dual subspace to the $\mathcal{H}_{2}$ image subspace $\mathcal{I}%
_{G_{i}}$. It turns out%
\begin{gather*}
\mathcal{P}_{\mathcal{K}_{G_{i}}\left( \mathcal{H}_{2}^{\bot }\right) }:%
\mathcal{H}_{2}^{\bot }\rightarrow \mathcal{H}_{2}^{\bot },\mathcal{P}_{%
\mathcal{K}_{G_{i}}\left( \mathcal{H}_{2}^{\bot }\right) }=\mathcal{I}-%
\mathcal{L}_{K_{G_{i}}^{\sim }}\mathcal{P}_{\mathcal{H}_{2}^{\bot }}\mathcal{%
L}_{K_{G_{i}}},i=1,2, \\
\vec{\delta}\left( \mathcal{K}_{G_{1}}\left( \mathcal{H}_{2}^{\bot }\right) ,%
\mathcal{K}_{G_{2}}\left( \mathcal{H}_{2}^{\bot }\right) \right) =\left\Vert
\left( \mathcal{I}-\mathcal{P}_{\mathcal{K}_{G_{1}}\left( \mathcal{H}%
_{2}^{\bot }\right) }\right) \mathcal{P}_{\mathcal{K}_{G_{2}}\left( \mathcal{%
H}_{2}^{\bot }\right) }\right\Vert .
\end{gather*}%
Note that%
\begin{gather*}
\mathcal{I}-\mathcal{P}_{\mathcal{K}_{G_{1}}\left( \mathcal{H}_{2}^{\bot
}\right) }=\mathcal{L}_{K_{G_{1}}^{\sim }}\mathcal{P}_{\mathcal{H}_{2}^{\bot
}}\mathcal{L}_{K_{G_{1}}}:\mathcal{H}_{2}^{\bot }\rightarrow \mathcal{H}%
_{2}^{\bot }, \\
\mathcal{P}_{\mathcal{K}_{G_{2}}\left( \mathcal{H}_{2}^{\bot }\right) }=%
\mathcal{I}-\mathcal{L}_{K_{G_{2}}^{\sim }}\mathcal{P}_{\mathcal{H}%
_{2}^{\bot }}\mathcal{L}_{K_{G_{2}}}=\mathcal{L}_{I_{G_{2}}}\mathcal{L}%
_{I_{G_{2}}^{\sim }}+\mathcal{L}_{K_{G_{2}}^{\sim }}\mathcal{P}_{\mathcal{H}%
_{2}}\mathcal{L}_{K_{G_{2}}}:\mathcal{H}_{2}^{\bot }\rightarrow \mathcal{H}%
_{2}^{\bot }, \\
\Longrightarrow \left( \mathcal{I}-\mathcal{P}_{\mathcal{K}_{G_{1}}\left( 
\mathcal{H}_{2}^{\bot }\right) }\right) \mathcal{P}_{\mathcal{K}%
_{G_{2}}\left( \mathcal{H}_{2}^{\bot }\right) }=\mathcal{L}_{K_{G_{1}}^{\sim
}}\left[ 
\begin{array}{cc}
\mathcal{P}_{\mathcal{H}_{2}^{\bot }}\mathcal{L}_{K_{G_{1}}}\mathcal{L}%
_{I_{G_{2}}}\mathcal{P}_{\mathcal{H}_{2}^{\bot }} & \mathcal{P}_{\mathcal{H}%
_{2}^{\bot }}\mathcal{L}_{K_{G_{1}}}\mathcal{L}_{K_{G_{2}}^{\sim }}\mathcal{P%
}_{\mathcal{H}_{2}}%
\end{array}%
\right] \left[ 
\begin{array}{c}
\mathcal{L}_{I_{G_{2}}^{\sim }} \\ 
\mathcal{L}_{K_{G_{2}}}%
\end{array}%
\right] ,
\end{gather*}%
which yields%
\begin{equation}
\vec{\delta}\left( \mathcal{K}_{G_{1}}\left( \mathcal{H}_{2}^{\bot }\right) ,%
\mathcal{K}_{G_{2}}\left( \mathcal{H}_{2}^{\bot }\right) \right) =\left\Vert %
\left[ 
\begin{array}{cc}
\mathcal{P}_{\mathcal{H}_{2}^{\bot }}\mathcal{L}_{K_{G_{1}}}\mathcal{L}%
_{I_{G_{2}}}\mathcal{P}_{\mathcal{H}_{2}^{\bot }} & \mathcal{P}_{\mathcal{H}%
_{2}^{\bot }}\mathcal{L}_{K_{G_{1}}}\mathcal{L}_{K_{G_{2}}^{\sim }}\mathcal{P%
}_{\mathcal{H}_{2}}%
\end{array}%
\right] \right\Vert .  \label{eq7-3}
\end{equation}%
In \cite{Georgiou&Smith90}, it has been proved that the operator norm on the
right-hand side of (\ref{eq7-3}) is equal to the solution of MMP (\ref{eq7-1}%
), which implies%
\begin{equation*}
\vec{\delta}\left( \mathcal{K}_{G_{1}}\left( \mathcal{H}_{2}^{\bot }\right) ,%
\mathcal{K}_{G_{2}}\left( \mathcal{H}_{2}^{\bot }\right) \right) =\inf_{Q\in 
\mathcal{H}_{\infty }}\left\Vert \left[ 
\begin{array}{cc}
-\hat{N}_{1} & \hat{M}_{1}%
\end{array}%
\right] -Q\left[ 
\begin{array}{cc}
-\hat{N}_{2} & \hat{M}_{2}%
\end{array}%
\right] \right\Vert _{\infty }.
\end{equation*}%
This result corrects the claim in \cite{LD-Automatica-2020} that MMP (\ref%
{eq7-1}) is the gap from $\mathcal{K}_{G_{1}}$ to $\mathcal{K}_{G_{2}}$.

\bigskip

\bigskip

\textbf{Appendix A: Proof of Lemma 1} \bigskip

By means of the plant model (\ref{eq4-5}), control law (\ref{eq4-2a})-(\ref%
{eq4-2b}) and the Bezout identity (\ref{eq2-5a}), we have%
\begin{gather*}
\left[ 
\begin{array}{cc}
I & -K \\ 
-G & I%
\end{array}%
\right] ^{-1}\left[ 
\begin{array}{c}
I \\ 
0%
\end{array}%
\right] =\left[ 
\begin{array}{cc}
M & 0 \\ 
0 & V_{0}%
\end{array}%
\right] \left[ 
\begin{array}{cc}
M & -U_{0} \\ 
-N & V_{0}%
\end{array}%
\right] ^{-1}\left[ 
\begin{array}{c}
I \\ 
0%
\end{array}%
\right] \\
=\left[ 
\begin{array}{cc}
M & 0 \\ 
0 & -V_{0}%
\end{array}%
\right] \left( \left[ 
\begin{array}{cc}
M_{0} & U_{0} \\ 
N_{0} & V_{0}%
\end{array}%
\right] +\left[ 
\begin{array}{c}
\Delta _{M} \\ 
\Delta _{N}%
\end{array}%
\right] \left[ 
\begin{array}{cc}
I & \text{ }0%
\end{array}%
\right] \right) ^{-1}\left[ 
\begin{array}{c}
I \\ 
0%
\end{array}%
\right] \\
=\left[ 
\begin{array}{cc}
M & 0 \\ 
0 & -V_{0}%
\end{array}%
\right] \left[ 
\begin{array}{cc}
I+\Delta _{1} & 0 \\ 
\Delta _{2} & I%
\end{array}%
\right] ^{-1}\left[ 
\begin{array}{c}
\hat{V}_{0} \\ 
-\hat{N}_{0}%
\end{array}%
\right] \\
=\left[ 
\begin{array}{cc}
M & 0 \\ 
0 & -V_{0}%
\end{array}%
\right] \left[ 
\begin{array}{cc}
\left( I+\Delta _{1}\right) ^{-1} & 0 \\ 
-\Delta _{2}\left( I+\Delta _{1}\right) ^{-1} & I%
\end{array}%
\right] \left[ 
\begin{array}{c}
\hat{V}_{0} \\ 
-\hat{N}_{0}%
\end{array}%
\right] , \\
\Delta _{1}=\left[ 
\begin{array}{cc}
\hat{V}_{0} & -\hat{U}_{0}%
\end{array}%
\right] \left[ 
\begin{array}{c}
\Delta _{M} \\ 
\Delta _{N}%
\end{array}%
\right] ,\Delta _{2}=\left[ 
\begin{array}{cc}
-\hat{N}_{0} & \text{ }\hat{M}_{0}%
\end{array}%
\right] \left[ 
\begin{array}{c}
\Delta _{M} \\ 
\Delta _{N}%
\end{array}%
\right] .
\end{gather*}%
Since%
\begin{equation*}
V_{0}\hat{N}_{0}=N_{0}\hat{V}_{0},V_{0}\hat{M}_{0}=I+N_{0}\hat{U}_{0}
\end{equation*}%
leads to 
\begin{gather*}
V_{0}\Delta _{2}\left( I+\Delta _{1}\right) ^{-1}\hat{V}_{0}+V_{0}\hat{N}_{0}
\\
=N_{0}\left( I-\Delta _{1}\left( I+\Delta _{1}\right) ^{-1}\right) \hat{V}%
_{0}+\Delta _{N}\left( I+\Delta _{1}\right) ^{-1}\hat{V}_{0}=N\left(
I+\Delta _{1}\right) ^{-1}\hat{V}_{0},
\end{gather*}%
it turns out%
\begin{gather*}
\left[ 
\begin{array}{cc}
M & 0 \\ 
0 & -V_{0}%
\end{array}%
\right] \left[ 
\begin{array}{cc}
\left( I+\Delta _{1}\right) ^{-1} & 0 \\ 
-\Delta _{2}\left( I+\Delta _{1}\right) ^{-1} & I%
\end{array}%
\right] \left[ 
\begin{array}{c}
\hat{V}_{0} \\ 
-\hat{N}_{0}%
\end{array}%
\right] =\left[ 
\begin{array}{c}
M \\ 
N%
\end{array}%
\right] \left( I+\Delta _{1}\right) ^{-1}\hat{V}_{0} \\
\Longrightarrow \left[ 
\begin{array}{c}
u \\ 
y%
\end{array}%
\right] =\left[ 
\begin{array}{cc}
I & -K \\ 
-G & I%
\end{array}%
\right] ^{-1}\left[ 
\begin{array}{c}
I \\ 
0%
\end{array}%
\right] v=\left[ 
\begin{array}{c}
M \\ 
N%
\end{array}%
\right] \left( I+\Delta _{1}\right) ^{-1}\hat{v}.
\end{gather*}%
Thus, the lemma is proved. %

\end{document}